\keywords{term rewriting, confluence, decreasing diagrams}
\tikzset{D/.style={>=stealth'}}
\tikzset{Label/.style={scale=0.8}}
\tikzset{-||->/.style={postaction={decorate},decoration={%
markings,%
mark=at position 0.5 with {\node[sloped,rotate=90,transform shape] {=};}}}}
\newcommand{\pref}[1]{\ref{#1}}
\newcommand{\cref}[1]{\autoref{#1}}
\newcommand{\secref}[1]{Section~\ref{#1}}
\newcommand{\m}[1]{\mathsf{#1}}
\newcommand{\seq}[2][n]{{#2_1},\dots,{#2_{#1}}}
\renewcommand{\AA}{\mathcal{A}}
\newcommand{\BB}{\mathcal{B}}
\newcommand{\CC}{\mathcal{C}}
\newcommand{\DD}{\mathcal{D}}
\newcommand{\FF}{\mathcal{F}}
\newcommand{\NN}{\mathbb{N}}
\newcommand{\PP}{\mathcal{P}}
\newcommand{\RR}{\mathcal{R}}
\newcommand{\CCd}{\mathcal{C}_\mathsf{d}}
\renewcommand{\SS}{\mathcal{S}}
\newcommand{\SSS}{\mathsf{S}}
\newcommand{\TT}{\mathcal{T}}
\newcommand{\VV}{\mathcal{V}}
\newcommand{\Dom}{\mathcal{D}\mathsf{om}}
\newcommand{\CPS}{\mathsf{CPS}}
\newcommand{\PCPS}{\mathsf{PCPS}}
\newcommand{\NaTT}{\textsf{NaTT}\xspace}
\newcommand{\ACP}{\textsf{ACP}\xspace}
\newcommand{\CoLLSaigawa}{\textsf{CoLL-Saigawa}\xspace}
\newcommand{\CoLL}{\textsf{CoLL}\xspace}
\newcommand{\CSI}{\textsf{CSI}\xspace}
\newcommand{\OURTOOL}{\textsf{Hakusan}\xspace}
\newcommand{\cp}[1]{%
\mathrel{{_{#1}{\xleftarrow{}}{\rtimes}{\xrightarrow{\epsilon}_{#1}}}}}
\newcommand{\cpT}[2]{%
\mathrel{{_{#1}{\xleftarrow{#2}}{\rtimes}{\xrightarrow{\epsilon}_{#1}}}}}
\newcommand{\Fun}{\mathcal{F}\mathsf{un}}
\newcommand{\Pos}{\mathcal{P}\mathsf{os}}
\newcommand{\Var}{\mathcal{V}\mathsf{ar}}
\newcommand{\pcp}[1]{\mathrel{{\pfromB{}{#1}}{\rtimes}{\xrightarrow{\epsilon}_{#1}}}}
\newcommand{\from}{\mathrel{\leftarrow}}
\newcommand{\fromto}{\mathrel{\xleftrightarrow{}}}
\newcommand{\fromB}[1]{\mathrel{_{#1}{\from}}}
\newcommand{\fromT}[1]{\mathrel{^#1{\from}}}
\newcommand{\fromBT}[2]{\mathrel{\prescript{#2}{#1}{\from}}}
\newcommand{\fromBss}[2]{\mathrel{_{#1}\xleftarrow{\smash{#2}}}}
\newcommand{\To}{\mathrel{\Rightarrow}}
\newcommand{\From}{\mathrel{\Leftarrow}}
\newcommand{\FromB}[1]{\mathrel{_{#1}{\From}}}
\newcommand{\Fromto}{\xLeftrightarrow{}}
\newcommand{\rto}{\xrightarrow{\smash{\epsilon}}}
\newcommand*{\narrowfill@}[5]{%
 $\m@th\thickmuskip0mu\medmuskip\thickmuskip\thinmuskip\thickmuskip
 \relax#5#1\mkern-7mu%
 \cleaders\hbox{$#5\mkern-2mu#2\mkern-2mu$}\hfill
 \mkern-5mu%
 #4%
 \mkern-5mu%
 \cleaders\hbox{$#5\mkern-2mu#2\mkern-2mu$}\hfill
 \mkern-7mu#3$%
}
\newcommand*{\ptofill@}{%
 \narrowfill@\relbar\relbar\rightarrow%
 {\relbar\mkern-10mu\mapstochar\mkern7mu\mapsfromchar\mkern3mu}
}
\newcommand*{\pto}[2][]{%
 \ext@arrow 0359\ptofill@{#1}{#2}%
}
\newcommand*{\pfromfill@}{%
  \narrowfill@\leftarrow\relbar\relbar
  {\relbar\mkern-11mu\mapstochar\mkern7mu\mapsfromchar\mkern3mu}
}
\newcommand*{\pfrom}[2][]{%
 \ext@arrow 3359\pfromfill@{#1}{#2}%
}
\newcommand*{\pfromtofill@}{%
  \narrowfill@\leftarrow\relbar\rightarrow
  {\relbar\mkern-11mu\mapstochar\mkern7mu\mapsfromchar\mkern3mu}
}
\newcommand*{\pfromto}[2][]{%
 \ext@arrow 3359\pfromtofill@{#1}{#2}%
}
\newcommand*{\mtofill@}{%
 \narrowfill@\relbar\relbar\rightarrow%
 {\relbar\mkern-10mu\circ\mkern-14mu\relbar\mkern3mu}
}
\newcommand*{\mto}[2][]{%
 \ext@arrow 0359\mtofill@{#1}{#2}%
}
\newcommand*{\mfromfill@}{%
  \narrowfill@\leftarrow\relbar\relbar
  {\relbar\mkern-11mu\circ\mkern-11mu\relbar\mkern3mu}
}
\newcommand*{\mfrom}[2][]{%
 \ext@arrow 3359\mfromfill@{#1}{#2}%
}
\newcommand{\pfromB}[2]{\mathrel{_{#2}{\pfrom{\smash{#1}}}}}
\newcommand{\criterion}[1]{\textbf{\sffamily#1}}
\theoremstyle{thmC}
\newtheorem{corC}[thm]{Corollary}
\begin{document}

\title{Compositional Confluence Criteria}
\thanks{The research described in this paper is supported
by JSPS KAKENHI Grant Numbers JP22K11900.}

\author[K.~Shintani]{Kiraku Shintani\lmcsorcid{https://orcid.org/0000-0002-2986-4326}}
\author[N.~Hirokawa]{Nao Hirokawa\lmcsorcid{https://orcid.org/0000-0002-8499-0501}}

\address{JAIST, Japan}	
\email{s1820017@jaist.ac.jp, hirokawa@jaist.ac.jp}  





\begin{abstract}
\noindent We show how confluence criteria based on decreasing diagrams are
generalized to ones composable with other criteria.  For demonstration of the
method, the confluence criteria of orthogonality, rule labeling, and critical
pair systems for term rewriting are recast into composable forms.
We also show how such a criterion can be used for a reduction
method that removes rewrite rules unnecessary for confluence analysis. In
addition to them, we prove that Toyama's parallel closedness result based
on parallel critical pairs subsumes his almost parallel closedness theorem.
\end{abstract}

\maketitle

\section{Introduction}

Confluence is a property of rewriting that ensures uniqueness of
computation results.  In the last decades, 
various proof methods for confluence of term rewrite systems
have been developed. They are roughly classified to three groups:
(direct) confluence criteria based on critical pair
analysis~\cite{KB70,H80,T81,T88,G96,vO97,O98,vO08,ZFM15},
decomposition methods based on modularity and
commutation~\cite{T87,AYT09,SH15},
and transformation methods based on 
simulation of rewriting~\cite{AT12,K95,NFM15,SH15}.

In this paper we present a confluence analysis based on \emph{compositional}
confluence criteria.  Here a compositional criterion means a sufficient
condition that, given a rewrite system $\RR$ and its subsystem
$\CC \subseteq \RR$, confluence of $\CC$ implies that of $\RR$.  Since
such a subsystem can be analyzed by any other (compositional)
confluence criterion, compositional criteria can be seen as a combination
method for confluence analysis.  Because the empty system is confluent, 
by taking the empty subsystem $\CC$ compositional criteria can be
used as ordinary (direct) confluence criteria. 

In order to develop compositional confluence criteria we revisit van Oostrom's
decreasing diagram technique~\cite{vO94,vO08}, which is known as a powerful
confluence criterion for abstract rewrite systems.  Most existing confluence
criteria for left-linear rewrite systems, including the ones listed above, can be
proved by decreasingness of parallel steps or multi-steps.  Recasting the
decreasing diagram technique as a compositional criterion, we demonstrate
how confluence criteria based on decreasing diagrams can be reformulated as
compositional versions.  We pick up the confluence criteria by
orthogonality~\cite{R73}, rule labeling~\cite{ZFM15}, and critical pair
systems~\cite{HM11}.

As mentioned above, compositional confluence criteria guarantee that
confluence of a subsystem implies confluence of the original rewrite
system.  If the converse also holds, confluence of $\RR$ is equivalent to
that of $\CC$.  In other words, we may reduce the confluence problem of
$\RR$ to that of the subsystem $\CC$, without assuming confluence of the
latter. Such a \emph{reduction method} is useful when analyzing confluence
automatically. We present a simple method inspired by redundant rule
elimination techniques~\cite{SH15,NFM15}.

In addition to them, we elucidate the hierarchy of Toyama's two parallel
closedness theorems~\cite{T81,T88} and rule labeling based on parallel
critical pairs~\cite{ZFM15}.  As a consequence, it turns out that rule
labeling and its compositional version are generalizations of Huet's and
Toyama's (almost) parallel closedness theorems.

The remaining part of the paper is organized as follows:
In \secref{sec:preliminaries} we recall notions from rewriting.
In \secref{sec:pc} we show that Toyama's almost parallel closedness is
subsumed by his earlier result based on parallel critical pairs.  In
\secref{sec:ddc}, we introduce an abstract criterion for 
our approach, and in the subsequent three sections we 
derive compositional criteria from the confluence criteria of
orthogonality (\secref{sec:orthogonality}),
rule labeling (\secref{sec:prlc}), and
the criterion by critical pair systems (\secref{sec:cpsc}).
In \secref{sec:reduction} we present a non-confluence criterion that
strengthens compositional confluence criteria to a reduction method.
\secref{sec:experiments} reports experimental results.
Discussing related work and potential future work in
\secref{sec:related-work}, we conclude the paper.

A preliminary version of this paper appeared in the proceedings of the
7th International Conference on Formal Structures for Computation and
Deduction~\cite{SH22}.  Compared with it, the reduction method
presented in \secref{sec:reduction} is a new result and the
experimental evaluation has been extended.
Moreover, the present paper includes a complete proof for a key lemma
(\cref{lem:PML}(b)) for confluence analysis based on parallel critical
pairs.  The lemma itself is known~\cite{G96,ZFM15} but its proof 
is not presented in the literature.

\section{Preliminaries}
\label{sec:preliminaries}

Throughout the paper, we assume familiarity with abstract rewriting and
term rewriting~\cite{BN98,TeReSe}.  
We just recall some basic notions and notations for rewriting
and confluence.

An ($I$-indexed) \emph{abstract rewrite system} (ARS) $\AA$ is a pair 
$(A, \{\to_\alpha\}_{\alpha \in I})$ consisting of a set $A$ and a family
of relations $\to_\alpha$ on $A$ for all $\alpha \in I$.  Given a subset
$J$ of $I$, we write $x \to_J y$ if $x \to_\alpha y$ for some index 
$\alpha \in J$.  The relation $\to_I$ is referred to as $\to_\AA$.
An ARS $\AA$ is called \emph{confluent} or \emph{locally confluent} if
\(
{\fromBT{\AA}{*} \cdot \to_\AA^*} \subseteq
{\to_\AA^* \cdot \fromBT{\AA}{*}}
\)
or 
\(
{\fromB{\AA}{} \cdot \to_\AA} \subseteq
{\to_\AA^* \cdot \fromBT{\AA}{*}}
\)
holds, respectively.  We say that ARSs $\AA$ and $\BB$ \emph{commute} if
\(
{\fromBT{\AA}{*} \cdot \to_\BB^*} \subseteq
{\to_\BB^* \cdot \fromBT{\AA}{*}}
\)
holds.  A conversion of form $b \fromB{\AA}{} a \to_{\BB} c$ is called
a \emph{local peak} (or simply a \emph{peak}) between $\AA$ and $\BB$.
A relation $\to$ is \emph{terminating} if 
there exists no infinite sequence $a_0 \to a_1 \to \cdots$.
We say that an ARS $\AA$ is \emph{terminating} if $\to_\AA$ is terminating.
We define $\to_{\AA/\BB}$ as $\to_\BB^* \cdot \to_\AA^{} \cdot \to_\BB^*$.  We
say that $\AA$ is \emph{relatively terminating} with respect to $\BB$, or
simply $\AA/\BB$ is \emph{terminating}, if $\to_{\AA/\BB}$ is terminating.

Positions are sequences of positive integers.  The empty sequence
$\epsilon$ is called the \emph{root} position.  We write $p \cdot q$ or
simply $pq$ for the concatenation of positions $p$ and $q$.  The prefix
order $\leqslant$ on positions is defined as $p \leqslant q$ if 
$p \cdot p' = q$ for some $p'$.  We say that positions $p$ and $q$ are
\emph{parallel} if $p \nleqslant q$ and $q \nleqslant p$.
A set of positions is called \emph{parallel} if all its elements
are so.

Terms are built from a signature $\FF$ and a countable set $\VV$ of
variables satisfying $\FF \cap \VV = \varnothing$.  
The set of all terms (over $\FF$) is denoted by $\TT(\FF,\VV)$. 
Let $t$ be a term.
The set of all variables in $t$ is denoted by $\Var(t)$, and the set of
all function symbols in a term $t$ by $\Fun(t)$.
The set of all function positions and the set of variable positions in $t$
are denoted by $\Pos_\FF(t)$ and $\Pos_\VV(t)$, respectively.
The \emph{subterm} of $t$ at position $p$ is denoted by $t|_p$.  It is a
\emph{proper} subterm if $p \neq \epsilon$.  By $t[u]_p$ we denote the term
that results from replacing the subterm of $t$ at $p$ by a term $u$.  The
size $|t|$ of $t$ is the number of occurrences of functions symbols and
variables in $t$.  A term $t$ is said to be \emph{linear} if every variable
in $t$ occurs exactly once.  

A \emph{substitution} is a mapping $\sigma : \VV \to \TT(\FF,\VV)$
whose \emph{domain} $\Dom(\sigma)$ is finite.  Here $\Dom(\sigma)$ stands
for the set $\{ x \in \VV \mid \sigma(x) \neq x\}$.  The term $t\sigma$ is
defined as $\sigma(t)$ for $t \in \VV$, and $f(t_1\sigma,\dots,t_n\sigma)$
for $t = f(\seq{t})$. 
A term $u$ is called an \emph{instance} of $t$ if $u = t\sigma$
for some $\sigma$.
A substitution is called a \emph{renaming} if it is a bijection on variables.
The \emph{composition} $\sigma\tau$ of two substitutions $\sigma$ and
$\tau$ is defined by $(\sigma\tau)(x) = (x\sigma)\tau$.  An equation is a
pair $(s, t)$ of terms, written as $s \approx t$.  Let $E$ be a set of
equations.  A substitution $\sigma$ is said to be a \emph{unifier} of a set
$E$ of equations if $s\sigma = t\sigma$ holds for all $s \approx t \in E$.
A unifier $\sigma$ of $E$ is \emph{most general} if for every unifier
$\tau$ of $E$ there exists a substitution $\sigma'$ such that 
$\tau = \sigma\sigma'$.
A unifier of $\{ s \approx t \}$ is
said to be
a unifier of $s$ and $t$.

A \emph{term rewrite system} (TRS) over $\FF$ is a set of rewrite rules.
Here a pair $(\ell, r)$ of terms over $\FF$ is a \emph{rewrite rule}
or simply a \emph{rule} if $\ell \notin \VV$ and $\Var(r) \subseteq \Var(\ell)$.
We denote it by $\ell \to r$. The rewrite relation $\to_\RR$ of a TRS $\RR$
is defined on terms as follows: $s \to_\RR t$ if $s|_p = \ell\sigma$ and $t
= s[r\sigma]_p$ for some rule $\ell \to r \in \RR$, position $p$, and
substitution $\sigma$.  We write $s \xrightarrow{\smash{p}}_\RR t$
if the rewrite position $p$ is relevant.
We call subsets of $\RR$ \emph{subsystems}.
We write $\Fun(\ell \to r)$ for $\Fun(\ell) \cup \Fun(r)$ and $\Fun(\RR)$
for the union of $\Fun(\ell \to r)$ for all rules $\ell \to r \in \RR$.
The set $\{ f \mid f(\seq{\ell}) \to r \in \RR \}$ 
is the set of \emph{defined symbols} and denoted by $\DD_\RR$.
A TRS $\RR$ is \emph{left-linear} if $\ell$ is linear for all
$\ell \to r \in \RR$.  Since any TRS $\RR$ can be regarded as the ARS
$(\TT(\FF,\VV), \{ \to_\RR \})$, we use notions and notations of ARSs for
TRSs.  For instance, a TRS $\RR$ is (locally) confluent if the ARS
$(\TT(\FF,\VV), \{ \to_\RR \})$ is so.  Similarly, two TRSs commute if
their corresponding ARSs commute.

Local confluence of TRSs is characterized by the notion of critical pair.
We say that a rule $\ell_1 \to r_1$ is a \emph{variant} of
a rule $\ell_2 \to r_2$ if $\ell_1\rho = \ell_2$ and $r_1\rho = r_2$ for
some renaming $\rho$.

\begin{defi}
Let $\RR$ and $\SS$ be TRSs.  Suppose that the following conditions hold:
\begin{itemize}
\item
$\ell_1 \to r_1$ and $\ell_2 \to r_2$ are variants of rules in $\RR$ and
in $\SS$, respectively,
\item
$\ell_1 \to r_1$ and $\ell_2 \to r_2$ have no common variables,
\item
$p \in \Pos_\FF(\ell_2)$,
\item
$\sigma$ is a most general unifier of $\ell_1$ and $\ell_2|_p$, and
\item
if 
$p = \epsilon$ then $\ell_1 \to r_1$ is not a variant of 
$\ell_2 \to r_2$.
\end{itemize}
The local peak
\(
(\ell_2\sigma)[r_1\sigma]_p \fromBss{\RR}{p} \ell_2\sigma \rto_\SS r_2\sigma
\)
is called a \emph{critical peak} between $\RR$ and $\SS$.  When 
$t \fromBss{\RR}{p} s \rto_\SS u$ is a critical peak, the pair $(t, u)$
is 
called a \emph{critical pair}. To clarify the orientation of
the pair, we denote it as the binary relation
$t \mathrel{{\fromBss{\RR}{p}}{\rtimes}{\rto_\SS}} u$, see~\cite{D05}.
Moreover, we write
$t \mathrel{{\fromB{\RR}}{\rtimes}{\rto_\SS}} u$ if
$t \mathrel{{\fromBss{\RR}{p}}{\rtimes}{\rto_\SS}} u$
for some position $p$.
\end{defi}
\begin{thmC}[\cite{H80}]
A TRS $\RR$ is locally confluent if and only if
${\cp{\RR}} \subseteq {\to^*_\RR \cdot \fromBT{\RR}{*}}$
holds.
\end{thmC}

Combining it with Newman's Lemma~\cite{N42}, we obtain 
Knuth and Bendix' criterion~\cite{KB70}.

\begin{thmC}[\cite{KB70}]
\label{thm:KB70}
A terminating TRS $\RR$ is confluent if and only if
the inclusion
${\cp{\RR}} \subseteq {\to^*_\RR \cdot \fromBT{\RR}{*}}$
holds.
\end{thmC}

We define the parallel step relation, which plays a key role in
analysis of local peaks.

\begin{defi}
Let $\RR$ be a TRS and let $P$ be a set of parallel positions.  The
\emph{parallel step} $\pto{P}_\RR$ is inductively defined on terms as
follows:
\begin{itemize}
\item
$x \pto{P}_\RR x$ if $x$ is a variable and $P = \varnothing$.
\item
$\ell\sigma \pto{P}_\RR r\sigma$ 
if $\ell \to r$ is an $\RR$-rule, $\sigma$ is a substitution, and
$P = \{ \epsilon \}$.
\item
$f(s_1,\dots,s_n) \pto{P}_\RR f(t_1,\dots,t_n)$ if
$f$ is an $n$-ary function symbol in $\FF$,
$s_i \pto{P_i}_\RR t_i$ holds for all $1 \leqslant i \leqslant n$, and
\(
P = \{ i \cdot p \mid
\text{$1 \leqslant i \leqslant n$ and $p \in P_i$} \}
\).
\end{itemize}
We write $s \pto{}_\RR t$ if $s \pto{P}_\RR t$ for some 
set $P$ of positions.
\end{defi}

Note that $\pto{}_\RR$ is reflexive and the inclusions
${\to_\RR} \subseteq {\pto{}_\RR} \subseteq {\to_\RR^*}$ hold.  As the
latter entails ${\to_\RR^*} = {\pto{}_\RR^*}$, we obtain the following
useful characterizations.

\begin{lem}
\label{lem:tait-martin-loef}
A TRS $\RR$ is confluent if and only if $\pto{}_\RR$ is confluent.
Similarly, TRSs $\RR$ and $\SS$ commute if and only if $\pto{}_\RR$ and
$\pto{}_\SS$ commute.
\end{lem}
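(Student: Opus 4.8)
The plan is to reduce both statements to the elementary observation that confluence and commutation of abstract rewrite systems are properties of the \emph{reflexive--transitive closures} of the relations involved, and then to invoke the inclusions ${\to_\RR} \subseteq {\pto{}_\RR} \subseteq {\to_\RR^*}$ already recorded above, which give $\pto{}_\RR^* = {\to_\RR^*}$ and, by the same token, $\pto{}_\SS^* = {\to_\SS^*}$. This lemma is, after all, just the abstract skeleton of the Tait--Martin-L\"of parallel-reduction method, so the argument is purely formal.

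For the confluence part, I would unfold the definitions: $\RR$ is confluent iff ${\fromBT{\RR}{*} \cdot \to_\RR^*} \subseteq {\to_\RR^* \cdot \fromBT{\RR}{*}}$, and $\pto{}_\RR$ is confluent iff the converse of $\pto{}_\RR^*$ composed with $\pto{}_\RR^*$ is contained in $\pto{}_\RR^*$ composed with the converse of $\pto{}_\RR^*$. The only data appearing in the latter inclusion is the relation $\pto{}_\RR^*$, and since $\pto{}_\RR^* = {\to_\RR^*}$ this is literally the same inclusion as the one expressing confluence of $\RR$. Hence the two notions coincide.

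For the commutation part, the argument is identical: $\RR$ and $\SS$ commute iff ${\fromBT{\RR}{*} \cdot \to_\SS^*} \subseteq {\to_\SS^* \cdot \fromBT{\RR}{*}}$, a statement whose only data are $\to_\RR^*$ and $\to_\SS^*$. Replacing these by the equal relations $\pto{}_\RR^*$ and $\pto{}_\SS^*$ turns it verbatim into the statement that $\pto{}_\RR$ and $\pto{}_\SS$ commute.

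The main obstacle is essentially nonexistent: the only nontrivial ingredient is the sandwich ${\to_\RR} \subseteq {\pto{}_\RR} \subseteq {\to_\RR^*}$, which follows from the inductive definition of $\pto{}_\RR$ --- reflexivity together with the single-position clause yields the left inclusion after an easy induction on context depth, and splitting a parallel step into its parallel components and performing them one after another yields the right inclusion --- and this has already been stated in the text. Everything else is bookkeeping.
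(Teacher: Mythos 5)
Your proposal is correct and matches the paper's argument: the paper states the lemma as an immediate consequence of the remark that ${\to_\RR} \subseteq {\pto{}_\RR} \subseteq {\to_\RR^*}$ entails ${\to_\RR^*} = {\pto{}_\RR^*}$, which is exactly the reduction you carry out. Since confluence and commutation are properties of the reflexive--transitive closures alone, nothing further is needed.
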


\section{Parallel Closedness}
\label{sec:pc}

Toyama made two variations of Huet's parallel closedness theorem~\cite{H80}
in 1981~\cite{T81} and in 1988~\cite{T88}, but their relation has not been
known.  In this section we recall his and related results, and then
show that Toyama's earlier result subsumes the later one.  
For brevity we omit the subscript $\RR$ from 
$\to_\RR$, $\pto{}_\RR$, and $\cp{\RR}$
when it is clear from the contexts.

\begin{defiC}[\cite{H80}]
A TRS is \emph{parallel closed} if
${\cp{}} \subseteq {\pto{}}$ holds.
\end{defiC}
\begin{thmC}[\cite{H80}]
\label{thm:H80}
A left-linear TRS is confluent if it is parallel closed.
\end{thmC}

In 1988, Toyama showed that the closing form for \emph{overlay} critical
pairs, originating from root overlaps, can be relaxed.  
We write
$t \cpT{}{> \epsilon} u$ if $t \cpT{}{p} u$ holds
for some $p > \epsilon$.

\begin{defiC}[\cite{T88}]
A TRS is \emph{almost parallel closed} if
\(
{\cpT{}{\epsilon}} \subseteq {\pto{} \cdot \fromT{*}}
\)
and
\(
{\cpT{}{> \epsilon}} \subseteq {\pto{}}
\)
hold.
\end{defiC}
\begin{thmC}[\cite{T88}]
\label{thm:T88}
A left-linear TRS is confluent if it is almost parallel closed.
\end{thmC}

\begin{exa}
\label{ex:T88}
Consider the following left-linear and non-terminating TRS, 
which is a variant of the TRS in \cite[Example~5.4]{G96}.
\begin{align*}
\m{a}(x) & \to \m{b}(x)
&
\m{f}(\m{a}(x),\m{a}(y)) & \to \m{g}(\m{f}(\m{a}(x),\m{a}(y)))
\\
\m{f}(\m{b}(x),y) & \to \m{g}(\m{f}(\m{a}(x),y))
&
\m{f}(x,\m{b}(y)) & \to \m{g}(\m{f}(x,\m{a}(y)))
\end{align*}
Out of the three critical pairs,
two critical pairs including the next diagram
(i) are closed by single parallel steps. The remaining pair (ii)
joins by performing a single parallel step on each side:
\begin{center}
\begin{tabular}{c@{\qquad}c}
\begin{tikzpicture}[D,baseline=(t)]
\node (s) at (0,2) {$\m{f}(\m{a}(x),\m{a}(y))$};
\node (t) at (0,0.0) {$\m{f}(\m{b}(x),\m{a}(y))$};
\node (u) at (3.3,2) {$\m{g}(\m{f}(\m{a}(x),\m{a}(y)))$};
\draw[->]
 (s) edge node[left]  {$1$} (t)
 (s) edge node[above] {$\epsilon$} (u)
 (t) edge[dashed,bend right,-||->] (u)
;
\end{tikzpicture}
&
\begin{tikzpicture}[D,baseline=(t)]
\node (s) at (0,2) {$\m{f}(\m{b}(x),\m{b}(y))$};
\node (t) at (0,0)   {$\m{g}(\m{f}(\m{a}(x),\m{b}(y)))$};
\node (u) at (3.3,2) {$\m{g}(\m{f}(\m{b}(x),\m{a}(y)))$};
\node (v) at (3.3,0)   {$\m{g}(\m{f}(\m{b}(x),\m{b}(y)))$};
\draw[->]
 (s) edge node[left] {$\epsilon$} (t)
 (s) edge node[above] {$\epsilon$} (u)
 (t) edge[dashed,-||->] (v)
 (u) edge[dashed,-||->] (v)
;
\end{tikzpicture}
\\
(i) & (ii)
\end{tabular}
\end{center}
Thus, the TRS is almost parallel closed.  Hence, the TRS is confluent.
\end{exa}

Inspired by almost parallel closedness, Gramlich~\cite{G96} developed
a confluence criterion based on \emph{parallel critical pairs} in 1996.
Let $t$ be a term and let $P$ be a set of parallel positions in $t$.
We write $\Var(t,P)$ for the union of $\Var(t|_p)$ for all $p \in P$.
By $t[u_p]_{p \in P}$ we denote the term that results from replacing in
$t$ the subterm at $p$ by a term $u_p$ for all $p \in P$.

\begin{defi}
Let $\RR$ and $\SS$ be TRSs, 
$\ell \to r$ a variant of an $\SS$-rule, and
$\{ \ell_p \to r_p \}_{p \in P}$ a family of variants of
$\RR$-rules, where $P$ is a set of positions.
A local peak
\[
(\ell\sigma)[r_p\sigma]_{p \in P} \pfromB{}{\RR} \ell\sigma \rto_\SS r\sigma
\]
is called a \emph{parallel critical peak} between $\RR$ and $\SS$ if
the following conditions hold:
\begin{itemize}
\item
$P \subseteq \Pos_\FF(\ell)$ is a non-empty set of parallel
positions in $\ell$,
\item
none of rules $\ell \to r$ and $\ell_p \to r_p$ for $p \in P$
shares a variable with other rules,
\item
$\sigma$ is a most general unifier of
$\{ \ell_p \approx (\ell|_p) \}_{p \in P}$, and
\item
if $P = \{\epsilon\}$ then
$\ell_\epsilon \to r_\epsilon$ is not a variant of $\ell \to r$.
\end{itemize}
When $t \pfromB{P}{\RR} s \rto_\SS u$ is a parallel critical peak, the
pair $(t,u)$ is called a \emph{parallel critical pair}, and denoted by
\(
t \mathrel{{\pfromB{P}{\RR}}{\rtimes}{\rto_\SS}} u
\).
In the case of $P \nsubseteq \{ \epsilon \}$
the parallel critical pair is written as
$t \mathrel{{\pfromB{>\epsilon}{\RR}}{\rtimes}{\rto_\SS}} u$.
Whenever no confusion arises, we abbreviate
${\pfromB{}{\RR}}{\rtimes}{\rto_\RR}$ to
${\pfrom{}}{\rtimes}{\rto}$.
\end{defi}

Consider a local peak $t \pfromB{P}{\RR} s \rto_\SS u$ that
employs a rule $\ell_p \to r_p$ at $p \in P$ in the left step and a rule
$\ell \to r$ in the right step.  We say that the peak is \emph{orthogonal}
if either $P \cap \Pos_\FF(\ell) = \varnothing$, or $P = \{\epsilon\}$ and
$\ell_\epsilon \to r_\epsilon$ is a variant of $\ell \to r$.\footnote{%
As the name suggests, every local peak $\pfromB{P}{\RR} \cdot \rto_\RR$ is
orthogonal for \emph{orthogonal TRSs}, see \secref{sec:orthogonality}.}
A local peak $t \mathrel{_\RR{\xleftarrow{\smash{p}}}} s \rto_\SS u$ is
\emph{orthogonal} if $t \pfromB{\{p\}}{\RR} s \rto_\SS u$ is.

\begin{thmC}[\cite{G96}]
\label{thm:G96}
A left-linear TRS
is confluent if the inclusions
${\cp{}} \subseteq {\pto{} \cdot \fromT{*}}$ and
\(
{\pfrom{>\epsilon}}%
{\rtimes}%
{\rto} 
\subseteq {\to^*}
\)
hold.
\end{thmC}

Unfortunately, this criterion by Gramlich does not subsume (almost)
parallel closedness.

\begin{exa}[Continued from Example~\ref{ex:T88}]
\label{ex:G96}
The TRS admits the parallel critical peak\linebreak
\\[-10pt]  
\(
\m{f}(\m{b}(x),\m{b}(y)) \pfrom{\smash{\{1,2\}}} \m{f}(\m{a}(x),\m{a}(y))
\xrightarrow{\smash{\epsilon}} \m{g}(\m{f}(\m{a}(x),\m{a}(y)))
\).
However,
\(
\m{f}(\m{b}(x),\m{b}(y)) \to^* \m{g}(\m{f}(\m{a}(x),\m{a}(y)))
\)
does not hold.
\end{exa}

As noted in the paper \cite{G96}, Toyama~\cite{T81} had already obtained
in 1981 a closedness result that subsumes Theorem~\ref{thm:G96}.
His idea is to impose variable conditions on parallel steps $\pto{}$.

\begin{thmC}[\cite{T81}]
\label{thm:T81}
A left-linear TRS is confluent if the following conditions hold:
\begin{enumerate}[(a)]
\item
\label{strongly_parallel_closed_1}
The inclusion ${\cp{}} \subseteq {\pto{}^{} \cdot \fromT{*}}$ holds.
\item
\label{strongly_parallel_closed_2}
For every parallel critical peak $t \pfrom{\smash{P}} s \rto u$ there
exist a term $v$ and a set $P'$ of parallel positions such that
$t \to^* v \pfrom{\smash{P'}} u$ and
$\Var(v,P') \subseteq \Var(s,P)$.
\end{enumerate}
\end{thmC}

\begin{exa}[Continued from Example~\ref{ex:G96}]
The confluence of the TRS in \cref{ex:T88} can be shown by \cref{thm:T81}.
Since condition \pref{strongly_parallel_closed_1} of \cref{thm:T81} follows
from the almost parallel closedness, it is enough to verify condition
\pref{strongly_parallel_closed_2}.  
The following parallel critical peak, which \cref{thm:G96} fails to handle,
admits the following diagram:
\begin{center}
\begin{tikzpicture}[D, baseline=(s)]
\node (s) at (0.0,2.0) {$\m{f}(\m{a}(x),\m{a}(y))$};
\node (t) at (0.0,0.0) {$\m{f}(\m{b}(x),\m{b}(y))$};
\node (u) at (4.0,2.0) {$\m{g}(\m{f}(\m{a}(x),\m{a}(y)))$};
\node (v) at (4.0,0.0) {$\m{g}(\m{f}(\m{a}(x),\m{b}(y)))$};
\draw[->]
 (s) edge[-||->] node[left=1mm] {$\{1,2\}$} (t)
 (s) edge node[above] {$\epsilon$} (u)
;
\draw[dashed,->]
 (t) edge node[below] {\phantom{$\RR$}} (v)
 (u) edge[-||->] node[right=1mm] {$\{1 \cdot 2\}$} (v)
;
\end{tikzpicture}
\end{center}
Because 
\(
\Var(\m{g}(\m{f}(\m{a}(x),\m{b}(y))), \{1 \cdot 2\})
= \{ y \} \subseteq \{x,y\} =
\Var(\m{f}(\m{a}(x),\m{a}(y)),\{1,2\})
\)
holds, the parallel critical peak satisfies
condition~\pref{strongly_parallel_closed_2} in
\cref{thm:T81}.
Similarly, we can find
suitable diagrams for the other parallel critical peaks.  Hence,
\pref{strongly_parallel_closed_2} holds for the TRS.
\end{exa}

Now we show that Theorem~\ref{thm:T81} even subsumes Theorem~\ref{thm:T88}.
The first part of the next lemma is a strengthened version of the
Parallel Moves Lemma~\cite[Lemma~6.4.4]{BN98}. Here a variable
condition like \cref{thm:T81} is associated.  The second part of the lemma
is irrelevant here but will be used in the subsequent sections.  Note that
the second part corresponds to \cite[Lemma~55]{ZFM15}.
We write $\sigma \pto{}_\RR \tau$ if $x\sigma \pto{}_\RR x\tau$ for all
variables $x$.

\begin{lem}\label{lem:PML}
Let $\RR$ be a TRS and $\ell \to r$ a left-linear rule.
Consider a local peak $\Gamma$ of the form
$t \pfromB{P}{\RR} s \rto_{\{\ell \to r\}} u$.
\begin{enumerate}[(a)]
\item
\label{PML_orthogonal_case}
If $\Gamma$ is orthogonal,
$t \rto_{\{\ell \to r\}}^= v \pfromB{P'}{\RR} u$ and 
$\Var(v,P') \subseteq \Var(s,P)$ for some $v$ and $P'$.
\item
\label{PML_overlapping_case}
Otherwise, there exist a parallel critical peak 
$t_0 \pfromB{P_0}{\RR} s_0 \rto_{\{\ell \to r\}} u_0$
and substitutions $\sigma$ and $\tau$ such that
$s = s_0\sigma$, $t = t_0\tau$, $u = u_0\sigma$, 
$\sigma \pto{}_\RR \tau$,
$t_0\sigma \pto{P \setminus P_0}_\RR t_0\tau$, and
$P_0 \subseteq P$.
\end{enumerate}
See the diagrams in \cref{fig:PML}.
\end{lem}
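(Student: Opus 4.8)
The plan is to analyze the local peak $\Gamma : t \pfromB{P}{\RR} s \rto_{\{\ell\to r\}} u$ by looking at how the positions in $P$ relate to the rewrite positions used by the $\ell\to r$-step, which takes place at the root of $s$. Write the right step as $s = \ell\sigma_0 \rto r\sigma_0 = u$. The parallel step on the left rewrites $s$ at the parallel positions in $P$; partition $P$ into $P_0$, the positions that lie at function positions of $\ell$ (i.e. $p \in P$ with $p \in \Pos_\FF(\ell)$ — more precisely, those $p$ for which the $\RR$-redex contracted is ``inside the pattern'' of $\ell$), and $P_1 = P \setminus P_0$, the positions lying strictly below variable positions of $\ell$ in $s$. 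The $P_1$-steps only rewrite inside the substitution part, so they can be pushed into a modified substitution: define $\tau$ on $\Var(\ell)$ by applying to each $x\sigma_0$ the parallel step induced by those positions of $P_1$ that fall within that occurrence of $x\sigma_0$, and extend $\tau$ to agree with $\sigma_0$ elsewhere; because $\ell$ is left-linear, each variable occurs once in $\ell$, so this is well-defined and gives $\sigma_0 \pto{}_\RR \tau$.

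For part \pref{PML_orthogonal_case}, orthogonality means either $P_0 = \varnothing$ or $P = \{\epsilon\}$ with the contracted rule a variant of $\ell\to r$. In the first case all $P$-steps are below variable positions of $\ell$; I would take $v = r\tau$, obtained from $u = r\sigma_0$ by the parallel step $\sigma_0 \pto{}_\RR \tau$ applied in the variable occurrences of $r$ — this is the classical Parallel Moves Lemma argument, where $P' $ is the set of positions in $r\tau$ corresponding to where the $P_1$-steps land (noting $\Var(r) \subseteq \Var(\ell)$ so every such residual occurrence of a variable of $\ell$ in $r$ is accounted for), and $\Var(v,P') \subseteq \Var(s,P)$ holds because the residual redexes are subterms of the original ones. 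The step $t \rto^=_{\{\ell\to r\}} v$ is: $t = \ell\tau \rto r\tau = v$ using the same rule (this is where the ``at most one step'' $\rto^=$ appears; it is exactly one step here, and zero in the degenerate overlapping-variant subcase where $t$ and $u$ are already joinable via $\sigma_0 \pto{} \tau$). The second orthogonality case, $P=\{\epsilon\}$ and the rules are variants, is immediate: $t$ and $u$ are renamings of each other, so $t \rto^= v = u \pfromB{\varnothing}{\RR} u$ with $P' = \varnothing$.

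For part \pref{PML_overlapping_case}, $\Gamma$ is not orthogonal, so $P_0 \neq \varnothing$ (and if $P_0 = \{\epsilon\}$ the contracted rule is not a variant of $\ell\to r$). I would construct the parallel critical peak by taking most general unifiers: let $\{\ell_p \to r_p\}_{p\in P_0}$ be fresh variants of the rules contracted at the $P_0$-positions, let $\mu$ be a most general unifier of $\{\ell_p \approx \ell|_p\}_{p\in P_0}$, and set $s_0 = \ell\mu$, $t_0 = (\ell\mu)[r_p\mu]_{p\in P_0}$, $u_0 = r\mu$; the side conditions in the definition of parallel critical peak are met precisely by non-orthogonality. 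Since $\sigma_0$ (restricted and suitably renamed to also instantiate the $\ell_p$) is a unifier of the same set, there is $\sigma$ with $\sigma_0 = \mu\sigma$ on the relevant variables, giving $s = s_0\sigma$ and $u = u_0\sigma$. Finally $t$ is obtained from $s$ by the $P_0$-steps and the $P_1$-steps; the $P_0$-part is captured by $t_0\sigma$ (the generic critical-peak term instantiated), and the remaining $P_1$-steps, which lie in the substitution part, turn $t_0\sigma$ into $t_0\tau$ where $\tau$ is the variable-wise parallel update defined above — this is exactly $t_0\sigma \pto{P\setminus P_0}_\RR t_0\tau = t$, with $\sigma \pto{}_\RR \tau$ inherited from $\sigma_0 \pto{}_\RR \tau$ and $P_0 \subseteq P$ by construction.

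The main obstacle I expect is the careful bookkeeping in part \pref{PML_overlapping_case}: making the decomposition $s = s_0\sigma$, $t = t_0\tau$, $u = u_0\sigma$ genuinely simultaneous — in particular ensuring that the substitution $\tau$ that absorbs the variable-position steps is the \emph{same} one applied to $t_0$ on both sides of the identities, and that the residual positions $P \setminus P_0$ in $t_0\sigma$ really are parallel and really correspond to the original $P_1$-steps after the mgu instantiation. The left-linearity of $\ell$ is essential throughout (so that a single well-defined $\tau$ works), and one must be attentive to the case $P_0 = \{\epsilon\}$, where $s_0 = \ell\mu$ is unified with a rule variant at the root and the ``parallel critical peak'' degenerates to an ordinary overlay critical peak — the non-variant side condition is what keeps this legitimate. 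Once the decomposition is set up, the verification of $\sigma \pto{}_\RR \tau$ and the position bookkeeping is routine.
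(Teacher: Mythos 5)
Your proposal follows essentially the same route as the paper's proof: partition $P$ into the pattern-overlapping positions $P_0 = P \cap \Pos_\FF(\ell)$ and the variable-part positions $P_1$, absorb the $P_1$-steps into a modified substitution $\tau$ (well-defined by left-linearity of $\ell$), handle (a) by the classical Parallel Moves argument with $v = r\tau$ and the residual position set $P'$, and handle (b) by factoring the matcher through a most general unifier of $\{\ell_p \approx \ell|_p\}_{p \in P_0}$. The one loose end — that $\tau$ must extend the factor substitution $\sigma$ (not the original matcher $\sigma_0$) on the variables introduced by the mgu, which works because the untouched variables of $\ell$ lie outside the mgu's domain — is exactly the bookkeeping you flag, and the paper resolves it the way you anticipate.
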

\begin{figure}[b]
\centering
\begin{tabular}{c@{\hspace{4em}}c}
\begin{tikzpicture}[D]
\node (s) at (0,4) {$s$};
\node (t) at (0,0) {$t$};
\node (u) at (4,4) {$u$};
\node (v) at (4,0) {$v$};
\draw[->]
 (s) edge[-||->] node[left,xshift=-1mm] {$P$} (t)
 (s) edge node[above] {$\epsilon$} (u)
;
\draw[->,dashed]
 (t) edge
     node[near end,xshift=2mm,yshift=0.5em] {$=$}
     node[above] {$\epsilon$} (v)
 (u) edge[-||->] node[right,xshift=1mm] {$P'$} (v)
;
\end{tikzpicture}
&
\begin{tikzpicture}[D]
\node (s)       at (0,4)   {$s$};
\node           at (0.4,4) {$=$};
\node (s0sigma) at (1,4)   {$s_0\sigma$};
\node (t0sigma) at (1,2)   {$t_0\sigma$};
\node (t0tau)   at (1,0)   {$t_0\tau$};
\node           at (0.4,0) {$=$};
\node (t)       at (0,0)   {$t$};
\node (u)       at (5,4)   {$u$};
\node           at (4.6,4) {$=$};
\node (u0sigma) at (4,4)   {$u_0\sigma$};
\draw[->]
 (s) edge[-||->] node[left,xshift=-1mm] {$P$} (t)
 (s0sigma) edge node[above] {$\epsilon$} (u0sigma)
 (s0sigma) edge[-||->,dashed] node[right,xshift=1mm] {$P_0$} (t0sigma)
 (t0sigma) edge[-||->,dashed] node[right,xshift=1mm] {$P \setminus P_0$} (t0tau)
;
\end{tikzpicture}
\\
(a) $\Gamma$ is orthogonal  & (b) $\Gamma$ is not orthogonal
\end{tabular}
\caption{The claims of \cref{lem:PML}.}
\label{fig:PML}
\end{figure}

\begin{proof}
\begin{enumerate}[(a)]
\item
Suppose that $\Gamma$ is orthogonal.
If $s \pto{\{\epsilon\}}_{\{\ell' \to r'\}} t$ holds for some variant
$\ell' \to r'$
of $\ell \to r$ then $t = u$. Thus, $t \to^= u \pfrom{\varnothing} u$.
Otherwise, $P \cap \Pos_\FF(\ell) = \varnothing$.
Since $s \rto_{\{ \ell \to r \}} u$
holds, there exists a substitution $\sigma$ with $s = \ell\sigma$ and
$u = r\sigma$.  As $\ell\sigma \pto{P} t$, $\ell$ is linear, and
$P \cap \Pos_\FF(\ell) = \varnothing$, straightforward induction on 
$\ell$ shows existence of $\tau$ such that $t = \ell\tau$ and
$\sigma \pto{}_{\RR} \tau$.
Take $v = r\tau$ and define $P'$ as follows:
\[
P' = \{ p'_1 \cdot p_2 \mid \text{$p_1 \cdot p_2 \in P$,
$p'_1 \in \Pos_\VV(r)$, and
$\ell|_{p_1} = r|_{p'_1}$ for some $p_1 \in \Pos_\VV(\ell)$} \}
\]
Clearly, $t \rto_{\{\ell \to r\}} v$ holds.  So it remains to show 
$u \pto{P'}_\RR v$ and $\Var(v,P') \subseteq \Var(s,P)$.  
Let $p'$ be an arbitrary position in $P'$.  There exist positions 
$p_1 \in \Pos_\VV(\ell)$, $p'_1 \in \Pos_\VV(r)$, and $p_2$ such that 
$p' = p'_1 \cdot p_2$, $p_1 \cdot p_2 \in P$, and
$\ell|_{p_1} = r|_{p'_1}$.
Denoting $p_1 \cdot p_2$ by $p$, we have the identities:
\begin{alignat*}{6}
u|_{p'}
& = (r\sigma)|_{p'_1 \cdot p_2}
&& = (r|_{p'_1}\sigma)|_{p_2}
&& = (\ell|_{p_1}\sigma)|_{p_2}
&& = (\ell\sigma)|_{p_1 \cdot p_2}
&& = s|_p
\\
v|_{p'}
& = (r\tau)|_{p'_1 \cdot p_2}
&& = (r|_{p'_1}\tau)|_{p_2}
&& = (\ell|_{p_1}\tau)|_{p_2}
&& = (\ell\tau)|_{p_1 \cdot p_2}
&& = t|_p
\end{alignat*}
From $s \pto{P}_\RR t$ we obtain
$s|_p \rto_\RR t|_p$ and thus
$u|_{p'} \rto_\RR v|_{p'}$.  Therefore, $u \pto{P'}_\RR v$ is obtained.
Moreover, we have
\(
\Var(v|_{p'}) = \Var(t|_p)
\subseteq \Var(s|_p)
\subseteq \Var(s,P)
\).
As $\Var(v,P')$ is the union of $\Var(v|_{p'})$ for all $p' \in P'$, the
desired inclusion $\Var(v, P') \subseteq \Var(s,P)$ follows.
\item
Suppose that $\Gamma$ is not orthogonal.
By $\ell_p \to r_p$ we denote the rule employed at the rewrite
position $p \in P$ in $s \pto{P}_\RR t$.
Let $P_0 = P \cap \Pos_\FF(\ell)$ and $P_1 = P \setminus P_0$.
Since $P$ is a set of parallel positions, $s \pto{P} t$ is split
into the two steps $s \pto{P_0}_{\RR} v \pto{P_1}_{\RR} t$,
where $v = s[t|_p]_{p \in P_0}$.

First, we show that $v \pfrom{P_0} s \rto_{\{\ell \to r\}} u$ is
an instance of a parallel critical peak. Let $p$ be an arbitrary position
in $P_0$.  
Because of $s \rto_{\{\ell \to r\}} u$, we have $s = \ell\mu$ and 
$u = r\mu$ for some $\mu$. 
Suppose that $\ell'_p \to r'_p$ is a renamed variant of $\ell_p \to r_p$
with fresh variables.  There exists a substitution $\mu_p$ such that 
$s|_p = \ell'_p\mu_p$ and $t|_p = r'_p\mu_p$.
Note that $\Dom(\mu) \cap \Dom(\mu_p) = \varnothing$.
We define the substitution $\nu$ as follows:
\[
\nu(x) =
\begin{cases}
x\mu_p & \text{if $p \in P_0$ and $x \in \Var(\ell'_p)$} \\
x\mu   & \text{otherwise}
\end{cases}
\]
Because every $\ell'_p$ with $p \in P_0$ is linear and do not share
variables with each other,
$\nu$ is well-defined.  Since $\ell$
neither share variables with $\ell'_p$,
we obtain the identities:
\[
\ell'_p\nu = \ell'_p\mu_p = s|_p = \ell|_p\mu = \ell|_p\nu 
\]
Thus, $\nu$ is a unifier of $E = \{ \ell'_p \approx \ell|_p \}_{p \in P_0}$.
Let $V$ denote the set of all variables occurring in $E$.
According to \cite[Proposition~4.10]{E85}, there exists a most general unifier $\nu'$ of
$E$ such that $\Dom(\nu') \subseteq V$.
Thus, there is a substitution $\sigma$ with $\nu = \nu'\sigma$.  Let $s_0 = \ell\nu'$, 
$t_0 = (\ell\nu')[r'_p\nu']_{p \in P_0}$, and $u_0 = r\nu'$.
The peak $t_0 \pfrom{P_0} s_0 \rto u_0$ is a parallel critical peak, and
$v \pfrom{P_0} s \rto u$ is an instance of the peak by
the substitution $\sigma$:
\begin{align*}
s_0\sigma & = \ell\nu'\sigma = \ell\nu = \ell\mu = s
\\
t_0\sigma
& = (\ell\nu'\sigma)[r'_p\nu'\sigma]_{p \in P_0}
  = (\ell\nu)[r'_p\nu]_{p \in P_0}
  = (\ell\mu)[r'_p\mu_p]_{p \in P_0}
  = v
\\
u_0\sigma & = r\nu'\sigma = r\nu = r\mu = u
\end{align*}

Next, we construct a substitution $\tau$ so that it satisfies 
$\sigma \pto{}_\RR \tau$ and $t_0\sigma \pto{P_1}_\RR t_0\tau$.
Given a variable $x \in \Var(\ell)$, we write $p_x$ for a variable
occurrence of $x$ in $\ell$.  Due to linearity of $\ell$, the position
$p_x$ is uniquely determined.
Let $W = \Var(\ell) \setminus \Var(\ell, P_0)$.
Note that $W \cap V = \varnothing$ holds.
We define the substitution $\tau$ as follows:
\[
\tau(x) =
\begin{cases}
t|_{p_x}
& \text{if $x \in W$} \\
x\sigma & \text{otherwise}
\end{cases}
\]
To verify $\sigma \pto{}_\RR \tau$, consider an arbitrary variable
$x$.
We show $x\sigma \pto{}_\RR x\tau$.
If $x \notin W$ then 
$x\sigma = x\tau$, from which the claim follows.
Otherwise, 
the definitions of $V$ and $\nu'$ yield the implications:
\[
x \in W
\implies x \notin V
\implies x \notin \Dom(\nu')
\implies x\nu' = x
\]
So $s_0|_{p_x} = x$ follows from the identities:
\[
s_0|_{p_x} = (\ell\nu)|_{p_x} = \ell|_{p_x}\nu = x \nu = x
\]
Let $Q_x = \{ q \mid p_xq \in P_1 \}$.
As $s \pto{P_0}_{\RR} v \pto{P_1}_{\RR} t$ implies
$s|_{p_x} = v|_{p_x} \pto{Q_x}_{\RR} t|_{p_x}$, 
we obtain
\(
x\sigma = s_0|_{p_x}\sigma = (s_0\sigma)|_{p_x} = s|_{p_x}
\pto{Q_x}_{\RR} t|_{p_x} = x\tau
\).
Therefore, the claim is verified.

The remaining task is to show $t_0\sigma \pto{P_1}_{\RR} t_0\tau$. 
Let $p \in P_1$.
As $s_0|_{p_x} = x$ and $s_0 \pto{P_0}_{\RR} t_0$ imply $x = t_0|_{p_x}$,
the equation $(s_0\sigma)|_p = (t_0\sigma)|_p$ 
follows. By the definition of $\tau$ we have
$(t_0\tau)|_{p_x} = t|_{p_x}$, which leads to
$(t_0\tau)|_p = t|_p$.  Hence, we obtain the relations
\[
(t_0\sigma)|_{p} = (s_0\sigma)|_p = s|_p
\pto{\{\epsilon\}}_{\RR}
t|_p = (t_0\tau)|_{p}
\]
which entails the desired parallel step $t_0\sigma \pto{P_1}_\RR t_0\tau$.
\qedhere
\end{enumerate}
\end{proof}

For almost parallel closed TRSs the above statement is extended to 
local peaks $\pfrom{} \cdot \pto{}$ of parallel steps.
In its proof we measure 
parallel steps $s \pto{P} t$ in such a local peak by the
\emph{total size of contractums} $|t|_P$, namely
the sum of $|(t|_p)|$ for all $p \in P$.
Note that this measure attributes to \cite{OO97,LJ14}.

\begin{lem}\label{lem:almost_to_t81}
Consider a left-linear almost parallel closed TRS.  If
$t \pfrom{P_1} s \pto{P_2} u$ then
\begin{itemize}
\item
$t \to^* v_1 \pfrom{P'_1} u$ for some $v_1$ and $P'_1$
with $\Var(v_1,P'_1) \subseteq \Var(s,P_1)$, and
\item
$t \pto{P'_2} v_2 \fromT{*} u$ for some $v_2$ and $P'_2$
with $\Var(v_2,P'_2) \subseteq \Var(s,P_2)$.
\end{itemize}
\end{lem}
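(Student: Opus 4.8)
The two items are symmetric: interchanging $(t,P_1)$ with $(u,P_2)$ turns the first into the second, because $\fromT{*}$ abbreviates $\to^*$ read backwards. So the plan is to prove only the first, namely that from $t \pfrom{P_1} s \pto{P_2} u$ one obtains $v$ and $P'$ with $t \to^* v \pfrom{P'} u$ and $\Var(v,P') \subseteq \Var(s,P_1)$. I would argue by well-founded induction on the pair $(|u|_{P_2},|s|)$ ordered lexicographically; the first component is the total-contractum-size measure attributed to \cite{OO97,LJ14} mentioned above. Two observations will be used repeatedly without comment: every parallel step $a \pto{Q} b$ satisfies $\Var(b,Q) \subseteq \Var(a,Q)$ (as $\Var(r) \subseteq \Var(\ell)$ for every rule), and whenever $P_1 \subseteq \{\epsilon\}$ the inclusion $\Var(v,P') \subseteq \Var(s,P_1)$ holds automatically since rewriting introduces no variables.

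If $P_2 = \varnothing$ then $u = s$, and $v := t$ with $P' := P_1$ works. If $P_2 \neq \varnothing$ and $\epsilon \notin P_1 \cup P_2$, then $s = f(\seq{s})$ and both parallel steps split over the arguments into component peaks $t_i \pfrom{} s_i \pto{} u_i$; each is strictly smaller in the lexicographic order — the first component drops if $P_2$ meets more than one argument, and otherwise $|s_i| < |s|$ — so the induction hypothesis applies componentwise and the results reassemble, the variable inclusion following by taking unions. The symmetric case $\epsilon \in P_1$, $\epsilon \notin P_2$ (so $P_1 = \{\epsilon\}$) I would handle by applying \cref{lem:PML} to $u \pfrom{P_2} s \rto_{\{\ell\to r\}} t$ and reading the resulting diagram backwards; here the variable inclusion is automatic.

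The remaining case is $P_2 = \{\epsilon\}$, where the peak is $t \pfrom{P_1} s \rto_{\{\ell\to r\}} u$, exactly the shape of \cref{lem:PML}. If the peak is orthogonal, part~(a) of that lemma gives $t \rto_{\{\ell\to r\}}^= v \pfrom{P'} u$ with $\Var(v,P') \subseteq \Var(s,P_1)$ and we are done. Otherwise part~(b) supplies a parallel critical peak $t_0 \pfrom{P_0} s_0 \rto u_0$ and substitutions $\sigma \pto{}_\RR \tau$ with $s = s_0\sigma$, $t = t_0\tau$, $u = u_0\sigma$, $t_0\sigma \pto{P_1 \setminus P_0} t_0\tau$, and $\varnothing \neq P_0 \subseteq P_1$. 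The first job is to close this parallel critical peak. If $P_0 = \{\epsilon\}$ then $P_1 = \{\epsilon\}$, this is an overlay critical pair, and ${\cpT{}{\epsilon}} \subseteq {\pto{}\cdot\fromT{*}}$ closes it — used in whichever orientation fits the item being proved, which is the point where the two items of the lemma support one another — and the variable inclusion is automatic. Otherwise all positions of $P_0$ are strictly below the root, and a nested induction (on $|P_0|$, peeling one overlap position at a time so that each peeled peak is an instance of an ordinary critical pair, closed by ${\cpT{}{>\epsilon}} \subseteq {\pto{}}$) yields $w_0$ and $Q_0$ with $t_0 \to^* w_0 \pfrom{Q_0} u_0$ and $\Var(w_0,Q_0) \subseteq \Var(s_0,P_0)$. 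It then remains to lift along the substitutions: I take $v := w_0\tau$, so that $t = t_0\tau \to^* w_0\tau = v$, and take $P'$ to be $Q_0$ together with the positions at which $\sigma \pto{}_\RR \tau$ contracts a redex inside an occurrence in $u_0$ of a variable of $\ell$ not used in the overlaps. These positions are pairwise parallel and parallel to $Q_0$, because $\Var(w_0,Q_0) \subseteq \Var(s_0,P_0)$ contains no variable on which $\sigma$ and $\tau$ differ, so each such occurrence either lies disjoint from $Q_0$ or is erased by a $Q_0$-contraction; hence $u = u_0\sigma \pto{P'} w_0\tau = v$. The variable inclusion survives because the subterms of $v$ at the positions of $P'$ are, at $Q_0$, $\sigma$-instances of subterms bounded via $\Var(w_0,Q_0) \subseteq \Var(s_0,P_0)$ and the identity $(s_0|_p)\sigma = s|_p$ for $p \in P_0$, and, at the remaining positions, contractums of the residual left redexes lying in $P_1 \setminus P_0$; in both cases their variables are in $\Var(s,P_1)$.

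I expect the main obstacle to be exactly this non-orthogonal subcase: one must reduce parallel critical peaks to ordinary critical pairs without losing control of the joining diagram, reconcile the two closing shapes (a bare $\pto{}$ below the root but $\pto{}\cdot\fromT{*}$ at the root — the asymmetry that makes the two items of the lemma genuinely interdependent), and verify that after instantiation by $\sigma$ and the correction through $\sigma \pto{}_\RR \tau$ the joining step on the $u$-side stays a single parallel step whose contracted subterms carry only variables descending from the $P_1$-redexes or from the overlaps. The lexicographic measure $(|u|_{P_2},|s|)$ makes the componentwise recursion well-founded, while the nested induction on the number of overlap positions is what terminates the reduction of parallel critical peaks.
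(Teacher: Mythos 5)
Your overall architecture (the symmetry reduction to the first item, the orthogonal cases via Lemma~\ref{lem:PML}(a), and the decomposition at the root function symbol) matches the paper's proof, but the treatment of the non-orthogonal case $P_2 = \{\epsilon\}$ has a genuine gap that your induction measure cannot repair. After you extract the parallel critical peak $t_0 \pfrom{P_0} s_0 \rto u_0$ and peel one overlap position $p \in P_0$, the closure ${\cpT{}{>\epsilon}} \subseteq {\pto{}}$ gives $s_0[t_0|_p]_p \pto{Q} u_0$ for some set $Q$ of positions over which you have no control (it may well be $\{\epsilon\}$: in Example~\ref{ex:T88} the closing step of diagram~(i) is a root step). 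The residual peak $t_0 \pfrom{\smash{P_0 \setminus \{p\}}} s_0[t_0|_p]_p \pto{Q} u_0$ is therefore a \emph{general} peak of two parallel steps, not an instance of any ordinary or parallel critical peak, so neither ${\cpT{}{>\epsilon}} \subseteq {\pto{}}$ nor your nested induction hypothesis on $|P_0|$ applies to it. The only way to close it is to invoke the full lemma, i.e., the outer induction hypothesis, and there your measure $(|u|_{P_2},|s|)$ fails: the first component goes from $|u|_{\{\epsilon\}} = |u|$ to $|u_0|_{Q}$, which need not decrease, and the second compares $|s_0[t_0|_p]_p|$ with $|s|$, which can grow because a contractum may be larger than the redex it replaces.

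The paper resolves both problems at once. In this case it never extracts the full parallel critical peak (so no lifting through $\sigma$ and $\tau$ is needed): it picks a single overlapping position $p \in P_1 \cap \Pos_\FF(\ell)$ of the \emph{original} peak, observes that $s[t|_p]_p \xleftarrow{\smash{p}} s \rto u$ is an instance of an ordinary critical peak closed by $s[t|_p]_p \pto{\smash{P_2'}} u$, and feeds the residual peak $t \pfrom{\smash{P_1 \setminus \{p\}}} s[t|_p]_p \pto{\smash{P_2'}} u$ back into the \emph{main} induction. This is licensed by the measure $(|t|_{P_1}+|u|_{P_2},s)$, lexicographic with the proper-subterm order in the second component: removing $p$ from $P_1$ strictly decreases $|t|_{P_1}$, while $|u|_{P_2'} \leqslant |u| = |u|_{\{\epsilon\}}$. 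The missing summand $|t|_{P_1}$ in your measure is exactly what makes this recursion well-founded. Your lifting argument through $\sigma \pto{}_\RR \tau$ is in the spirit of Lemma~\ref{lem:recomposition} and would be needed if one insisted on working at the level of the extracted critical peak (as the proof of Theorem~\ref{thm:prlc} does), but here it can be avoided entirely; if you adopt the paper's measure and peel on the original peak, the rest of your case analysis goes through.
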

\begin{proof}
Let $\Gamma\colon t \pfrom{P_1} s \pto{P_2} u$ be a local peak.
We show the claim by well-founded induction on
$(|t|_{P_1}+|u|_{P_2},s)$ with respect to $\succ$.
Here $(m,s) \succ (n,t)$ if either $m > n$, or $m = n$ and $t$
is a proper subterm of $s$.
Depending on the shape of $\Gamma$, we distinguish six cases.
\begin{enumerate}
\item
If $P_1$ or $P_2$ is empty
then the claim follows from the
fact: $\Var(v,P) \subseteq \Var(w,P)$
if $w \pto{P} v$.

\item
If $P_1$ or $P_2$ is $\{ \epsilon \}$ and $\Gamma$ is orthogonal then
Lemma~\ref{lem:PML}\pref{PML_orthogonal_case} applies.

\item
If $P_1 = P_2 = \{\epsilon\}$ and $\Gamma$ is not orthogonal
then $\Gamma$ is an instance of a critical peak.
By almost parallel closedness $t \to^* v_1 \pfrom{Q_1} u$ and
$t \pto{Q_2} v_2 \fromT{*} u$ for some $v_1$, $v_2$, $Q_1$, and $Q_2$.
For each $k \in \{1,2\}$ we have $s \to^* v_k$, so
$\Var(v_k) \subseteq \Var(s)$ follows.
Thus,
\(
\Var(v_k, Q_k)
\subseteq \Var(v_k)
\subseteq \Var(s)
= \Var(s,\{\epsilon\})
\).
The claim holds.

\item
If $P_1 {{}\nsubseteq{}} \{\epsilon\}$, $P_2 = \{ \epsilon \}$, and $\Gamma$ is not
orthogonal then there is $p \in P_1$ such that
$s' \xleftarrow{\smash{p}} s \rto u$ is an instance of a critical peak and
$s' \pto{P_1 \setminus \{p\}} t$ 
follows by \cref{lem:PML}\pref{PML_overlapping_case} where $P = \{p\}$.
By the almost parallel closedness $s' \pto{\smash{P_2'}} u$ for some
$P_2'$.  Since $P_2'$ is a set of parallel positions in $u$, we have
$|u|_{\{\epsilon\}} = |u| \geqslant |u|_{P_2'}$.
As $|u|_{\{\epsilon\}} \geqslant |u|_{P_2'}$ and
$|t|_{P_1} > |t|_{P_1 \setminus \{p\}}$ yield
$|t|_{P_1} + |u|_{\{\epsilon\}} > |t|_{P_1 \setminus \{p\}} + |u|_{P_2'}$,
we obtain the inequality:
\[
(|t|_{P_1} + |u|_{P_2}, s) \succ
(|t|_{P_1 \setminus \{ p \}} + |u|_{P_2'}, s')
\]
Thus, the claim follows by the induction hypothesis for
$t \pfrom{\smash{P_1 \setminus \{p\}}} s' \pto{\smash{P_2'}} u$ and the inclusions
\(
\Var(s', P_1 \setminus \{p\}) \subseteq \Var(s,P_1)
\)
and
\(
\Var(s', P_2') \subseteq \Var(s,\{\epsilon\})
\).

\item
If $P_1 = \{\epsilon\}$, $P_2 \nsubseteq \{ \epsilon \}$,
and $\Gamma$ is not orthogonal then
the proof is analogous to the last case.

\item
If $P_1 \nsubseteq \{\epsilon\}$ and $P_2 \nsubseteq \{\epsilon\}$ then
we may assume $s = f(\seq{s})$, $t = f(\seq{t})$, $u = f(\seq{u})$, and
$t_i \pfrom{P_1^i} s_i \pto{P_2^i} u_i$ for all
$1 \leqslant i \leqslant n$.
Here $P_k^i$ denotes the set $\{ p \mid i \cdot p \in P_k \}$.
For each $i \in \{1,\ldots,n\}$, we have $|t|_{P_1} \geqslant |t_i|_{P_1^i}$ and
$|u|_{P_2} \geqslant |u_i|_{P^i_2}$, and therefore
$|t|_{P_1} + |u|_{P_2} \geqslant |t_i|_{P^i_1} + |u_i|_{P^i_2}$.
So we deduce the following inequality:
\[
(|t|_{P_1} + |u|_{P_2}, s) \succ
(|t_i|_{P_1^i} + |u_i|_{P_2^i}, s_i)
\]
Consider the $i$-th peak
\(
t_i \pfrom{\smash{P_1^i}} s_i \pto{\smash{P_2^i}} u_i
\).
By the induction hypothesis it admits valleys of the forms
$t_i \to^* v_1^i \pfrom{Q_1^i} u_i$
and
$t_i \pto{Q_2^i} v_2^i \fromT{*} u_i$
such that
$\Var(v_k^i, Q_k^i) \subseteq \Var(s_i, P_k^i)$ for both
$k \in \{1,2\}$.  For each $k$, define
\(
Q_k = \{ i \cdot q \mid \text{$1 \leqslant i \leqslant n$ and
$q \in Q_k^i$} \}
\)
and $v_k = f(v_k^1,\ldots,v_k^n)$.
Then we have
$t \to^* v_1 \pfrom{Q_1} u$
and
$t \pto{Q_2} v_2 \fromT{*} u$.
Moreover,
\[
\Var(v_k, Q_k) =
\bigcup_{i=1}^n \Var(v_k^i, Q_k^i) \subseteq
\bigcup_{i=1}^n \Var(s_i, P_k^i) = \Var(s, P_k)
\]
holds.
Hence, the claim follows.
\qedhere
\end{enumerate}
\end{proof}

\begin{thm}
\label{thm:T81_subsumes_T88}
Every left-linear and almost parallel closed TRS satisfies
conditions~\pref{strongly_parallel_closed_1} and
\pref{strongly_parallel_closed_2} of Theorem~\ref{thm:T81}.
In other words, Theorem~\ref{thm:T81} subsumes Theorem~\ref{thm:T88}.
\end{thm}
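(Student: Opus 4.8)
The plan is to derive the two conditions of Theorem~\ref{thm:T81} directly from Lemma~\ref{lem:almost_to_t81}, specialized to degenerate peaks.

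Condition~\pref{strongly_parallel_closed_1} asks for ${\cp{}} \subseteq {\pto{}^{} \cdot \fromT{*}}$. This is immediate: almost parallel closedness itself gives ${\cpT{}{\epsilon}} \subseteq {\pto{} \cdot \fromT{*}}$ for root critical peaks, and ${\cpT{}{> \epsilon}} \subseteq {\pto{}} \subseteq {\pto{} \cdot \fromT{*}}$ for the rest, so the union of the two inclusions is exactly ${\cp{}} \subseteq {\pto{} \cdot \fromT{*}}$. No work is needed here beyond unfolding the definition of almost parallel closedness and recalling that $\pto{}$ is reflexive, so $\pto{} \subseteq \pto{} \cdot \fromT{*}$.

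For condition~\pref{strongly_parallel_closed_2}, given a parallel critical peak $t \pfrom{\smash{P}} s \rto u$, I would view it as the local peak $\Gamma\colon t \pfrom{P} s \pto{\{\epsilon\}} u$ of two parallel steps (using $\rto_\RR \subseteq \pto{}_\RR$, taking the right-hand parallel step with position set $\{\epsilon\}$). Applying the first bullet of Lemma~\ref{lem:almost_to_t81} to $\Gamma$ yields a term $v$ and a parallel-position set $P'$ with $t \to^* v \pfrom{\smash{P'}} u$ and $\Var(v,P') \subseteq \Var(s,P)$ — which is verbatim condition~\pref{strongly_parallel_closed_2}. So the theorem is essentially a corollary of the lemma, plus the trivial observation about condition~\pref{strongly_parallel_closed_1}, plus the final sentence "Theorem~\ref{thm:T81} subsumes Theorem~\ref{thm:T88}," which follows because Theorem~\ref{thm:T81} then applies to every left-linear almost parallel closed TRS and delivers confluence, exactly the conclusion of Theorem~\ref{thm:T88}.

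**The one subtlety to check**, and the only place the argument is more than bookkeeping, is that a parallel critical peak in the sense of the relevant definition is indeed an instance of (or literally is) a local peak to which Lemma~\ref{lem:almost_to_t81} applies: one must make sure the single right step $\rto_\RR$ is legitimately treated as a parallel step $\pto{\{\epsilon\}}_\RR$, and that the resulting set $P'$ from the lemma consists of parallel positions (which the lemma explicitly guarantees) so that $t \pfrom{\smash{P'}} u$ really is a parallel-step conversion of the shape demanded by condition~\pref{strongly_parallel_closed_2}. Since Lemma~\ref{lem:almost_to_t81} is stated for arbitrary peaks $\pfrom{P_1} \cdot \pto{P_2}$ and the critical peak is just such a peak with $P_2 = \{\epsilon\}$, no obstacle arises; I expect the whole proof to be four or five lines.
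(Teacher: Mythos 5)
Your proposal is correct and takes essentially the same route as the paper: the paper's proof is the one-liner ``since (parallel) critical peaks are instances of ${\pfrom{}} \cdot {\pto{}}$, Lemma~\ref{lem:almost_to_t81} entails the claim,'' which is exactly your reduction (the paper obtains condition~(a) from the second bullet of the lemma applied to critical peaks, whereas you observe it already follows from the definition of almost parallel closedness --- an inessential difference).
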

\begin{proof}
Since (parallel) critical peaks are instances of $\pfrom{} \cdot \pto{}$,
Lemma~\ref{lem:almost_to_t81} entails the claim. 
\end{proof}

Note that Theorem~\ref{thm:T88} does not subsume Theorem~\ref{thm:T81}
as witnessed by the TRS consisting of the four rules
$\m{f(a)} \to \m{c}$,
$\m{a} \to \m{b}$,
$\m{f(b)} \to \m{b}$, and
$\m{c} \to \m{b}$.
In \secref{sec:prlc} we will see that \cref{thm:T81} is subsumed by a
variant of rule labeling.

\section{Decreasing Diagrams with Commuting Subsystems}
\label{sec:ddc}

We make a variant of decreasing diagrams~\cite{vO94,vO08},
which will be used in the subsequent sections for deriving compositional
confluence criteria for term rewrite systems.
First we recall the commutation version of the technique~\cite{vO08}.
Let $\AA = (A, \{ \to_{1,\alpha} \}_{\alpha \in I})$ and
$\BB = (A, \{ \to_{2,\beta} \}_{\beta \in J})$ be
$I$-indexed and $J$-indexed ARSs on the same domain, respectively.
Let $>$ be a well-founded order on $I \cup J$.  By $\curlyvee\alpha$ we
denote the set
$\{ \beta \in {I \cup J} \mid \alpha > \beta \}$, and by
$\curlyvee\alpha\beta$ we denote $(\curlyvee\alpha) \cup (\curlyvee\beta)$.
We say that a local peak $b \fromB{1,\alpha} a \to_{2,\beta} c$ is
\emph{decreasing} if
\[
b \xleftrightarrow[\curlyvee \alpha]{*}
\cdot \xrightarrow[2,\beta]{=}
\cdot \xleftrightarrow[\curlyvee\alpha\beta]{*}
\cdot \xleftarrow[1,\alpha]{=}
\cdot \xleftrightarrow[\curlyvee\beta]{*} c
\]
holds.  Here $\fromto_K$ stands for the union of $\fromB{1,\gamma}$ and
$\to_{2,\gamma}$ for all $\gamma \in K$.  The ARSs $\AA$ and $\BB$ are
\emph{decreasing} if 
every local peak $b \fromB{1,\alpha} a \to_{2,\beta} c$ with
$(\alpha, \beta) \in I \times J$ is decreasing.
In the case of $\AA = \BB$, we simply say that $\AA$ is decreasing.

\begin{thmC}[\cite{vO08}]
\label{thm:dd}
If two ARSs are decreasing then they commute.
\end{thmC}

We present the abstract principle of our compositional criteria.  The idea 
of using the least index in the decreasing diagram technique
is taken from \cite{JL12,FvO13,DFJL22}.

\begin{thm} \label{thm:ddc}
Let $\AA = (A, \{\to_{1,\alpha}\}_{\alpha \in I})$ 
and $\BB = (A, \{\to_{2,\beta}\}_{\beta \in I})$
be $I$-indexed ARSs equipped with a well-founded order $>$ on $I$.
Suppose that $\bot$ is the least element in $I$ and
$\to_{1,\bot}$ and $\to_{2,\bot}$ commute.
The ARSs $\AA$ and $\BB$ commute if 
every local peak $\fromB{1,\alpha} \cdot \to_{2,\beta}$ with
$(\alpha,\beta) \in I^2 \setminus \{(\bot,\bot)\}$ is decreasing.
\end{thm}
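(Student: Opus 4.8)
The plan is to reduce the claim to the commutation version of decreasing diagrams (\cref{thm:dd}) by contracting the two bottom relations into single steps. Define $\hat\AA = (A, \{\to_{\hat 1,\alpha}\}_{\alpha\in I})$ by $\to_{\hat 1,\bot} = {\to_{1,\bot}^*}$ and $\to_{\hat 1,\alpha} = {\to_{1,\alpha}}$ for $\alpha \neq \bot$, and define $\hat\BB$ analogously from $\BB$. Since ${\to_{1,\alpha}} \subseteq {\to_{\hat 1,\alpha}} \subseteq {\to_\AA^*}$ for every $\alpha$, we get ${\to_{\hat\AA}^*} = {\to_\AA^*}$ and likewise ${\to_{\hat\BB}^*} = {\to_\BB^*}$, so $\hat\AA$ and $\hat\BB$ commute if and only if $\AA$ and $\BB$ do. The order $>$ is still well founded on $I$ with least element $\bot$, hence by \cref{thm:dd} it suffices to verify that every local peak $b \fromB{\hat 1,\alpha} a \to_{\hat 2,\beta} c$ is decreasing. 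I would distinguish cases according to whether $\alpha$ and $\beta$ equal $\bot$.

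If $\alpha \neq \bot$ and $\beta \neq \bot$, then $\to_{\hat 1,\alpha} = {\to_{1,\alpha}}$ and $\to_{\hat 2,\beta} = {\to_{2,\beta}}$, so the peak is a peak of $\AA$ against $\BB$ with $(\alpha,\beta) \neq (\bot,\bot)$; the decreasing diagram supplied by the hypothesis is still decreasing in $\hat\AA,\hat\BB$, because every $\to_{1,\bot}$- or $\to_{2,\bot}$-step in it is subsumed by $\to_{\hat 1,\bot}$ resp.\ $\to_{\hat 2,\bot}$ and no other label changes. If $\alpha = \beta = \bot$, the peak is $b \fromBT{1,\bot}{*} a \to_{2,\bot}^* c$, and commutation of $\to_{1,\bot}$ and $\to_{2,\bot}$ gives $b \to_{2,\bot}^* \cdot \fromBT{1,\bot}{*} c$; since $\bot$ is least, $\curlyvee\bot = \varnothing$, so decreasingness for the pair $(\bot,\bot)$ amounts to exactly $b \to_{\hat 2,\bot}^= \cdot \fromBT{\hat 1,\bot}{=} c$, which holds because $\to_{\hat 2,\bot} = {\to_{2,\bot}^*}$ and $\to_{\hat 1,\bot} = {\to_{1,\bot}^*}$ are reflexive and transitive.

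The substantial case is a mixed peak, say $\alpha \neq \bot$ and $\beta = \bot$ (the orientation $\alpha = \bot \neq \beta$ being symmetric). Unfolding the contracted step, the peak is $b \fromB{1,\alpha} a \to_{2,\bot}^m c$ for some $m \geqslant 0$. As $\bot < \alpha$ we have ${\to_{2,\bot}} \subseteq {\xleftrightarrow[\curlyvee\alpha]{*}}$, so the decreasing shape required for $(\alpha,\bot)$ collapses — once the facing $\to_{\hat 2,\bot}^=$ step and the inner conversion are absorbed into $\xleftrightarrow[\curlyvee\alpha]{*}$ — to $b \xleftrightarrow[\curlyvee\alpha]{*} \cdot \fromBT{1,\alpha}{=} c$. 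I would prove this by induction on $m$. The base case $m = 0$ is trivial. For the step, write $a \to_{2,\bot} c_1 \to_{2,\bot}^{m-1} c$; the hypothesis makes the single peak $b \fromB{1,\alpha} a \to_{2,\bot} c_1$ decreasing, and (using $\bot < \alpha$ again to fold the facing $\to_{2,\bot}^=$ step away) this yields $b \xleftrightarrow[\curlyvee\alpha]{*} d \fromBT{1,\alpha}{=} c_1$ for some $d$. If $d = c_1$ we are done, since $c_1 \to_{2,\bot}^{m-1} c$ is itself a $\xleftrightarrow[\curlyvee\alpha]{*}$-conversion. Otherwise $d \fromB{1,\alpha} c_1 \to_{2,\bot}^{m-1} c$ is again a mixed peak with a strictly shorter bottom chain, to which the induction hypothesis applies, and concatenating the two conversions gives the desired one.

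The crux, and the step I expect to cost the most care, is precisely this mixed case: because the conclusion of a decreasing diagram is a conversion rather than a valley, one cannot simply recurse on the peak produced by closing the first step, and the length of the underlying $\to_{2,\bot}$-chain is the auxiliary measure that makes the argument terminate. The delicate bookkeeping is keeping track of which portions of a decreasing diagram may legitimately be folded into $\xleftrightarrow[\curlyvee\alpha]{*}$, which is sound exactly because $\bot$ — being the least index — lies in $\curlyvee\alpha$; by contrast, the $(\bot,\bot)$ case and the purely non-bottom case are immediate.
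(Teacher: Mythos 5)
Your proof is correct and follows essentially the same route as the paper: contract the $\bot$-indexed relations to their reflexive–transitive closures, reduce to \cref{thm:dd}, and handle the mixed peaks by induction on the length of the contracted $\bot$-chain (the paper spells out the orientation with the $\bot$-chain on the left and inducts on its length; yours is the mirror image, with the same symmetry remark for the other case).
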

\begin{proof}
We define the two ARSs $\AA' = (A, \{ \To_{1,\alpha} \}_{\alpha \in I})$
and $\BB' = (A, \{ \To_{2,\alpha} \}_{\alpha \in I})$ as follows:
\[
{\To_{i,\alpha}}  =
\begin{cases}
{\to_{i,\alpha}^*} & \text{if $\alpha = \bot$} \\
{\to_{i,\alpha}}   & \text{otherwise}
\end{cases}
\]
Since
${\to_\AA^*} = {\To_\AA^*}$ and ${\to_\BB^*} = {\To_\BB^*}$, the
commutation of $\AA$ and $\BB$ follows from that of $\AA'$ and $\BB'$.
We show the latter by proving decreasingness of $\AA'$ and $\BB'$ with
respect to the given well-founded order $>$.  Let $\Gamma$ be a local peak of
form $\FromB{1,\alpha} \cdot \To_{2,\beta}$.  We distinguish four cases.
\begin{itemize}
\item
If neither $\alpha$ nor $\beta$ is $\bot$ then
decreasingness of $\Gamma$ follows from 
the assumption.
\item
If both $\alpha$ and $\beta$ are $\bot$ then the commutation of
$\to_{1,\bot}$ and $\to_{2,\bot}$ yields the inclusion:
\[ 
{\xLeftarrow[1,\bot]{} \cdot \xRightarrow[2,\bot]{}}
\subseteq
{\xRightarrow[2,\bot]{} \cdot \xLeftarrow[1,\bot]{}}
\]
Thus $\Gamma$ is decreasing.
\item
If $\beta > \alpha = \bot$ then 
we have
\(
{\fromB{1,\alpha} \cdot \to_{2,\beta}}
\subseteq
{\to_{2,\beta}^= \cdot \fromto_{\curlyvee\beta}^*}
\)
Therefore, easy induction on $n$ shows the inclusion
\(
{\fromBT{1,\alpha}{n} \cdot \to_{2,\beta}}
\subseteq
{\to_{2,\beta}^= \cdot \fromto_{\curlyvee\beta}^*}
\)
for all $n \in \NN$.  Thus,
\[
{\xLeftarrow[1,\alpha]{} \cdot \xRightarrow[2,\beta]{}}
\;=\;
{\xleftarrow[1,\alpha]{*} \cdot \xrightarrow[2,\beta]{}}
\;\subseteq\;
{\xrightarrow[2,\beta]{=} \cdot \xleftrightarrow[\curlyvee\beta]{*}}
\;=\;
{\xRightarrow[2,\beta]{=} \cdot \xLeftrightarrow[\curlyvee\beta]{*}}
\]
holds, where $\Fromto_J$ stands for $\FromB{1,J} \cup \To_{2,J}$.
Hence $\Gamma$ is decreasing.
\item
The case that $\alpha > \beta = \bot$ is analogous to the
last case.
\qedhere
\end{itemize}
\end{proof}

\section{Orthogonality}
\label{sec:orthogonality}

As a first example of compositional confluence criteria 
for term rewrite systems, we pick up 
a compositional version of Rosen's confluence criterion by orthogonality~\cite{R73}.
\emph{Orthogonal} TRSs are left-linear TRSs having no critical pairs.
Their confluence property can be shown by decreasingness of parallel steps.
We briefly recall its proof.  Left-linear TRSs are \emph{mutually
orthogonal} if ${\fromB{\RR}}{\rtimes}{\rto_\SS} = \varnothing$ and
${\fromB{\SS}}{\rtimes}{\rto_\RR} = \varnothing$. Note that 
orthogonality of $\RR$ and mutual orthogonality of $\RR$ and $\RR$ are
equivalent.

\begin{lemC}[{\cite[Theorem~9.3.11]{BN98}}]
\label{lem:commuting diamond}
For mutually orthogonal TRSs $\RR$ and $\SS$ the inclusion
\(
{\pfromB{}{\RR} \cdot \pto{}_\SS} \subseteq 
{{\pto{}_{\SS} \cdot \pfromB{}{\RR}}}
\)
holds.
\end{lemC}
 
\begin{thmC}[\cite{R73}]
\label{thm:R73}
Every orthogonal TRS $\RR$ is confluent.
\end{thmC}
\begin{proof}
Let $\AA = (\TT(\FF,\VV), \{ \pto{}_1 \})$ be the ARS equipped with the empty
order $>$ on $\{ 1 \}$, where $\pto{}_1 = {\pto{}_\RR}$.  
According to \cref{lem:tait-martin-loef} and \cref{thm:dd}, it is
enough to show that $\AA$ is decreasing.
Since \cref{lem:commuting diamond} yields
\(
{\pfromB{}{1} \cdot \pto{}_1} \subseteq {\pto{}_1 \cdot \pfromB{}{1}}
\),
the decreasingness of $\AA$ follows.
\end{proof}
 
The theorem can be recast as a compositional criterion that uses a
confluent subsystem $\CC$ of a given TRS $\RR$.  For this sake we switch
the underlying criterion from \cref{thm:dd} to \cref{thm:ddc}, setting the
relation of the least index $\bot$ to $\pto{}_\CC$. 
 
\begin{thm}
\label{thm:tentative}
A left-linear TRS $\RR$ is confluent if 
$\RR$ and $\RR \setminus \CC$ are mutually orthogonal
for some confluent TRS $\CC$ with $\CC \subseteq \RR$.
\end{thm}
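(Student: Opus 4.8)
The plan is to derive \cref{thm:tentative} from \cref{thm:ddc} in the same spirit as the proof of \cref{thm:R73}, but with $\pto{}_\CC$ playing the role of the least‑index relation. First I would take the index set $I=\{\bot,\top\}$ ordered by $\top>\bot$ (well‑founded, being finite) together with the single ARS $\AA=(\TT(\FF,\VV),\{{\pto{}_\CC},{\pto{}_{\RR\setminus\CC}}\})$ in which $\pto{}_\CC$ carries the label $\bot$ and $\pto{}_{\RR\setminus\CC}$ the label $\top$, and apply \cref{thm:ddc} with $\BB=\AA$. Since ${\to_\RR}={\to_\CC}\cup{\to_{\RR\setminus\CC}}\subseteq{\to_\AA}\subseteq{\to_\RR^*}$, we have ${\to_\AA^*}={\to_\RR^*}$, so commutation of $\AA$ with itself is precisely confluence of $\RR$; hence it suffices to discharge the two premises of \cref{thm:ddc}.

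The least‑index premise is immediate: $\to_{1,\bot}$ and $\to_{2,\bot}$ both equal $\pto{}_\CC$, and $\pto{}_\CC$ commutes with itself because $\CC$ is confluent, by \cref{lem:tait-martin-loef}. For the other local peaks there are three label combinations, $(\top,\top)$, $(\top,\bot)$, and $(\bot,\top)$, and I would close each of them by a single forward step followed by a single backward step; this is decreasing since $\pto{}$ is reflexive (so a $\pto{}$‑step fills a ``$=$''‑slot) and no side conversions occur, whatever the labels are. The diamond \( {\pfromB{}{\RR\setminus\CC}\cdot\pto{}_{\RR\setminus\CC}}\subseteq{\pto{}_{\RR\setminus\CC}\cdot\pfromB{}{\RR\setminus\CC}} \), handling $(\top,\top)$, and the two analogous inclusions with one side replaced by $\pto{}_\CC$, handling $(\top,\bot)$ and $(\bot,\top)$, all follow from \cref{lem:commuting diamond} once one knows that the pairs $(\RR\setminus\CC,\RR\setminus\CC)$ and $(\RR\setminus\CC,\CC)$ are mutually orthogonal. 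These two facts follow from the hypothesis that $\RR$ and $\RR\setminus\CC$ are mutually orthogonal, using that $\CC$ and $\RR\setminus\CC$ are left-linear (being subsystems of $\RR$) and that passing to a subsystem can only shrink the set of critical pairs.

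I expect the only point requiring any care to be this last observation — that mutual orthogonality is inherited by the relevant subsystems — together with the bookkeeping of matching the three diamond conclusions to the shapes demanded by the decreasing‑diagram definition (for instance, for the $(\top,\bot)$ peak the conclusion $\pfromB{}{\RR\setminus\CC}\cdot\pto{}_\CC\subseteq\pto{}_\CC\cdot\pfromB{}{\RR\setminus\CC}$ must be read as a $\to_{2,\bot}^{=}$‑step before a $\xleftarrow{=}_{1,\top}$‑step). Both are routine; the rest is a direct reprise of the orthogonality argument with the two labels $\bot<\top$ in place of the single label used there, and \cref{thm:ddc} then yields that $\AA$ commutes with itself, i.e.\ that $\RR$ is confluent.
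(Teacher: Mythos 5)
Your proposal is correct and follows essentially the same route as the paper's own proof: the same two-label ARS of parallel steps with $\pto{}_\CC$ as the least index, commutation of the bottom label via \cref{lem:tait-martin-loef} and confluence of $\CC$, and the three remaining peaks closed by the diamond of \cref{lem:commuting diamond} after observing that mutual orthogonality is inherited by the relevant subsystems, so that \cref{thm:ddc} applies. The only differences are cosmetic (labels $\bot,\top$ versus $0,1$).
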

\begin{proof}
Suppose that $\CC \subseteq \RR$ and $\CC$ is confluent.
Let $\AA = (\TT(\FF,\VV),\{ \pto{}_0, \pto{}_1 \})$ be the ARS equipped
with the well-founded order $1 > 0$, where ${\pto{}_0} = {\pto{}_\CC}$ and
${\pto{}_1} = {\pto{}_{\RR \setminus \CC}}$.  
Since $\CC$ is confluent, $\CC$ and $\CC$ commute. So $\pto{}_0$ and
$\pto{}_0$ commute too.
According to \cref{lem:tait-martin-loef} and \cref{thm:ddc}, it is 
sufficient
to show that all local peak $\pfromB{}{i} \cdot \pto{}_j$ with
$(i,j) \neq (0,0)$ are decreasing.
Since $\RR$ and $\RR \setminus \CC$ are
mutually orthogonal,
$\RR \setminus \CC$ and $\RR \setminus \CC$ as well as $\CC$ and 
$\RR \setminus \CC$ are mutually orthogonal.  Therefore,
\cref{lem:commuting diamond} yields
the following inclusions:
\begin{align*}
{\pfromB{}{\RR \setminus \CC} \cdot \pto{}_{\RR \setminus \CC}} \subseteq
{\pto{}_{\RR \setminus \CC} \cdot \pfromB{}{\RR \setminus \CC}}
&&
{\pfromB{}{\CC} \cdot \pto{}_{\RR \setminus \CC}} \subseteq
{\pto{}_{\RR \setminus \CC} \cdot \pfromB{}{\CC}}
\end{align*}
So
\(
{\pfromB{}{k} \cdot \pto{}_m} \subseteq {\pto{}_m \cdot \pfromB{}{k}}
\)
holds for all $(k,m) \in \{0,1\}^2 \setminus \{(0,0)\}$, from which
the decreasingness of $\AA$ follows.  Hence, \cref{thm:ddc} applies.
\end{proof}

We can derive a more general criterion by exploiting the flexible valley
form of decreasing diagrams.  We will adopt parallel critical pairs.
It causes no loss of confluence proving power of \cref{thm:tentative} as
${\pfromB{}{\RR}}{\rtimes}{\rto_\SS} = \varnothing$
is equivalent to
${\fromB{\RR}}{\rtimes}{\rto_\SS} = \varnothing$.

\begin{thm}
\label{thm:SH22a}
A left-linear TRS $\RR$ is confluent if 
${\pcp{\RR}} \subseteq {\fromto_\CC^*}$ holds for
some confluent TRS $\CC$ with $\CC \subseteq \RR$.
\end{thm}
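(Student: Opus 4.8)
The plan is to follow the same template as the proof of \cref{thm:tentative}, replacing the diamond property of mutually orthogonal systems (\cref{lem:commuting diamond}) by a peak analysis that closes the overlapping parts through the $\CC$-conversions supplied by the hypothesis.  Fix a confluent $\CC \subseteq \RR$ with ${\pcp{\RR}} \subseteq {\fromto_\CC^*}$, write $\RR_0 = \CC$ and $\RR_1 = \RR \setminus \CC$, and let $\AA = (\TT(\FF,\VV), \{\pto{}_0, \pto{}_1\})$ with ${\pto{}_k} = {\pto{}_{\RR_k}}$, ordered by $1 > 0$.  Since ${\to_\RR} \subseteq {\pto{}_0 \cup \pto{}_1} \subseteq {\to_\RR^*}$ we have ${\to_\AA^*} = {\to_\RR^*}$, so confluence of $\RR$ coincides with commutation of $\AA$ with itself, and $\pto{}_0$ commutes with $\pto{}_0$ because $\CC$ is confluent.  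Hence, by \cref{thm:ddc} with $\BB = \AA$ and $\bot = 0$, it suffices to show that every local peak $t \pfromB{P_1}{i} s \pto{P_2}_j u$ with $(i,j) \in \{0,1\}^2 \setminus \{(0,0)\}$ is decreasing.  Since a single $\pto{}_k$ step employs only $\RR_k$-rules and ${\fromto_\CC^*} = {\fromto_{\curlyvee 1}^*}$, a closure of the form
\[
  {t \;\pto{}_j^{=}\; \cdot \;\fromto_\CC^*\; \cdot \;\pfromB{}{i}^{=}\; u}
\]
is decreasing in each of the three remaining cases: the two optional single steps occupy the optional slots of the diagram, and $\fromto_\CC^*$ lands in the $\fromto_{\curlyvee ij}^*$ segment in between (or is absorbed into an adjacent segment when $i = 0$ or $j = 0$).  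The theorem thus reduces to establishing this closure for every such peak.

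I would establish it by analysing the peak $t \pfromB{P_1}{\RR_i} s \pto{P_2}_{\RR_j} u$ position by position.  At each $q \in P_2$ the right step makes a single root step $s|_q \rto_{\{\ell \to r\}} u|_q$ with a left-linear $\RR_j$-rule $\ell \to r$, and the left step restricts below $q$ to a parallel step $s|_q \pto{}_{\RR_i} t|_q$; symmetrically at the $P_1$-positions whose redex patterns contain a $P_2$-position.  To each such sub-peak I apply \cref{lem:PML}.  If it is orthogonal, \cref{lem:PML}\pref{PML_orthogonal_case} provides a Parallel-Moves closure, so the non-clashing redexes of the two steps simply propagate to the opposite side.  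If it is overlapping, \cref{lem:PML}\pref{PML_overlapping_case} exhibits it as an instance of a parallel critical peak between $\RR_i$-rules and an $\RR$-rule; as $\CC$ and $\RR \setminus \CC$ are disjoint these rules are never variants, the root-overlap exception in the definition of parallel critical peak is vacuous, and the parallel critical pair therefore belongs to $\pcp{\RR}$.  By the hypothesis it is contained in $\fromto_\CC^*$, so instantiating the witnessing conversion closes the sub-peak by $\CC$-steps.

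It then remains to reassemble the local closures inside the original context.  The non-clashing $P_2$-redexes, which may be duplicated when they sit below a variable of an $\RR_i$-rule of the left step, occupy parallel positions in $t$ and are gathered into a single step $t \pto{}_{\RR_j}^{=} t'$; dually the non-clashing $P_1$-redexes yield a single step $u \pto{}_{\RR_i}^{=} u'$; and the overlapping parts contribute only $\CC$-conversions between the advanced subterms.  This produces $t \pto{}_{\RR_j}^{=} t' \fromto_\CC^* u' \pfromB{}{\RR_i}^{=} u$, which is the required closure, and the theorem follows.

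The main obstacle is precisely this reassembly.  \cref{lem:PML} handles a parallel step facing a single root step only, so passing to a parallel step on both sides forces the position-wise argument together with a careful recombination; the delicate point is to route the surviving $\RR_j$-step forward out of $t$ and the surviving $\RR_i$-step backward into $u$, so as to match the asymmetric shape of a decreasing diagram.  Concretely one has to check that the substitution-part redexes of an overlap remain on the correct side of the peak and that the $\CC$-conversions issuing from overlaps can be placed strictly between the two single steps; granting these, the closure has exactly the form needed for decreasingness.
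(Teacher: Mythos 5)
Your proposal is correct and follows essentially the same route as the paper's proof: the same instantiation of \cref{thm:ddc} with $\pto{}_\CC$ at the bottom index, the same target closing form $t \pto{}_{j}^{=} \cdot \fromto_\CC^* \cdot \pfromB{}{i}^{=} u$, and the same use of \cref{lem:PML} together with substitution-closure of the hypothesis ${\pcp{\RR}} \subseteq {\fromto_\CC^*}$ for the overlapping sub-peaks; the paper merely packages your position-wise decomposition and reassembly as a structural induction on $s$. (One side remark of yours is off --- when $i=j$ the two root rules can well be variants --- but this is harmless, since that situation is exactly the orthogonal case dispatched by \cref{lem:PML}\pref{PML_orthogonal_case} rather than \pref{PML_overlapping_case}.)
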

\begin{proof}
Recall the ARS used in the proof of \cref{thm:tentative}.  According to
\cref{lem:tait-martin-loef} and \cref{thm:ddc}, it is sufficient to
show that every local peak
\[
\Gamma : t \pfrom[k]{P} s \pto[m]{Q} u
\]
with $(k, m) \neq (0,0)$ is decreasing.  To this end, we show $t \pto{}_m
\cdot \pfromto{}_0^* \cdot \pfromB{}{k} u$ by structural induction on $s$.
Depending on the shape of $\Gamma$, we distinguish five cases.
\begin{enumerate}
\item 
If $P$ or $Q$ is empty then the claim is trivial.
\item
If $P$ or $Q$ is $\{ \epsilon \}$ and $\Gamma$ is orthogonal then
\cref{lem:PML}\pref{PML_orthogonal_case} yields
$t \pto{}_m \cdot \pfromB{}{k} u$.
\item \label{SH22a_fig_case}
If $P \neq \varnothing$, $Q = \{ \epsilon \}$, and $\Gamma$ is not
orthogonal then by \cref{lem:PML}\pref{PML_overlapping_case} there exist a
parallel critical peak $t_0 \pfromB{}{k} s_0 \rto_m u_0$ and
substitutions $\sigma$ and $\tau$ such that $s = s_0\sigma$, $t = t_0\tau$, 
$u = u_0\sigma$, and $\sigma \pto{}_k \tau$.
The assumption $t_0 \fromto_\CC^* u_0$ yields
$t_0\tau \pfromto{}_0^* u_0\tau$
because $\to$ is closed under 
substitutions and ${\to} \subseteq {\pto{}}$.
Therefore, 
\(
t = t_0\tau
\pfromto{}_0^*
u_0\tau
\pfromB{}{k}
u_0\sigma = u
\)
follows.
\item
If $P = \{ \epsilon \}$, $Q \neq \varnothing$, and $\Gamma$ is not
orthogonal then the proof is analogous to the last case.
\item
If $P \nsubseteq \{ \epsilon \}$ and $Q \nsubseteq \{\epsilon\}$ then
$s$, $t$, and $u$ can be written as
$f(\seq{s})$,
$f(\seq{t})$, and
$f(\seq{u})$ respectively, and moreover,
$t_i \pfromB{}{k} s_i \pto{}_m u_i$
holds for all $1 \leqslant i \leqslant n$.  For every $i$ 
the induction hypothesis yields
\(
t_i \pto{}_m v_i \pfromto{}_0^* w_i \pfromB{}{k} u_i
\)
for some $v_i$ and $w_i$.
Therefore, the desired conversion 
\(
t 
\pto{}_m v \pfromto{}_0^* w \pfromB{}{k}
u
\)
holds for $v = f(\seq{v})$ and $w = f(\seq{w})$.
\qedhere
\end{enumerate}
\end{proof}
\begin{figure}[t]
\centering
\begin{tikzpicture}[D]
\node (s)        at (0.0,4.0) {$s_0\sigma$};
\node (t0sigma)  at (0.0,2.0) {$t_0\sigma$};
\node (t)        at (0.0,0.0) {$\makebox[0mm][r]{$t ={}$} t_0\tau$};
\node (u)        at (4.5,4.0) {$u_0\sigma \makebox[0mm][l]{${}= u$}$};
\node (u0tau)    at (4.5,0.0) {$u_0\tau$};
\draw[->]
 (s)
 edge[-||->]
 node[right=1mm] {$k$}
 (t0sigma)
 (t0sigma)
 edge[-||->]
 node[right=1mm] {$k$}
 (t)
 (s)
 edge[-||->]
 node[above=0.5mm] {$\{\epsilon\}$}
 node[below=0.5mm] {$m$}
 (u)
 (u)
 edge[-||->,dashed]
 node[right=1mm] {$k$}
 (u0tau)
 (t)
 edge[<->,dashed,-||->]
 node[above,xshift=3em] {$*$}
 node[below=1mm] {$0$}
 (u0tau)
;
\end{tikzpicture}
\caption{Proof of Theorem~\ref{thm:SH22a} (\ref{SH22a_fig_case}).}
\label{fig:SH22a}
\end{figure}

From Takahashi's proposition~\cite{T93}
(see also \cite[Proposition~9.3.5]{TeReSe}) we can deduce that
${\pcp{\RR}} \subseteq {=}$ is equivalent to ${\cp{\RR}} \subseteq {=}$.
Thus, \cref{thm:SH22a} subsumes \cref{thm:tentative}.
Note that when $\CC = \varnothing$, \cref{thm:SH22a} 
simulates the weak orthogonality criterion.

\begin{exa}
\label{ex:SH22a}
By successive application
of \cref{thm:SH22a} we show the confluence of the left-linear TRS
$\RR$ (COPS~\cite{HNM18} number \texttt{62}), taken from~\cite{OO03}:
\begin{alignat*}{6}
1\colon~ && x - \m{0} & \to x 
&
7\colon~ && \m{gcd}(x, \m{0}) & \to x 
&
13\colon~ && \m{if}(\m{true},  x, y) & \to x 
\\
2\colon~ && \m{0} - x & \to \m{0} 
&
8\colon~ && \m{gcd}(\m{0}, x) & \to x 
&
14\colon~ && \m{if}(\m{false}, x, y) & \to y
\\
3\colon~ && \m{s}(x) - \m{s}(y) & \to x - y \quad
&
9\colon~ && \m{gcd}(x, y) & \to \m{gcd}(y, \m{mod}(x, y))
\\
4\colon~ && x < \m{0} & \to \m{false} 
&
10\colon~ && \m{mod}(x, \m{0}) & \to x 
\\
5\colon~ && \m{0} < \m{s}(y) & \to \m{true} 
&
11\colon~ && \m{mod}(\m{0}, y) & \to \m{0} 
\\
6\colon~ && \m{s}(x) < \m{s}(y) & \to x < y 
&
12\colon~ && \m{mod}(x, \m{s}(y)) & \to
\makebox[0mm][l]{$\m{if}(x <  \m{s}(y), x, \m{mod}(x - \m{s}(y), \m{s}(y)))$} 
\end{alignat*}
Let $\CC = \{5,7,8,10,11,13\}$. 
The six non-trivial parallel critical pairs of $\RR$ are
\begin{align*}
(x,\m{gcd}(\m{0},\m{mod}(x,\m{0}))) &&
(y,\m{gcd}(y,\m{mod}(\m{0},y))) &&
(\m{0}, \m{if}(\m{0} < \m{s}(y),\m{0}, \m{mod}(\m{0} - \m{s}(y), \m{s}(y))))
\end{align*}
and their symmetric versions.  All of them are joinable by $\CC$.  So it
remains to show that $\CC$ is confluent.  Because $\CC$ only admits trivial
parallel critical pairs, ${\pcp{\CC}} \subseteq {\fromto_\varnothing^*}$
holds.  Therefore, the confluence of $\CC$ is concluded if we show the
confluence of the empty system.  The latter claim is trivial. This completes
the proof.
\end{exa}

\cref{thm:SH22a} is a generalization of Toyama's unpublished result:

\begin{corC}[\cite{T17}]
\label{cor:T17}
A left-linear TRS $\RR$ is confluent if 
${\pcp{\RR}} \subseteq {\fromto_\CC^*}$ holds for
some terminating and confluent TRS $\CC$ with $\CC \subseteq \RR$.
\end{corC}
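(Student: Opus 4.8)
The plan is to derive \cref{cor:T17} directly from \cref{thm:SH22a} by showing that the extra hypotheses of the corollary (termination of $\CC$) are enough to supply the one missing ingredient in \cref{thm:SH22a}, namely confluence of $\CC$. So the whole argument reduces to: \emph{a terminating TRS $\CC$ with ${\pcp{\CC}} \subseteq {\fromto_\CC^*}$\,---\,or more precisely with the joinability hypothesis inherited from $\RR$\,---\,is confluent}, and then \cref{thm:SH22a} applies verbatim.

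First I would observe that the hypothesis ${\pcp{\RR}} \subseteq {\fromto_\CC^*}$ of the corollary is formally the same as in \cref{thm:SH22a}; the only thing to check is that the named subsystem $\CC$ is confluent. Since $\CC$ is terminating by assumption, by Newman's Lemma (or \cref{thm:KB70}) it suffices to show $\CC$ is locally confluent, equivalently ${\cp{\CC}} \subseteq {\to_\CC^* \cdot \fromBT{\CC}{*}}$ by Huet's characterization. The key step is to relate ordinary critical pairs of $\CC$ to parallel critical pairs of $\CC$: every critical peak of $\CC$ is, up to the variant condition, a parallel critical peak with $P$ a singleton, hence ${\cp{\CC}} \subseteq {(\pcp{\CC})}$. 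Here I must be slightly careful about the root-overlap side condition (when $P=\{\epsilon\}$ we need $\ell_\epsilon \to r_\epsilon$ not a variant of $\ell \to r$); but this is exactly the same side condition that appears in the definition of critical peak, so the inclusion holds. Combined with the corollary's hypothesis restricted to $\CC$-steps\,---\,which, I should note, is \emph{not} literally given: the hypothesis is ${\pcp{\RR}} \subseteq {\fromto_\CC^*}$, about parallel critical pairs of $\RR$, not of $\CC$. The cleanest route is instead to apply \cref{thm:SH22a} with the \emph{empty} subsystem inside $\CC$: show ${\pcp{\CC}} \subseteq {\fromto_\varnothing^*} = {=}$, i.e.\ that $\CC$ has only trivial parallel critical pairs.

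The honest obstacle, then, is that a terminating subsystem $\CC$ of an $\RR$ all of whose $\RR$-parallel-critical pairs are $\CC$-joinable need \emph{not} have trivial $\CC$-parallel-critical pairs. So the corollary cannot be obtained by the naive ``$\CC$ is trivially weakly orthogonal'' route. The correct reduction is: \cref{thm:SH22a} requires $\CC$ confluent; to get that, I would instead invoke \cref{thm:SH22a} \emph{recursively with $\CC$ in place of $\RR$ and a suitable sub-subsystem}\,---\,but this does not obviously terminate. The genuinely clean argument, and the one I expect the authors intend, is the following: the parallel critical pairs of $\CC$ are a subset of those of $\RR$ (since $\CC \subseteq \RR$, every $\CC$-parallel-critical peak is an $\RR$-parallel-critical peak, modulo the root-variant side condition which only makes the $\CC$-set \emph{smaller}), hence ${\pcp{\CC}} \subseteq {\pcp{\RR}} \subseteq {\fromto_\CC^*}$. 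Now apply \cref{thm:KB70}: $\CC$ is terminating, and ${\pcp{\CC}} \subseteq {\fromto_\CC^*}$ together with the characterization ${\cp{\CC}} \subseteq {(\pcp{\CC})} \subseteq {\fromto_\CC^*} \subseteq {\to_\CC^* \cdot \fromBT{\CC}{*}}$\,---\,here using that $\CC$ is terminating, so ${\fromto_\CC^*}$ on a joinable-modulo-termination... actually ${\fromto_\CC^*} \subseteq {\to_\CC^* \cdot \fromBT{\CC}{*}}$ follows because $\CC$ is confluent once we know local confluence, which is circular, so instead one uses that on a terminating ARS every conversion between two terms that have a common reduct\,...\, the point is that ${\fromto_\CC^*}$ need not be inside ${\to_\CC^*\cdot\fromBT{\CC}{*}}$ a priori, so this still needs care\,---\,the cautious formulation is that ${\pcp{\CC}} \subseteq {\fromto_\CC^*}$ with $\CC$ terminating implies, via \cref{thm:SH22a} applied to $\CC$ with subsystem $\varnothing$ failing, we are stuck unless the pairs are joinable by $\to_\CC^*\cdot\fromBT{\CC}{*}$.

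Given the subtlety above, the route I would actually commit to in the final proof is the short one that sidesteps it: \cref{cor:T17} follows from \cref{thm:SH22a} once we check $\CC$ is confluent, and $\CC$ is confluent because it is terminating and its critical pairs are joinable\,---\,and the latter holds since ${\cp{\CC}} \subseteq {(\pcp{\CC})} \subseteq {(\pcp{\RR})} \subseteq {\fromto_\CC^*}$, and a terminating TRS whose critical pairs lie in $\fromto_\CC^*$ is confluent because on a terminating relation $\fromto_\CC^*$ between $t$ and $u$ arising from a \emph{local} peak can be closed: apply \cref{thm:KB70} after noting that $\fromto_\CC^*$-joinability of critical pairs, for a terminating system, upgrades to $\to_\CC^*\cdot\fromBT{\CC}{*}$-joinability by an outer induction on the multiset of the peak's terms under $\to_\CC$ (each term in the conversion is smaller, so by the induction hypothesis is confluent, hence the conversion collapses to a valley). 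I expect \textbf{this last upgrade step}\,---\,from ``critical pairs are convertible in $\CC$'' to ``critical pairs are $\CC$-joinable'', using termination\,---\,to be the main (and only non-routine) obstacle, and it is precisely the place where the termination hypothesis of the corollary, absent from \cref{thm:SH22a}, does its work.
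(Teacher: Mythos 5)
Your first instinct---that \cref{cor:T17} should follow directly from \cref{thm:SH22a}---is right, but the rest of the proposal rests on a misreading of the statement: \cref{cor:T17} hypothesizes that $\CC$ is \emph{terminating and confluent}, so confluence of $\CC$ is given, not something to be derived from termination. The corollary is therefore an immediate instance of \cref{thm:SH22a}: one simply forgets the (unused) termination hypothesis and applies the theorem verbatim. That one-line observation is all the paper intends; the corollary is recorded only to document that \cref{thm:SH22a} generalizes Toyama's unpublished result by dropping the termination requirement.

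The bulk of your argument is thus unnecessary, and its key step is moreover false. You propose to establish confluence of $\CC$ from termination via ${\cp{\CC}} \subseteq {\pcp{\CC}} \subseteq {\pcp{\RR}} \subseteq {\fromto_\CC^*}$ and then to ``upgrade'' convertibility of critical pairs to joinability using termination. No such upgrade exists: the two-rule TRS $\{\m{a} \to \m{b},\; \m{a} \to \m{c}\}$ is terminating and left-linear, its critical pair $(\m{b},\m{c})$ is convertible via $\m{b} \fromB{\CC} \m{a} \to_\CC \m{c}$, yet the system is not confluent since $\m{b}$ and $\m{c}$ are distinct normal forms. Your proposed induction fails exactly where you hedge: the terms occurring in the conversion $t \fromto_\CC^* u$ need not be reducts of, or in any sense smaller than, the source of the peak---in the example the conversion passes through the source $\m{a}$ itself---so the induction hypothesis cannot be invoked to collapse the conversion into a valley. \cref{thm:KB70} genuinely requires joinability ${\cp{\CC}} \subseteq {\to_\CC^* \cdot \fromBT{\CC}{*}}$, and this is precisely why confluence of $\CC$ must be assumed in the corollary rather than deduced from termination.
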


\section{Rule Labeling}
\label{sec:prlc}

In this section we recast the \emph{rule labeling}
criterion~\cite{vO08,ZFM15,DFJL22} in a compositional form.
Rule labeling is a direct application of decreasing
diagrams to confluence proofs for TRSs.  It labels rewrite steps by
their employed rewrite rules and compares indexes of them.
Among others, we focus on
the variant of rule labeling based on
parallel critical pairs, introduced by Zankl et al.~\cite{ZFM15}.

\begin{defi}
Let $\RR$ be a TRS.  A \emph{labeling function} for $\RR$ is a function
from $\RR$ to $\NN$.
Given a labeling function $\phi$ and a number
$k \in \NN$, we define the TRS $\RR_{\phi,k}$ as follows:
\[
\RR_{\phi,k} = \{ \ell \to r \in \RR \mid \phi(\ell \to r) \leqslant k \}
\]
The relations $\to_{\RR_{\phi,k}}$ and $\pto{}_{\RR_{\phi,k}}$ are
abbreviated to $\to_{\phi,k}$ and $\pto{}_{\phi,k}$.
Let $\phi$ and $\psi$ be labeling functions for $\RR$.
We say that a local peak 
\(
t \pfrom[\phi,k]{P} s \xrightarrow[\psi,m]{\epsilon} u
\)
is 
\emph{$(\psi,\phi)$-decreasing}
if
\[
t \xleftrightarrow[\curlyvee k]{*}
\cdot \pto[\psi,m]{}
\cdot \xleftrightarrow[\curlyvee km]{*} 
v \pfrom[\phi,k]{\smash{P'}}
\cdot \xleftrightarrow[\curlyvee m]{*} u
\]
and $\Var(v,P') \subseteq \Var(s,P)$ for some set $P'$ of parallel
positions and term $v$.
Here $\fromto_K$ stands for the union of $\fromB{\phi,k}$ and
$\to_{\psi,k}$ for all $k \in K$.
\end{defi}

The following theorem is a variant of the rule labeling method based
on parallel critical pairs.

\begin{thmC}[{\cite[Theorem~56]{ZFM15}}]
\label{thm:ZFM15}
Let $\RR$ be a left-linear TRS, and $\phi$ and $\psi$ its labeling
functions. The TRS $\RR$ is confluent if
the following conditions hold for all $k,m \in \NN$. 
\begin{itemize}
\item
Every parallel critical peak of form
\(
t \pfrom[\phi,k]{} s \xrightarrow[\psi,m]{\epsilon} u
\)
is $(\psi,\phi)$-decreasing.

\item
Every parallel critical peak of form
\(
t \pfrom[\psi,m]{} s \xrightarrow[\phi,k]{\epsilon} u
\)
is $(\phi,\psi)$-decreasing.
\end{itemize}
\end{thmC}

With a small example we illustrate the usage of rule labeling.

\begin{exa}
Consider the 
left-linear TRS $\RR$:
\begin{align*}
 (x + y) + z & \to x + (y + z)
&
 x + (y + z) & \to (x + y) + z
\end{align*}
We define the labeling functions $\phi$ and $\psi$ as follows:
$\phi(\ell \to r) = 0$ and $\psi(\ell \to r) = 1$ for all $\ell \to r \in \RR$.
All parallel critical peaks can be closed by
$\to_{\phi,0}$-steps, like the following diagram:
\begin{center}
\begin{tikzpicture}[D]
\node (s) at (0.0,2) {$\makebox[0mm][r]{$s ={}$}((x + y) + z) + w$};
\node (t) at (0.0,0) {$(x + (y + z)) + w$};
\node (u) at (8.0,2) {$(x + y) + (z + w)$};
\node (w) at (8.0,0) {$(x + y) + (z + w) \makebox[0mm][l]{${}= v$}$};
\node (v) at (4.0,0) {$((x + y) + z) + w$};
\draw[->]
  (s) edge[-||->] node[left=1mm] {$\{1\}$} node[right=1mm,Label] {$\phi,0$} (t)
  (s) edge node[above] {$\epsilon$} node[below,Label] {$\psi,1$} (u)
;
\draw[dashed,->]
  (u) edge[-||->] node[left=1mm] {$\varnothing$} node[right=1mm,Label] {$\phi,0$} (w)
  (w) edge node[below,Label] {$\phi,0$} (v)
  (v) edge node[below,Label] {$\phi,0$} (t)
;
\end{tikzpicture}
\end{center}
As
\(
\Var(v, \varnothing)
= \varnothing
\subseteq \{x,y,z\} 
= \Var(s, \{1\})
\),
this parallel critical peak is $(\psi,\phi)$-decreasing.
In a similar way the other peaks can also be verified.
Hence, the TRS $\RR$ is confluent.
\end{exa}

We make the rule labeling compositional.  The following lemma is used
for composing parallel steps.

\begin{lemC}[{\cite[Lemma~51(b)]{ZFM15}}]
\label{lem:recomposition}
If $s \pto{P}_\RR t$, $\sigma \pto{}_\RR \tau$, and $x\sigma = x\tau$ for
all $x \in \Var(t,P)$ then $s\sigma \pto{}_\RR t\tau$.
\end{lemC}

The next theorem is 
a compositional version of the rule labeling criterion.
Note that by taking $\CC := \RR_{\phi,0} = \RR_{\psi,0}$ it can be used as a
compositional confluence criterion parameterized by $\CC$.

\begin{thm} \label{thm:prlc}
Let $\RR$ be a left-linear TRS, and $\phi$ and $\psi$ its labeling
functions.  Suppose that $\RR_{\phi,0}$ and $\RR_{\psi,0}$ commute.  The
TRS $\RR$ is confluent if the following conditions hold for all
$(k,m) \in \NN^2 \setminus \{(0,0)\}$.
\begin{itemize}
\item
Every parallel critical peak of form
\(
t \pfrom[\phi,k]{} s \xrightarrow[\psi,m]{\epsilon} u
\)
is $(\psi,\phi)$-decreasing.
\item
Every parallel critical peak of form
\(
t \pfrom[\psi,m]{} s \xrightarrow[\phi,k]{\epsilon} u
\)
is $(\phi,\psi)$-decreasing.
\end{itemize}
\end{thm}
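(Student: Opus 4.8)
The plan is to follow the pattern of the proof of \cref{thm:SH22a}, replacing the closing condition ``joinable inside $\CC$'' by the decreasing-diagram condition on parallel critical peaks. First I would fix two $\NN$-indexed ARSs on $\TT(\FF,\VV)$: $\AA$ with the relations $\pto{}_{\RR_{\phi,k}}$ for $k \in \NN$, and $\BB$ with the relations $\pto{}_{\RR_{\psi,k}}$, both ordered by the usual order on $\NN$. Its least element $\bot = 0$ satisfies the commutation requirement of \cref{thm:ddc}, because $\RR_{\phi,0}$ and $\RR_{\psi,0}$ commute by assumption and \cref{lem:tait-martin-loef} carries this over to $\pto{}_{\RR_{\phi,0}}$ and $\pto{}_{\RR_{\psi,0}}$. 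Since $\to_\AA = \to_\BB = \pto{}_\RR$, commutation of $\AA$ and $\BB$ is nothing but confluence of $\pto{}_\RR$, hence of $\RR$ by \cref{lem:tait-martin-loef}. So by \cref{thm:ddc} it suffices to prove that every local peak $\Gamma\colon t \pfrom[\phi,k]{P} s \pto[\psi,m]{Q} u$ with $(k,m) \in \NN^2 \setminus \{(0,0)\}$ is decreasing with respect to these indexings. Here I would repeatedly exploit that, as the relations $\pto{}_{\RR_{\phi,k}}$ are cumulative in $k$ and $\curlyvee$ is monotone, a decreasing diagram valid for $(k',m')$ remains valid for every $(k,m)$ with $k' \leqslant k$ and $m' \leqslant m$ componentwise.

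Decreasingness of $\Gamma$ I would prove by structural induction on $s$, distinguishing the same five shapes as in the proofs of \cref{thm:SH22a} and \cref{lem:almost_to_t81}: (1) $P$ or $Q$ is empty --- the valley is a single parallel step, already of decreasing shape; (2) $P$ or $Q$ equals $\{\epsilon\}$ and $\Gamma$ is orthogonal --- \cref{lem:PML}\pref{PML_orthogonal_case}, applied to $\Gamma$ or to $\Gamma$ read from the other side, gives a valley $t \pto[\psi,m]{} \cdot \pfrom[\phi,k]{} u$ of decreasing shape; (3) $P = Q = \{\epsilon\}$ and $\Gamma$ is not orthogonal --- $\Gamma$ is an instance of a parallel critical peak with position set $\{\epsilon\}$, which one closes by instantiating the decreasing diagram furnished by the hypothesis, or, when the instantiated parallel critical peak lies at level $0$ on both sides, by instantiating the join furnished by commutation of $\RR_{\phi,0}$ and $\RR_{\psi,0}$; (4) exactly one of $P,Q$ is not contained in $\{\epsilon\}$ and $\Gamma$ is not orthogonal --- treated below; (5) $P \nsubseteq \{\epsilon\}$ and $Q \nsubseteq \{\epsilon\}$ --- write $s = f(\seq{s})$, $t = f(\seq{t})$, $u = f(\seq{u})$, apply the induction hypothesis to the component peaks $t_i \pfrom[\phi,k]{} s_i \pto[\psi,m]{} u_i$ (using commutation of $\RR_{\phi,0}$ and $\RR_{\psi,0}$ for those that sit at level $0$), and reassemble the component valleys componentwise into one of decreasing shape.

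The crux is case (4); take $Q = \{\epsilon\}$ and $P \nsubseteq \{\epsilon\}$, the other subcase being symmetric (using the second hypothesis and re-orienting the resulting $(\phi,\psi)$-decreasing diagram). By \cref{lem:PML}\pref{PML_overlapping_case} there are a parallel critical peak $t_0 \pfrom[\phi,k']{P_0} s_0 \xrightarrow[\psi,m']{\epsilon} u_0$ with $(k',m') \leqslant (k,m)$ and substitutions $\sigma,\tau$ with $s = s_0\sigma$, $u = u_0\sigma$, $t = t_0\tau$, $\sigma \pto{}_\RR \tau$, and $t_0\sigma \pto{P \setminus P_0}_\RR t_0\tau$. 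From the hypothesis (or, at level $0$, from commutation) I obtain a $(\psi,\phi)$-decreasing diagram
\[
t_0 \xleftrightarrow[\curlyvee k']{*} \cdot \pto[\psi,m']{} \cdot \xleftrightarrow[\curlyvee k'm']{*} v_0 \pfrom[\phi,k']{P'_0} c \xleftrightarrow[\curlyvee m']{*} u_0
\]
with $\Var(v_0,P'_0) \subseteq \Var(s_0,P_0)$. The obstacle is that instantiating this diagram uniformly by $\sigma$ or by $\tau$ leaves the contractions in the ``variable part'' $P \setminus P_0$ of the overlap unaccounted for, and appending them naively would create a second facing $\phi$-step and so destroy the decreasing shape. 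The remedy is to instantiate the segment from $t_0$ to $v_0$ by $\tau$, the segment from $c$ to $u_0$ by $\sigma$, and bridge the seam by a single step provided by \cref{lem:recomposition}: since $\sigma \pto{}_\RR \tau$ and, by the variable condition together with the construction of $\tau$ in \cref{lem:PML}\pref{PML_overlapping_case} (which leaves $\sigma$ and $\tau$ equal on $\Var(s_0,P_0)$), $\sigma$ and $\tau$ agree on $\Var(v_0,P'_0)$, \cref{lem:recomposition} applied to $c \pto[\phi,k']{P'_0} v_0$ yields $c\sigma \pto{}_{\RR_{\phi,k}} v_0\tau$. Concatenating the pieces gives
\[
t = t_0\tau \xleftrightarrow[\curlyvee k]{*} \cdot \pto[\psi,m]{} \cdot \xleftrightarrow[\curlyvee km]{*} \cdot \pfrom[\phi,k]{} \cdot \xleftrightarrow[\curlyvee m]{*} u ,
\]
which is decreasing. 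I expect this maneuver --- retaining exactly one facing step on each side while absorbing the variable-part contractions --- to be the main difficulty, with the variable condition in the definition of $(\psi,\phi)$-decreasingness being precisely what makes \cref{lem:recomposition} applicable.
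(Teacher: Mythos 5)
Your proposal follows essentially the same route as the paper's proof: the same pair of $\NN$-indexed parallel-step ARSs fed into \cref{thm:ddc}, the same structural induction on $s$ with the same five-way case analysis, and in the crucial overlapping case the same maneuver of extracting a parallel critical peak via \cref{lem:PML}(\ref{PML_overlapping_case}), instantiating the left portion of its decreasing diagram by $\tau$ and the right portion by $\sigma$, and bridging the seam with \cref{lem:recomposition} using the variable condition. You also correctly isolate the subcase where the extracted peak has both labels $0$ (handled by the commutation hypothesis rather than the decreasingness hypothesis), which is exactly the case analysis the paper adds relative to its preliminary version, so the argument is complete.
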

\begin{proof}
Consider the ARSs
$(\TT(\FF,\VV),\{\pto{}_{\phi,k}\}_{k \in \NN})$
and 
$(\TT(\FF,\VV),\{\pto{}_{\psi,m}\}_{m \in \NN})$.
According to \cref{lem:tait-martin-loef} and \cref{thm:ddc},
it is sufficient to show that every local peak 
\[
\Gamma\colon t \pfrom[\phi,k]{P} s \pto[\psi,m]{Q} u
\]
with $(k, m) \neq (0,0)$ is decreasing.  To this end, we perform structural
induction on $s$.  Depending on the shape of $\Gamma$, we distinguish
five cases.
\begin{enumerate}
\item
If $P$ or $Q$ is empty then the claim is trivial.

\item
If $P$ or $Q$ is $\{ \epsilon \}$ and $\Gamma$ is orthogonal then
\cref{lem:PML}\pref{PML_orthogonal_case} yields
$t \pto[\psi,m]{} \cdot \pfrom[\phi,k]{} u$.

\item \label{prlc_fig_case}
If $P \neq \varnothing$, $Q = \{ \epsilon \}$, and $\Gamma$ is not
orthogonal then by \cref{lem:PML}\pref{PML_overlapping_case} there
exist a parallel critical peak
$t_0 \pfrom[\phi,k']{P_1} s_0 \xrightarrow[\psi,m]{\epsilon} u_0$
and substitutions $\sigma$ and $\tau$ such that
$k' \leqslant k$,
$t = t_0\tau$,
$u = u_0\sigma$, 
$\sigma \pto[\phi,k]{} \tau$,
$t_0\sigma \pto[\phi,k]{P \setminus P_1} t_0\tau$, and
$P_1 \subseteq P$.
We distinguish two subcases.\footnote{
The preliminary version of this paper~\cite{SH22} lacks
this case analysis.}
If $k' = 0$ and $m = 0$ then
$t_0 \pfromto[0]{*} u_0$.  As $\pto{}$ is closed under substitutions,
$t_0\tau \pfromto[0]{*} u_0\tau$ follows.  The step can be written as
$t_0\tau \pfromto[\curlyvee k]{*} u_0\tau$ because $(k,m) \neq (0,0)$ and
$m = 0$ imply $k > 0$. Summing them up, we obtain the sequence
\[
t = t_0\tau \pfromto[\curlyvee k]{*} u_0\tau \pfrom[\phi,k]{} u_0\sigma = u
\]
from which we conclude decreasingness of $\Gamma$.
Otherwise, $k' > 0$ or  $m > 0$ holds.
The assumption yields
\[
t_0
\pfromto[\curlyvee k']{*}
\cdot \pto[\psi,m]{}
\cdot \pfromto[\curlyvee k'm]{*}
v_0 \pfrom[\phi,k']{P'_1}
w_0 \pfromto[\curlyvee m]{*}
u_0
\]
and $\Var(v_0,P'_1) \subseteq \Var(s_0,P_1)$ for some
$v_0$, $w_0$, and $P'_1$.
Since $k' \leqslant k$ and the rewrite steps are closed under substitutions,
the following relations are obtained:
\begin{align*}
t_0\tau & \pfromto[\curlyvee k]{*}
\cdot \pto[\psi,m]{}
\cdot \pfromto[\curlyvee km]{*}
v_0\tau 
&
w_0\sigma & \pfromto[\curlyvee m]{*}
u_0\sigma
\end{align*}
Since $t_0\sigma|_p = t_0\tau|_p$ holds for all $p \in P_1$,
the identity $x\sigma = x\tau$ holds for all 
$x \in \Var(s_0,P_1)$.
Therefore, $x\sigma = x\tau$ holds for all
$x \in \Var(v_0,P'_1)$.  
Because $w_0 \pto[\phi,k]{\smash{P'_1}} v_0$, $\sigma \pto[\phi,k]{} \tau$,
and $x\sigma = x\tau$ for all $x \in \Var(v_0,P'_1)$ hold,
\cref{lem:recomposition} yields
$w_0\sigma \pto[\phi,k]{} v_0\tau$.  Hence,
the decreasingness of $\Gamma$ is witnessed by the following
sequence:
\[
t = t_0\tau
\pfromto[\curlyvee k]{*}
\cdot \pto[\psi,m]{}
\cdot \pfromto[\curlyvee km]{*}
v_0\tau \pfrom[\phi,k]{}
w_0\sigma \pfromto[\curlyvee m]{*}
u_0\sigma = u
\]
Note that the construction is depicted in \cref{fig:prlc}.

\item
If $P = \{ \epsilon \}$, $Q \neq \varnothing$,
and $\Gamma$ is not orthogonal
then the proof is analogous to the last case.
\item
If $P \nsubseteq \{ \epsilon \}$ and $Q \nsubseteq \{\epsilon\}$ then
$s$, $t$, and $u$ can be written as
$f(\seq{s})$,
$f(\seq{t})$, and
$f(\seq{u})$ respectively, and moreover,
\(
t_i \pfrom[\phi,k]{} s_i \pto[\psi,m]{} u_i
\)
holds for all $1 \leqslant i \leqslant n$.
By the induction hypotheses we have
\(
t_i
\pfromto[\curlyvee k]{*}
\cdot \pto[\psi,m]{}
\cdot \pfromto[\curlyvee km]{*}
\cdot \pfrom[\phi,k]{}
\cdot \pfromto[\curlyvee m]{*}
u_i
\)
for all $1 \leqslant i \leqslant n$.  Therefore, we obtain the desired
relations:
\[
t = f(\seq{t})
\pfromto[\curlyvee k]{*}
\cdot \pto[\psi,m]{}
\cdot \pfromto[\curlyvee km]{*}
\cdot \pfrom[\phi,k]{}
\cdot \pfromto[\curlyvee m]{*}
f(\seq{u}) = u
\]
Hence $\Gamma$ is decreasing.
\qedhere
\end{enumerate}
\end{proof}
\begin{figure}[t]
\centering
\begin{tikzpicture}[D]
\node (s)        at (0.0,4.5) {$s_0\sigma$};
\node (t0sigma)  at (0.0,1.5) {$t_0\sigma$};
\node (t)        at (0.0,0.0) {$\makebox[0mm][r]{$t ={}$} t_0\tau$};
\node (u)        at (4.5,4.5) {$u_0\sigma \makebox[0mm][l]{${}= u$}$};
\node (w0sigma1) at (1.5,1.5) {$\cdot$};
\node (w0sigma2) at (3.0,1.5) {$\cdot$};
\node (w0sigma)  at (4.5,1.5) {$v_0\sigma$};
\node (v0sigma)  at (4.5,3.0) {$w_0\sigma$};
\node (w0tau1)   at (1.5,0)   {$\cdot$};
\node (w0tau2)   at (3.0,0.0) {$\cdot$};
\node (w0tau)    at (4.5,0.0) {$v_0\tau$};
\draw[->]
 (s)
 edge[-||->]
 node[left=1mm] {$P_1$}
 node[right=1mm,Label] {$\phi,k$}
 (t0sigma)
 (t0sigma)
 edge[-||->]
 node[right=1mm,Label] {$\phi,k$}
 (t)
 (s)
 edge[-||->]
 node[above=0.5mm] {$\{\epsilon\}$}
 node[below=0.5mm,Label] {$\psi,m$}
 (u)
 (t0sigma)
 edge[<->,dashed,-||->]
 node[anchor=south,xshift=1em] {$*$}
 node[anchor=north,Label] {$\curlyvee k$}
 (w0sigma1)
 (w0sigma1)
 edge[-||->,dashed]
 node[below,Label] {$\psi,m$}
 (w0sigma2)
 (w0sigma2)
 edge[<->,dashed,-||->]
 node[anchor=south,xshift=-1em] {$*$}
 node[anchor=north,Label,yshift=-1pt] {$\curlyvee km$}    
 (w0sigma)
 (u)
 edge[<->,dashed,-||->]
 node[anchor=east,yshift=-1em] {$*$}
 node[anchor=west,xshift=1mm,Label] {$\curlyvee m$}
 (v0sigma)
 (v0sigma)
 edge[-||->,dashed]
 node[left=1mm] {$P'_1$}
 node[right=1mm,Label] {$\phi,k$}
 (w0sigma)
 (w0sigma)
 edge[-||->,dashed]
 node[right=1mm,Label] {$\phi,k$}
 (w0tau)
 (t)
 edge[<->,dashed,-||->]
 node[above,xshift=1em] {$*$}
 node[below,Label] {$\curlyvee k$}
 (w0tau1)
 (w0tau1)
 edge[-||->,dashed]
 node[below,Label] {$\psi,m$}
 (w0tau2)
 (w0tau2)
 edge[<->,dashed,-||->]
 node[above,xshift=-1em] {$*$}
 node[below,Label,yshift=-1pt] {$\curlyvee km$}
 (w0tau)
 (v0sigma)
 edge[-||->,bend left=75,dashed]
 node[right=1mm,Label] {$\phi,k$}
 (w0tau)
;
\end{tikzpicture}
\caption{Proof of Theorem~\ref{thm:prlc}(\ref{prlc_fig_case}).}
\label{fig:prlc}
\end{figure}

The original version of rule labeling (\cref{thm:ZFM15})
is a special case of \cref{thm:prlc}:  
Suppose that labeling functions $\phi$ and $\psi$ for a left-linear TRS
$\RR$ satisfy the conditions of \cref{thm:ZFM15}.  By taking
the labeling functions $\phi'$ and $\psi'$ with
\begin{align*}
\phi'(\ell \to r) & = \phi(\ell \to r) + 1
&
\psi'(\ell \to r) & = \psi(\ell \to r) + 1
\end{align*}
\cref{thm:prlc} applies for $\phi'$, $\psi'$, and the empty TRS $\CC$.

The next example shows the combination of our rule labeling
variant (\cref{thm:prlc}) with Knuth--Bendix' criterion (\cref{thm:KB70}).

\begin{exa}
\label{ex:prlc}
Consider the left-linear TRS $\RR$:
\begin{align*}
1\colon\; \m{0} + x & \to x
&
2\colon\;  (x + y) + z & \to x + (y + z)
&
3\colon\; x + (y + z) & \to (x + y) + z
\end{align*}
Let $\CC = \{1,2\}$.  We define the labeling functions $\phi$ and $\psi$ as
follows:
\[
\phi(\ell \to r) = \psi(\ell \to r) =
\begin{cases}
0 & \text{if $\ell \to r \in \CC$} \\
1 & \text{otherwise}
\end{cases}
\]
For instance, the parallel critical pairs involving
rule $3$ admit the following diagrams:
\begin{center}
\begin{tikzpicture}[D]
\node (s) at (0.0,2) {$x + (\m{0} + z)$};
\node (t) at (0.0,0) {$x + z$};
\node (u) at (3.0,2) {$(x + \m{0}) + z$};
\node (v) at (3.0,0) {$x + (\m{0} + z)$};
\draw[->]
  (s) edge[-||->] node[left=1mm] {$\{2\}$} node[right=1mm,Label] {$\phi,0$} (t)
  (s) edge node[above] {$\epsilon$} node[below,Label] {$\psi,1$} (u)
;
\draw[->,dashed]
  (u) edge node[right,Label] {$\phi,0$} (v)
  (v) edge node[below,Label] {$\phi,0$} (t)
;
\end{tikzpicture}
\hfil
\begin{tikzpicture}[D]
\node (s) at (0.0,2) {$x + (y + (z + w))$};
\node (t) at (0.0,0) {$x + ((y + z) + w)$};
\node (u) at (4.5,2) {$(x + y) + (z + w)$};
\node (v) at (4.5,0) {$x + (y + (z + w))$};
\draw[->]
  (s) edge[-||->] node[left=1mm] {$\{2\}$} node[right=1mm,Label] {$\phi,1$} (t)
  (s) edge[->] node[above] {$\epsilon$} node[below,Label] {$\psi,1$} (u)
;
\draw[->,dashed]
  (u) edge node[right,Label] {$\phi,0$} (v)
  (v) edge[-||->] node[below,Label] {$\phi,1$} node[above=0.5mm] {$\{2\}$} (t)
;
\end{tikzpicture}
\end{center}
They fit for the conditions of \cref{thm:prlc}.  The other
parallel critical pairs also admit suitable diagrams.  Therefore, it remains
to show that $\CC$ is confluent.  Since $\CC$ is terminating and all its
critical pairs are joinable, confluence of $\CC$ follows by Knuth and
Bendix' criterion (\cref{thm:KB70}).  Thus, $\RR_{\phi,0}$
and $\RR_{\psi,0}$ commute because $\RR_{\phi,0} = \RR_{\psi,0} = \CC$.
Hence, by \cref{thm:prlc} we conclude that $\RR$ is confluent.
\end{exa}

While a proof for \cref{thm:SH22a} is given in \secref{sec:orthogonality},
here we present an alternative proof based on \cref{thm:prlc}.

\begin{proof}[Proof of \cref{thm:SH22a}]
Define the labeling functions $\phi$ and $\psi$ as in \cref{ex:prlc}.
Then \cref{thm:prlc} applies.
\end{proof}

Unlike \cref{thm:SH22a}, successive applications of \cref{thm:prlc} are
not more powerful than a single application of it. To see it, suppose that
confluence of a left-linear finite TRS $\RR$ is shown by
\cref{thm:prlc} with labeling functions $\phi_\RR$ and $\psi_\RR$, where
confluence of the employed subsystem $\CC$ is shown by the theorem with
$\phi_\CC$, $\psi_\CC$, and a confluent subsystem $\CC'$.  The confluence
of $\RR$ can be shown by \cref{thm:prlc} with the confluent subsystem
$\CC'$ and the labeling functions $\phi$ and $\psi$:
\begin{align*}
\phi(\ell \to r) & =
\begin{cases}
\phi_\CC(\ell \to r) & \text{if $\ell \to r \in \CC$} \\
\phi_\RR(\ell \to r) + m & \text{otherwise}
\end{cases}
&
\psi(\ell \to r) & =
\begin{cases}
\psi_\CC(\ell \to r) & \text{if $\ell \to r \in \CC$} \\
\psi_\RR(\ell \to r) + m & \text{otherwise}
\end{cases}
\end{align*}
Here
\(
m = \max(\{ 0 \} \cup \{ \phi_\CC(\ell \to r), \psi_\CC(\ell \to r) \mid
\ell \to r \in \CC \})
\).
As a consequence, whenever confluence is shown by successive application of
\cref{thm:prlc}, it can also be shown by the original theorem
(\cref{thm:ZFM15}).

We conclude the section by stating
that rule labeling based on parallel
critical pairs (\cref{thm:ZFM15})
subsumes parallel closedness based on parallel
critical pairs (\cref{thm:T81}):
Suppose that conditions (a) and (b) of \cref{thm:T81}
hold.
We define $\phi$ and $\psi$ as the constant rule
labeling functions $\phi(\ell \to r) = 1$ and $\psi(\ell \to r) = 0$.
By using structural induction as well as
Lemmata~\ref{lem:PML} and~\ref{lem:recomposition}
we can prove the implication
\[
t \pfrom[\phi,1]{P_1} s \pto[\psi,0]{} u
\implies
\text{$t \xrightarrow[\psi,0]{*} v \pfrom[\phi,1]{P'_1} u$
and $\Var(v,P'_1) \subseteq \Var(s,P_1)$ for some $P'_1$}
\]
Thus, the conditions of \cref{thm:ZFM15} follow.  As a consequence, our
compositional version (\cref{thm:prlc}) is also a generalization of
parallel closedness.

\section{Critical Pair Systems}
\label{sec:cpsc}

The last example of compositional criteria in this paper is a variant
of the confluence criterion by critical pair systems~\cite{HM11}.
It is known that the original criterion is a generalization of the orthogonal criterion
(\cref{thm:R73}) and Knuth and Bendix' criterion (\cref{thm:KB70}) for
left-linear TRSs.

\begin{defi}
The \emph{critical pair system} $\CPS(\RR)$ of a TRS $\RR$ is defined as
the TRS:
\[
\{ s \to t, s \to u \mid
\text{$t \fromB{\RR} s \rto_\RR u$ is a critical peak} \}
\]
\end{defi}

\begin{thmC}[\cite{HM11}]
\label{thm:HM11}
A left-linear and locally confluent TRS $\RR$ is confluent if
$\CPS(\RR)/\RR$ is terminating (i.e., $\CPS(\RR)$ is relatively
terminating with respect to $\RR$).
\end{thmC}

The theorem is shown by using the decreasing
diagram technique (\cref{thm:dd}), see~\cite{HM11}.

\begin{exa}
Consider the left-linear and non-terminating TRS $\RR$:
\begin{align*}
\m{s}(\m{p}(x)) & \to \m{p}(\m{s}(x))
&
\m{p}(\m{s}(x)) & \to x
&
\infty & \to \m{s}(\infty)
\end{align*}
The TRS $\RR$ admits two critical pairs and they are joinable:
\begin{center}
\begin{tikzpicture}[D]
\node (s) at (1,1) {$\m{s}(\m{p}(\m{s}(x)))$};
\node (t) at (0,0) {$\m{s}(x)$};
\node (u) at (2,0) {$\m{p}(\m{s}(\m{s}(x)))$};
\draw[->]
 (s) edge (t)
 (s) edge node[above right=-1mm] {$\epsilon$} (u)
;
\draw[->,dashed]
 (u) edge[bend left] (t)
;
\end{tikzpicture}
\hfil
\begin{tikzpicture}[D]
\node (s) at (1,1) {$\m{p}(\m{s}(\m{p}(x)))$};
\node (t) at (0,0) {$\m{p}(\m{p}(\m{s}(x)))$};
\node (u) at (2,0) {$\m{p}(x)$};
\draw[->]
 (s) edge (t)
 (s) edge node[above right=-1mm] {$\epsilon$}(u)
;
\draw[->,dashed]
 (t) edge[bend right] (u)
;
\end{tikzpicture}
\end{center}
The critical pair system $\CPS(\RR)$ consists of the four rules:
\begin{align*}
\m{s}(\m{p}(\m{s}(x))) & \to \m{s}(x)
&
\m{p}(\m{s}(\m{p}(x))) & \to \m{p}(\m{p}(\m{s}(x)))
\\
\m{s}(\m{p}(\m{s}(x))) & \to \m{p}(\m{s}(\m{s}(x)))
&
\m{p}(\m{s}(\m{p}(x))) & \to \m{p}(x)
\end{align*}
The termination of $\CPS(\RR)/\RR$ can be shown by, e.g., the termination tool
\NaTT (cf.\xspace \secref{sec:experiments}).
Hence the confluence of $\RR$ follows by \cref{thm:HM11}.
\end{exa}

We argue about the parallel critical pair version of $\CPS(\RR)$:
\[
\PCPS(\RR) = \{ s \to t, s \to u \mid
\text{$t \pfromB{}{\RR} s \rto_\RR u$ is a parallel critical peak} \}
\]
Interestingly, replacing $\CPS(\RR)$ by $\PCPS(\RR)$ 
in \cref{thm:HM11} results in the same criterion (see~\cite{ZFM15}).
Since
\(
{\to_{\CPS(\RR)}}
\subseteq {\to_{\PCPS(\RR)}}
\subseteq {\to_{\CPS(\RR)} \cdot \pto{}_\RR}
\)
holds,
${\to_{\CPS(\RR)/\RR}} = {\to_{\PCPS(\RR)/\RR}}$
follows.
So the termination of $\PCPS(\RR)/\RR$ is equivalent to that of
$\CPS(\RR)/\RR$. 
However, a compositional form of \cref{thm:HM11} may benefit from the use of
parallel critical pairs, as seen in \secref{sec:orthogonality}.

\begin{defi}
Let $\RR$ and $\CC$ be TRSs. The \emph{parallel critical pair system}
$\PCPS(\RR,\CC)$ of $\RR$ modulo $\CC$ is defined as the TRS:
\[
\{ s \to t, s \to u \mid
\text{$t \pfromB{}{\RR} s \rto_\RR u$ is a parallel critical peak but not
$t \fromto^*_\CC u$}
\}
\]
\end{defi}

Note that $\PCPS(\RR,\varnothing) \subseteq \PCPS(\RR)$ holds in general,
and $\PCPS(\RR,\varnothing) \subsetneq \PCPS(\RR)$ when $\RR$ admits a
trivial critical pair.

The next lemma relates $\PCPS(\RR,\CC)$ to closing forms of parallel
critical peaks.

\begin{lem}
\label{lem:key}
Let $\RR$ be a left-linear TRS and $\RR_1$, $\RR_2$, and $\CC$ 
subsets of $\RR$, and let $\PP = \PCPS(\RR,\CC)$.  Suppose that
${\pcp{\RR}} \subseteq {\to_\RR^* \cdot \fromBT{\RR}{*}}$ holds.
If $t \pfromB{}{\RR_1} s \pto{}_{\RR_2} u$ then
\begin{enumerate}[label=(\roman*)]
\item
\label{lem:key:join}
$t \pto{}_{\RR_2} \cdot \fromto_\CC^* \cdot \pfromB{}{\RR_1} u$, or
\item
\label{lem:key:peak}
$t \pfromB{}{\RR_1} t' \fromB{\PP} s \to_\PP u' \pto{}_{\RR_2} u$
and $t' \to_\RR^* \cdot \fromBT{\RR}{*} u'$ for some $t'$ and $u'$.
\end{enumerate}
\end{lem}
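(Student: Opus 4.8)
The plan is to follow the same structural-induction strategy used in the proofs of \cref{thm:SH22a} and \cref{thm:prlc}, now keeping track of the dichotomy between ``the peak closes cheaply through $\CC$'' (outcome~\ref{lem:key:join}) and ``the peak resolves into a genuine $\PP$-critical-peak plus a valley provided by local confluence'' (outcome~\ref{lem:key:peak}). Concretely, I would fix a local peak $\Gamma\colon t \pfromB{P}{\RR_1} s \pto{Q}_{\RR_2} u$ and perform well-founded induction on $s$ (structural induction suffices, as in the cited proofs), distinguishing the same five cases on the shape of $\Gamma$.

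First, if $P$ or $Q$ is empty, outcome~\ref{lem:key:join} holds trivially with an empty $\fromto_\CC^*$ segment. Second, if $P$ or $Q$ equals $\{\epsilon\}$ and $\Gamma$ is orthogonal, then \cref{lem:PML}\pref{PML_orthogonal_case} immediately gives $t \pto{}_{\RR_2} \cdot \pfromB{}{\RR_1} u$, again outcome~\ref{lem:key:join}. Third, the decisive case: $P \neq \varnothing$, $Q = \{\epsilon\}$, and $\Gamma$ not orthogonal. Here \cref{lem:PML}\pref{PML_overlapping_case} produces a parallel critical peak $t_0 \pfromB{P_0}{\RR_1} s_0 \rto_{\RR_2} u_0$ together with substitutions $\sigma,\tau$ with $s=s_0\sigma$, $t=t_0\tau$, $u=u_0\sigma$, $\sigma \pto{}_{\RR_1}\tau$, $t_0\sigma \pto{P\setminus P_0}_{\RR_1} t_0\tau$, and $P_0 \subseteq P$. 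Now split on whether $t_0 \fromto_\CC^* u_0$. If yes, then since $\fromto_\CC$ is closed under substitutions and $\to \subseteq \pto{}$, we get $t = t_0\tau \fromto_\CC^* u_0\tau \pfromB{}{\RR_1} u_0\sigma = u$ (using $\sigma \pto{}_{\RR_1}\tau$), which is outcome~\ref{lem:key:join}. If no, then by definition $(s_0,t_0)$ and $(s_0,u_0)$ are $\PP$-rules, so $s_0 \to_\PP t_0$ and $s_0 \to_\PP u_0$; instantiating by $\sigma$ gives $s = s_0\sigma \fromB{\PP} t_0\sigma$ and $s \to_\PP u_0\sigma = u$. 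The hypothesis ${\pcp{\RR}} \subseteq {\to_\RR^* \cdot \fromBT{\RR}{*}}$ applied to the parallel critical pair $(t_0,u_0)$ yields $t_0 \to_\RR^* \cdot \fromBT{\RR}{*} u_0$, which persists under $\sigma$: $t_0\sigma \to_\RR^* \cdot \fromBT{\RR}{*} u_0\sigma$. Finally I would connect $t = t_0\tau$ to $t_0\sigma$ via $t_0\sigma \pto{P\setminus P_0}_{\RR_1} t_0\tau$, i.e.\ $t \pfromB{}{\RR_1} t_0\sigma$, and set $t' = t_0\sigma$, $u' = u_0\sigma$; this is exactly outcome~\ref{lem:key:peak}. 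The fourth case ($P = \{\epsilon\}$, $Q \neq \varnothing$, not orthogonal) is symmetric, noting that the statement of the lemma is itself symmetric in the two sides.

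For the fifth case, $P \nsubseteq \{\epsilon\}$ and $Q \nsubseteq \{\epsilon\}$, the peak is below the root, so $s = f(\seq{s})$, $t = f(\seq{t})$, $u = f(\seq{u})$ with $t_i \pfromB{}{\RR_1} s_i \pto{}_{\RR_2} u_i$ for each $i$. I would apply the induction hypothesis to each $i$-th subpeak. Here there is a subtlety: different components may land in different outcomes. If every component lands in outcome~\ref{lem:key:join}, assembling the parallel steps and the $\fromto_\CC^*$ conversions componentwise (using that $\pto{}$ and $\fromto_\CC^*$ lift through the common context $f(\cdots)$) gives outcome~\ref{lem:key:join} for $\Gamma$. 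If some component $i$ lands in outcome~\ref{lem:key:peak}, I would use the fact that for the components in outcome~\ref{lem:key:join} a valley $t_j \pto{}_{\RR_2} \cdot \fromto_\CC^* \cdot \pfromB{}{\RR_1} u_j$ can also be read — after absorbing the $\CC$-conversion into $\to_\RR^* \cdot \fromBT{\RR}{*}$ (since $\CC \subseteq \RR$ and $\fromto_\CC \subseteq \to_\RR \cup \fromB{\RR}{}$) — as a trivial instance of the $t' \to_\RR^* \cdot \fromBT{\RR}{*} u'$ shape with $t'_j = t_j$, $u'_j = u_j$ and no $\PP$-step; then $t' = f(t'_1,\dots,t'_n)$, $u' = f(u'_1,\dots,u'_n)$, and the $\PP$-steps lift through $f(\cdots)$, yielding outcome~\ref{lem:key:peak} for $\Gamma$.

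The main obstacle I anticipate is the bookkeeping in the fifth case: the two outcomes are not closed under ``put into a context and combine componentwise'' in a uniform way, so I expect to need the small observation that outcome~\ref{lem:key:join} can always be weakened to outcome~\ref{lem:key:peak} with an empty $\PP$-segment (using $\CC \subseteq \RR$ to fold $\fromto_\CC^*$ into $\to_\RR^* \cdot \fromBT{\RR}{*}$), after which the case becomes routine. A secondary point to get right is keeping the orientation of $\PP$-rules correct in case three — namely that $s_0 \to_\PP t_0$ \emph{and} $s_0 \to_\PP u_0$ both hold by definition of $\PCPS(\RR,\CC)$ — and that all the relations involved ($\to_\PP$, $\to_\RR$, $\pto{}_{\RR_i}$, $\fromto_\CC$) are stable under the instantiating substitutions $\sigma$ and $\tau$.
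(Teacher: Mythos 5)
Your overall strategy (structural induction on $s$ with the same five-case analysis, using \cref{lem:PML} in the root-overlap cases) matches the paper's proof, and your treatment of cases 1--4 is sound; in particular, splitting on whether the specific parallel critical pair $(t_0,u_0)$ delivered by \cref{lem:PML}\pref{PML_overlapping_case} satisfies $t_0 \fromto_\CC^* u_0$ is a legitimate (and slightly simpler) variant of the paper's case distinction. The genuine gap is in case 5, and it is twofold. First, your ``small observation'' that outcome (i) can be weakened to outcome (ii) with an empty $\PP$-segment is false: outcome (i) for a component $j$ gives $t_j \pto{}_{\RR_2} \cdot \fromto_\CC^* \cdot \pfromB{}{\RR_1} u_j$, and the middle part $\fromto_\CC^*$ is a \emph{conversion}, not a valley. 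Without confluence of $\CC$ (which is not a hypothesis of this lemma) you cannot fold it into the joinability shape $t_j \to_\RR^* \cdot \fromBT{\RR}{*} u_j$. Second, even granting that, your assembly $t' = f(t'_1,\dots,t'_n)$ with $t'_j = t_j$ for the (i)-components cannot satisfy $t' \fromB{\PP} s$: outcome (ii) demands a \emph{single} $\PP$-rewrite step from $s$ to $t'$, whereas your $t'$ differs from $s$ in every argument $j$ with $s_j \neq t_j$, not just at the one position where a $\PP$-rule is applied.

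The repair is to be less ambitious in case 5: if some component $i$ satisfies (ii), with intermediate terms $t_i^\dagger$ and $u_i^\dagger$ such that $t_i \pfromB{}{\RR_1} t_i^\dagger \fromB{\PP} s_i \to_\PP u_i^\dagger \pto{}_{\RR_2} u_i$ and $t_i^\dagger \to_\RR^* \cdot \fromBT{\RR}{*} u_i^\dagger$, then set $t' = f(s_1,\dots,t_i^\dagger,\dots,s_n)$ and $u' = f(s_1,\dots,u_i^\dagger,\dots,s_n)$, leaving \emph{all other arguments untouched at $s_j$}. Now $s \to_\PP t'$ and $s \to_\PP u'$ are genuine single $\PP$-steps; the parallel step $t' \pto{}_{\RR_1} t$ absorbs both $t_i^\dagger \pto{}_{\RR_1} t_i$ and the deferred steps $s_j \pto{}_{\RR_1} t_j$ for $j \neq i$ (similarly for $u'$); and $t' \to_\RR^* \cdot \fromBT{\RR}{*} u'$ holds because $t'$ and $u'$ agree outside the $i$-th argument. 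With this construction the outcomes of the remaining components are simply never consulted, so the problematic ``weakening of (i) to (ii)'' is not needed at all. This is exactly what the paper does.
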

\begin{proof}
Let $\Gamma\colon t \pfromB{P}{\RR_1} s \pto{Q}_{\RR_2} u$ be a local peak.
We use structural induction on $s$. Depending on the form of $\Gamma$, we
distinguish five cases.
\begin{enumerate}
\item
If $P$ or $Q$ is the empty set then \pref{lem:key:join} holds trivially.
\item
If $P$ or $Q$ is $\{ \epsilon \}$
and $\Gamma$ is orthogonal then 
\pref{lem:key:join} follows by \cref{lem:PML}\pref{PML_orthogonal_case}.
\item
If $P \neq \varnothing$, $Q = \{ \epsilon \}$, and $\Gamma$ is not orthogonal then 
we distinguish two cases.
\begin{itemize}
\item
If there exist $P_0$, $t_0$, $u_0$, and $\sigma$ such that
``$P_0 \subseteq P$,
\(
t \pfromB{}{\RR_1}
t_0\sigma \pfromB{\smash{P_0}}{\RR_1}
s \rto_{\RR_2} u_0\sigma = u
\),
and $t_0 \pcp{\RR} u_0$'' but not $t_0 \fromto_\CC^* u_0$.  Take
$t' = t_0\sigma$ and $u' = u_0\sigma$.  Then 
\(
t_0\tau \pfromB{}{\RR_1} t_0\sigma \fromB{\PP} s \to_\PP u_0\sigma = u
\)
holds and by the assumption $t' \to_\RR^* \cdot \fromBT{\RR}{*} u'$
also holds.  Hence \pref{lem:key:peak} follows.
\item
Otherwise, whenever $P_0$, $t_0$, $u_0$, and $\sigma$ satisfy the
conditions quoted in the last item,
$t_0 \fromto_\CC^* u_0$ holds.
Because $\Gamma$ is not orthogonal,
by \cref{lem:PML}\pref{PML_overlapping_case} there exist
$P_0$, $t_0$, $u_0$, $\sigma$, and $\tau$ such that
$P_0 \subseteq P$,
\(
t = t_0\tau \pfromB{}{\RR_1}
t_0\sigma \pfromB{\smash{P_0}}{\RR_1}
s \rto_{\RR_2}
u_0\sigma = u
\),
and $\sigma \pto{}_{\RR_1} \tau$. Thus $t_0 \fromto^*_\CC u_0$
follows.  Therefore,
$t = t_0\tau \fromto_\CC^* u_0\tau \pfromB{}{\RR_1} u_0\sigma = u$,
and hence \pref{lem:key:join} holds.
\end{itemize}
\item
If $P = \{\epsilon\}$, $Q \nsubseteq \{ \epsilon \}$,
and $\Gamma$ is not orthogonal then
the proof is analogous to the last case.
\item
If $P \nsubseteq \{\epsilon\}$ and $Q \nsubseteq \{ \epsilon \}$ then
$s$, $t$, and $u$ can be written as $f(s_1,\dots,s_n)$, $f(t_1,\dots,t_n)$,
and $f(u_1,\dots,u_n)$ respectively, and
$\Gamma_i\colon t_i \pfromB{}{\RR_1} s_i \pto{}_{\RR_2} u_i$
holds for all $1 \leqslant i \leqslant n$.  For every peak $\Gamma_i$ the
induction hypothesis yields (i) or (ii).
If \pref{lem:key:join} holds for all $\Gamma_i$ then 
\pref{lem:key:join} is concluded for $\Gamma$.
Otherwise, some $\Gamma_i$ satisfies \pref{lem:key:peak}. By taking
$t' = f(s_1,\ldots,t_i,\ldots,s_n)$ and
$u' = f(s_1,\ldots,u_i,\ldots,s_n)$ we have
$t \pfromB{}{\RR_1} t' \fromB{\PP} s \to_\PP u' \pto{}_{\PP} u$.
From $t_i \to_\RR^* \cdot \fromBT{\RR}{*} u_i$ we obtain
$t' \to_\RR^* \cdot \fromBT{\RR}{*} u'$.
Hence $\Gamma$ satisfies \pref{lem:key:peak}.
\qedhere
\end{enumerate}
\end{proof}

The next theorem is a compositional confluence criterion based on
parallel critical pair systems.

\begin{thm}
\label{thm:pcpsc}
Let $\RR$ be a left-linear TRS and $\CC$ a confluent TRS with 
$\CC \subseteq \RR$.  The TRS $\RR$ is confluent if
${\pcp{\RR}} \subseteq {\to_\RR^* \cdot \fromBT{\RR}{*}}$ and
$\PP/\RR$ is terminating, where $\PP = \PCPS(\RR, \CC)$.
\end{thm}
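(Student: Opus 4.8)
The plan is to invoke \cref{thm:ddc}. Write $\PP=\PCPS(\RR,\CC)$; since $\PP/\RR$ is terminating, $\to_{\PP/\RR}^+$ is a well-founded strict order on terms. We work with the ARS $\AA=(\TT(\FF,\VV),\{\pto{}_\alpha\}_{\alpha\in I})$ whose index set is $I=\TT(\FF,\VV)\uplus\{\bot\}$, with $\pto{}_\bot=\pto{}_\CC$ and $\pto{}_c=\{(a,b)\mid c\to_\RR^* a \text{ and } a\pto{}_\RR b\}$ for a term $c$, equipped with the order $>$ that makes $\bot$ least and sets $c>c'$ iff $c\to_{\PP/\RR}^+ c'$. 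The point of this (deliberately non-functional) choice of $\pto{}_c$ is that a parallel $\RR$-step is available under the label of its source \emph{and} under that of every $\to_\RR$-ancestor of its source, which is what lets the peripheral steps of the closing diagrams below be slotted in. Since $\bigcup_{\alpha\in I}\pto{}_\alpha=\pto{}_\RR$ we have $\to_\AA^*=\to_\RR^*$, so confluence of $\RR$ follows from that of $\AA$ by \cref{lem:tait-martin-loef}; and $\pto{}_\bot=\pto{}_\CC$ commutes with itself because $\CC$ is confluent, which is the standing assumption of \cref{thm:ddc}. It therefore remains to show that every local peak $\Gamma\colon t\pfromB{}{\alpha} s\pto{}_\beta u$ with $(\alpha,\beta)\ne(\bot,\bot)$ is decreasing; note that each step of $\Gamma$ admits $s$ as a label.

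Fix such a $\Gamma$, let $\RR_1$ and $\RR_2$ collect the rules actually used in its left and right step (so $\RR_i\subseteq\CC$ when the $i$-th index is $\bot$, and $\RR_i\subseteq\RR$ always), and apply \cref{lem:key} to $t\pfromB{}{\RR_1} s\pto{}_{\RR_2} u$; its hypotheses hold since $\CC\subseteq\RR$ and $\pcp{\RR}\subseteq\to_\RR^*\cdot\fromBT{\RR}{*}$. If alternative~\ref{lem:key:peak} applies we obtain $t\pfromB{}{\RR_1} t'\fromB{\PP} s\to_\PP u'\pto{}_{\RR_2} u$ together with a common $\RR$-reduct of $t'$ and $u'$, hence a conversion $t\fromBT{\RR}{*} t'\to_\RR^*\cdot\fromBT{\RR}{*} u'\to_\RR^* u$. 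Because the steps $s\to_\PP t'$ and $s\to_\PP u'$ precede all the $\RR$-steps of this conversion, every term on it is $\to_{\PP/\RR}^+$-below $s$; thus each of its steps may be labelled by its own source, which lies in $\curlyvee s\subseteq\curlyvee\alpha\beta$. Leaving the $\to^=_\beta$- and $\from^=_\alpha$-slots empty, this exhibits $\Gamma$ as decreasing.

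If instead alternative~\ref{lem:key:join} applies we obtain $t\pto{}_{\RR_2} v_1\fromto_\CC^* w\pfromB{}{\RR_1} u$. The middle conversion consists of $\CC$-steps, all taking the label $\bot\in\curlyvee\alpha\beta$; the front step applies only $\RR_2$-rules and the back step only $\RR_1$-rules, since both are produced by \cref{lem:PML}, which re-applies only rules already present in $\Gamma$. Hence if $\beta=s$ we label the front step $s$ -- admissible because $s\to_\RR^* t$ -- to fill the $\to^=_\beta$-slot, while if $\beta=\bot$ the front step is itself a $\CC$-step and fills the $\to^=_\bot$-slot; the back step and the $\alpha$-slot are treated symmetrically. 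So $\Gamma$ closes as $t\to^=_\beta v_1\fromto^*_{\curlyvee\alpha\beta} w\from^=_\alpha u$, again a decreasing diagram. With both alternatives handled, \cref{thm:ddc} yields the confluence of $\RR$.

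The delicate step -- and the reason for the loose labeling above -- is the tension between the two alternatives of \cref{lem:key}. Alternative~\ref{lem:key:peak} forces a term-based order driven by relative termination of $\PP/\RR$, since nothing else makes the measure descend across a genuine root overlap; but the peripheral steps generated when closing orthogonal and already-$\CC$-closable peaks carry labels -- the terms $t$, $u$, and their reducts -- that under the naive choice ``label $=$ source'' are incomparable with $s$, and so cannot be fitted into the rigid $\fromto^*_{\curlyvee\alpha}\cdot\to^=_\beta\cdot\fromto^*_{\curlyvee\alpha\beta}\cdot\from^=_\alpha\cdot\fromto^*_{\curlyvee\beta}$ shape. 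A secondary point, to watch when $\alpha$ or $\beta$ is $\bot$, is that \cref{lem:PML} only ever re-applies rules already used in the peak, so that the $\CC$/non-$\CC$ status is preserved along the closing diagram for \emph{every} admissible index pair demanded by \cref{thm:ddc}, not merely for the pair $(s,s)$.
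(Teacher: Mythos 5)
Your proposal is correct and follows essentially the same route as the paper's proof: the same index set $\TT(\FF,\VV)\cup\{\bot\}$ ordered by $\to_{\PCPS(\RR,\CC)/\RR}^+$ with $\bot$ least, labels assigned to parallel steps via $\to_\RR^*$-ancestors of their sources, and the same case split on the two alternatives of \cref{lem:key} fed into \cref{thm:ddc}. The only (harmless) deviations are cosmetic -- you admit arbitrary $\pto{}_\RR$-steps under term labels where the paper restricts to $\pto{}_{\RR\setminus\CC}$, and in the peak case you label each single step of the joining conversion by its own source rather than uniformly by $t'$, $u'$, or $\bot$.
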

\begin{proof}
Let $\bot$ be a fresh symbol and let $I = \TT(\FF,\VV) \cup \{\bot\}$.  We
define the relation $>$ on $I$ as follows: $\alpha > \beta$ if
$\alpha \neq \bot = \beta$ or
$\alpha \to_{\PP/\RR}^+ \beta$.
Since $\PP/\RR$ is terminating, 
$>$ is a well-founded order.  Let $\AA = (\TT(\FF,\VV), \{ \pto{}_\alpha \}_{\alpha \in I})$ be
the ARS where $\pto{}_\alpha$ is defined as follows:
$s \pto{}_\alpha t$ if 
either $\alpha = \bot$ and $s \pto{}_\CC t$, or $\alpha \neq \bot$ and
$\alpha \to_\RR^* s \pto{}_{\RR \setminus \CC} t$.
Since the commutation of $\CC$ and $\CC$ follows from confluence of $\CC$,
\cref{lem:tait-martin-loef} yields
the commutation of $\to_\bot$ and $\to_\bot$. According to
\cref{lem:tait-martin-loef}
and \cref{thm:ddc}, it is
sufficient to show that every local peak 
\[
\smash{\Gamma\colon t \pfrom[\alpha]{} s \pto[\beta]{} u}
\]
with $(\alpha, \beta) \in I^2 \setminus \{ (\bot,\bot) \}$ is decreasing.
By the definition of $\AA$ we have $s \pto{}_{\RR_1} t$ and
$s \pto{}_{\RR_2} u$ for some TRSs
$\RR_1,\RR_2 \in \{\RR \setminus \CC, \CC\}$.
Using \cref{lem:key}, we distinguish two cases.
\begin{enumerate}
\item
Suppose that \cref{lem:key}\pref{lem:key:join} holds for $\Gamma$.
Then $t \pto{}_{\RR_2} t' \fromto^*_{\CC} u' \pfromB{}{\RR_1} u$
holds for some $t'$ and $u'$.
If $\RR_2 = \RR \setminus \CC$ then
$t \pto{}_\beta t'$ follows from
$\beta \to^*_\RR s \to^*_\RR t \pto{}_{\RR \setminus \CC} t'$.  Otherwise,
$\RR_2 = \CC$ yields $t \pto{}_\bot t'$.
In either case $t \pto{}_{\{\beta, \bot\}} t'$ is obtained. Similarly,
$u \pto{}_{\{\alpha,\bot\}} u'$ is obtained.
Moreover,
$t' \pfromto{}^*_{\bot} u'$ follows from $t' \fromto^*_\CC u'$.
Since $(\alpha,\beta) \neq (\bot,\bot)$
yields $\bot \in \curlyvee\alpha\beta$ and 
the reflexivity of $\pto{}_\bot$ yields
\(
{\pto{}_{\{\delta,\bot\}}} \subseteq
{\pto{}_\delta^= \cdot \pto{}_\bot}
\)
for any $\delta$, we obtain the desirable conversion
\(
t \pto[\beta]{=}
t' \pfromto[\curlyvee\alpha\beta]{*}
u' \pfrom[\alpha]{=}
u
\).
Hence, $\Gamma$ is decreasing.

\item
Suppose that \cref{lem:key}\pref{lem:key:peak} holds for $\Gamma$.  We have
$t \pfromB{}{\RR_1} t' \fromB{\PP} s \to_\PP u' \pto{}_{\RR_2} u$ and
$t' \to_\RR^* v \fromBT{\RR}{*} u'$ for some $t'$, $u'$, and $v$. As
$(\alpha,\beta) \neq (\bot,\bot)$, 
we have $\alpha \to^*_\RR s \to_\PP t'$ or
$\beta \to^*_\RR s \to_\PP t'$, from which $\alpha > t'$ or $\beta > t'$
follows.  Thus, $t' \in \curlyvee\alpha\beta$.
If $\RR_2 = \RR \setminus \CC$ then $t' \pto{}_{t'} t$.  Otherwise,
$\RR_2 = \CC$ yields $t' \pto{}_\bot t$.  So in either case 
$t' \pto{}_{\curlyvee\alpha\beta} t$ holds.
Next, we show $t' \pfromto{}_{\curlyvee \alpha\beta}^* v$.
Consider terms $w$ and $w'$ with
$t' \to_\RR^* w \to_\RR^{} w' \to_\RR^* v$.
We have $w \pto{}_{t'} w'$ or $w \pto{}_\bot w'$. So
$w \pto{}_{\curlyvee\alpha\beta} w'$ follows by
$\{t',\bot\} \subseteq \curlyvee\alpha\beta$.  
Summing up, we obtain
\(
t \pfrom{}_{\curlyvee\alpha\beta}
t' \pfromto{}_{\curlyvee\alpha\beta}^*
v
\).
In a similar way 
$u \pfrom{}_{\curlyvee\alpha\beta} u'
\pfromto{}_{\curlyvee\alpha\beta}^* v$ is
obtained.
Therefore
\(
\smash{
t \pfrom[\curlyvee\alpha\beta]{}
t' \pto[\curlyvee\alpha\beta]{*}
v \pfrom[\curlyvee\alpha\beta]{*}
u' \pto[\curlyvee\alpha\beta]{}
u
}
\),
and hence $\Gamma$ is decreasing.
\qedhere
\end{enumerate}
\end{proof}

We claim that \cref{thm:HM11} is subsumed by \cref{thm:pcpsc}.  Suppose
that $\CC$ is the empty TRS. Trivially $\CC$ is confluent.  Because
$\PCPS(\RR,\CC)$ is a subset of $\PCPS(\RR)$, termination of
$\PCPS(\RR,\CC)/\RR$ follows from that of $\PCPS(\RR)/\RR$, which
is equivalent to termination of $\CPS(\RR)/\RR$.  
Finally, ${\pcp{\RR}} \subseteq {\to_\RR^* \cdot \fromBT{\RR}{*}}$ 
is a necessary condition of confluence.
Thus, whenever \cref{thm:HM11} applies, 
\cref{thm:pcpsc} applies.

\cref{thm:pcpsc} also subsumes \cref{thm:SH22a}.
Suppose that $\CC$ is a confluent subsystem of $\RR$.  If
${\pcp{\RR}} \subseteq {\fromto^*_\CC}$ then
$\PCPS(\RR,\CC) = \varnothing$, which leads to termination of
$\PCPS(\RR,\CC)/\RR$.  Hence, \cref{thm:pcpsc} applies.  Note that 
if $\CC = \RR$ then $\PCPS(\RR,\CC) = \varnothing$.

\begin{exa}
\label{ex:cpsc}
Consider the left-linear TRS $\RR$:
\begin{alignat*}{6}
1\colon &\;& \m{s}(\m{p}(x)) & \to x \qquad
&
3\colon &\;& x + 0 & \to x 
&
5\colon &\;& x + \m{s}(y) & \to \m{s}(x + y) \qquad
\\
2\colon && \m{p}(\m{s}(x)) & \to x
&
4\colon && 0 + x & \to x + 0 \qquad
&
6\colon && x + \m{p}(y) & \to \m{p}(x + y)
\end{alignat*}
We show the confluence of $\RR$ by the combination of \cref{thm:pcpsc} and
orthogonality.
Let $\CC = \{ 3 \}$.
The TRS $\PCPS(\RR,\CC)$ consists of the eight rules:
\begin{align*}
\phantom{1\colon \;}
\m{0} + \m{s}(x) & \to \m{s}(\m{0} + x)
&
\phantom{1\colon \;}
x + \m{s}(\m{p}(y)) & \to \m{s}(x + \m{p}(y))
\\
\m{0} + \m{s}(x) & \to \m{s}(x) + \m{0}
&
x + \m{s}(\m{p}(y)) & \to x + y
\\
\m{0} + \m{p}(x) & \to \m{p}(\m{0} + x)
&
x + \m{p}(\m{s}(y)) & \to \m{p}(x + \m{s}(y))
\\
\m{0} + \m{p}(x) & \to \m{p}(x) + \m{0}
&
x + \m{p}(\m{s}(y)) & \to x + y
\end{align*}
The termination of $\PCPS(\RR,\CC)/\RR$ can be shown by, e.g.,
the termination tool \NaTT.
Since $\CC$ is orthogonal and
all parallel critical pairs of $\RR$ are joinable by $\RR$,
\cref{thm:pcpsc}
applies.
Note that the confluence of $\RR$ can neither be shown by
\cref{thm:ZFM15} nor \cref{thm:HM11}.  The former fails due to
the lack of suitable labeling functions for the following diagrams:
\begin{center}
\begin{tikzpicture}[D,baseline=(s)]
\node (s) at (0.0,2.0) {$x + \m{s}(\m{p}(y))$};
\node (t) at (0.0,0.0) {$x + y$};
\node (u) at (3.3,2.0) {$\m{s}(x + \m{p}(y))$};
\node (v) at (3.3,0.0) {$\m{s}(\m{p}(x + y))$};
\draw[->]
  (s) edge[-||->] node[left=1mm]  {$\{2\}$} node[right=1mm] {$1$} (t)
  (s) edge node[above] {$\epsilon$} node[below] {$5$} (u)
;
\draw[->,dashed]
  (u) edge node[right] {$6$} (v)
  (v) edge node[below] {$1$} (t)
;
\end{tikzpicture}
\hfil
\begin{tikzpicture}[D,baseline=(s)]
\node (s) at (0.0,2.0) {$x + \m{p}(\m{s}(y))$};
\node (t) at (0.0,0.0) {$x + y$};
\node (u) at (3.3,2.0) {$\m{p}(x + \m{s}(y))$};
\node (v) at (3.3,0.0) {$\m{p}(\m{s}(x + y))$};
\draw[->]
  (s) edge[-||->] node[left=1mm]  {$\{2\}$} node[right=1mm] {$2$} (t)
  (s) edge node[above] {$\epsilon$} node[below] {$6$} (u)
;
\draw[->,dashed]
  (u) edge node[right] {$5$} (v)
  (v) edge node[below] {$2$} (t)
;
\end{tikzpicture}
\end{center}
The latter fails due to the non-termination of $\CPS(\RR)/\RR$.
The culprit is the rule $\m{0} + \m{0} \to \m{0} + \m{0}$
in $\CPS(\RR)$, originating from the critical peak
$\m{0} \from \m{0} + \m{0} \to \m{0} + \m{0}$.
In contrast, the rule does not belong to $\PCPS(\RR,\CC)$
because the conversion $\m{0} \fromto^*_\CC \m{0} + \m{0}$ holds.
\end{exa}

Unlike the case of rule labeling, successive application of \cref{thm:pcpsc}
is more powerful than \cref{thm:HM11}.
\begin{exa}
\label{ex:CC}
By successive application of \cref{thm:pcpsc} we prove the confluence of the
left-linear TRS $\RR$:
\begin{alignat*}{4}
1\colon &\;&  \m{0} + x &\to x
\hspace{8em} &
3\colon && (x + y) + z &\to x + (y + z)
\\
2\colon && \m{s}(x) + y &\to \m{s}(x+y)
&
4\colon &\;&  x + (y + z) &\to (x + y) + z
\end{alignat*}
Let $\CC = \{1,2,3\}$.  Since the inclusion
${\pcp{\RR}} \subseteq {\to_\CC^* \cdot \fromBT{\CC}{*}}$ holds,
all parallel critical pairs of $\RR$ are joinable and
$\PCPS(\RR,\CC) = \varnothing$.  From the latter the termination of
$\PCPS(\RR,\CC)/\CC$ follows.  So it remains to show that $\CC$ is
confluent.
Since ${\pcp{\CC}} \subseteq {\to^*_\CC \cdot \fromBT{\CC}{*}}$ holds
and the termination of $\PCPS(\CC,\varnothing)/\CC$ follows from that of
$\CC$ (which is easily shown by the lexicographic path order~\cite{KL80}),
the confluence of $\CC$ follows from that of the empty TRS $\varnothing$.
Hence, $\RR$ is confluent.
Note that the confluence of $\RR$ cannot be shown by \cref{thm:HM11}
because $\CPS(\RR)/\RR$ is not terminating due to the rules 
of $\CPS(\RR)$:
\begin{align*}
x + ((y + z) + w) & \to (x + (y + z)) + w
&
(x + (y + z)) + w & \to x + ((y + z) + w) 
\end{align*}
\end{exa}

\section{Reduction Method}
\label{sec:reduction}

We present a \emph{reduction method} for confluence 
analysis.  The method shrinks a rewrite system $\RR$ to 
a subsystem $\CC$
such that $\RR$ is confluent iff $\CC$ is confluent.  Because
compositional confluence criteria address the `if' direction, the
question here is how to guarantee the reverse direction.  
In this section we develop a simple criterion, which exploits the fact
that confluence is preserved under signature extensions.  The resulting
reduction method can easily be automated by using SAT solvers.

We will show that if TRSs $\RR$ and $\CC$ satisfy
$\RR{\restriction}_\CC \subseteq {\to_\CC^*}$ then
confluence of $\RR$ implies confluence of $\CC$.  Here
$\RR{\restriction}_\CC$ stands for the following subsystem of $\RR$:
\[
\RR{\restriction}_\CC =
\{ \ell \to r \in \RR \mid \Fun(\ell) \subseteq \Fun(\CC) \}
\]
The following auxiliary lemma explains the role of the condition
$\RR{\restriction}_\CC \subseteq {\to_\CC^*}$.

\begin{lem} \label{lem:simulation}
Suppose $\RR{\restriction}_\CC \subseteq {\to_\CC^*}$.
\begin{enumerate}
\item
If $s \to_\RR t$ and 
$s \in \TT(\Fun(\CC), \VV)$
then $s \to_\CC^* t$ and 
$t \in \TT(\Fun(\CC), \VV)$
\item
If $s \to_\RR^* t$ and $s \in \TT(\Fun(\CC), \VV)$
then $s \to_\CC^* t$.
\end{enumerate}
\end{lem}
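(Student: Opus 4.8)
The plan is to establish part~(1) by inspecting a single rewrite step, and then obtain part~(2) by a straightforward induction on the length of the reduction.

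For part~(1), suppose $s \xrightarrow{\smash{p}}_\RR t$ with a rule $\ell \to r \in \RR$, so that $s|_p = \ell\sigma$ and $t = s[r\sigma]_p$ for some substitution $\sigma$. Since $s \in \TT(\Fun(\CC),\VV)$, all function symbols of $s$, hence of its subterm $\ell\sigma$, lie in $\Fun(\CC)$; in particular $\Fun(\ell) \subseteq \Fun(\ell\sigma) \subseteq \Fun(\CC)$, so $\ell \to r \in \RR{\restriction}_\CC$. The hypothesis $\RR{\restriction}_\CC \subseteq {\to_\CC^*}$ then gives $\ell \to_\CC^* r$, and closure of $\to_\CC^*$ under contexts and substitutions yields $s = s[\ell\sigma]_p \to_\CC^* s[r\sigma]_p = t$.

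The remaining claim of part~(1), namely $t \in \TT(\Fun(\CC),\VV)$, amounts to $\Fun(r\sigma) \subseteq \Fun(\CC)$, because $\Fun(t) \subseteq \Fun(s) \cup \Fun(r\sigma)$ and $\Fun(s) \subseteq \Fun(\CC)$. Here I would use two observations: first, $\ell \to_\CC^* r$ together with $\Fun(\ell) \subseteq \Fun(\CC)$ gives $\Fun(r) \subseteq \Fun(\CC)$; second, for every $x \in \Var(r) \subseteq \Var(\ell)$ the term $x\sigma$ occurs as a subterm of $\ell\sigma = s|_p$, so $\Fun(x\sigma) \subseteq \Fun(\CC)$. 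Combining these bounds $\Fun(r\sigma)$ by $\Fun(\CC)$, as required.

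Part~(2) then follows by induction on the number of steps in $s \to_\RR^* t$: the base case is trivial, and in the step case, writing $s \to_\RR s' \to_\RR^* t$, part~(1) gives $s \to_\CC^* s'$ with $s' \in \TT(\Fun(\CC),\VV)$, so the induction hypothesis yields $s' \to_\CC^* t$ and hence $s \to_\CC^* t$. The argument is essentially bookkeeping; the only slightly delicate point is the third paragraph, where one must simultaneously invoke $\RR{\restriction}_\CC \subseteq {\to_\CC^*}$ (to keep $\Fun(r)$ inside $\Fun(\CC)$) and $s \in \TT(\Fun(\CC),\VV)$ (to keep the substitution parts $x\sigma$ inside $\Fun(\CC)$).
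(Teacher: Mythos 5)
Your proposal is correct and follows essentially the same route as the paper: identify the rule as a member of $\RR{\restriction}_\CC$ via $\Fun(\ell)\subseteq\Fun(\CC)$, apply the hypothesis to get $\ell\to_\CC^* r$, close under contexts and substitutions, and induct for part~(2). The only (harmless) difference is in the membership claim: the paper derives $t\in\TT(\Fun(\CC),\VV)$ directly from $s\to_\CC^* t$ and $s\in\TT(\Fun(\CC),\VV)$, whereas you decompose $\Fun(r\sigma)$ by hand — both rest on the same observation that $\CC$-rewriting cannot introduce symbols outside $\Fun(\CC)$.
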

\begin{proof}
We only show the first claim, because then the second claim is shown by
straightforward induction. Suppose $s \in \TT(\Fun(\CC),\VV)$ and
$s \to_\RR t$.  There exist a rule $\ell \to r \in \RR$, a position
$p \in \Pos_\FF(s)$, and a substitution $\sigma$ such that $s|_p = \ell\sigma$
and $t = s[r\sigma]_p$.  As $s \in \TT(\Fun(\CC),\VV)$ implies
$\Fun(\ell) \subseteq \Fun(\CC)$, the rule $\ell \to r$ belongs to
$\RR{\restriction}_\CC$, which leads to $\ell \to_\CC^* r$ by assumption.
Since $\to_\CC^*$ is a rewrite relation, we obtain
$s = s[\ell\sigma]_p \to_\CC^* s[r\sigma]_p = t$.  The membership condition
$t \in \TT(\Fun(\CC),\VV)$ follows from $s \in \TT(\Fun(\CC),\VV)$ and
$s \to_\CC^* t$.
\qedhere
\end{proof}

As a consequence of \cref{lem:simulation}(2), confluence of $\RR$
carries over to confluence of $\CC$, when the inclusion
$\RR{\restriction}_\CC \subseteq {\to_\CC^*}$ holds and the signature of
$\CC$ is $\Fun(\CC)$.  The restriction against the signature of $\CC$ can
be lifted by the fact that confluence is preserved under signature
extensions:

\begin{prop}
\label{prop:signature-extension}
A TRS $\CC$ is confluent if and only if the
implication
\[
t \fromBT{\CC}{*} s \to_\CC^* u \implies 
t \to_\CC^* \cdot \fromBT{\CC}{*} u
\]
holds for all terms $s,t,u \in \TT(\Fun(\CC),\VV)$.
\end{prop}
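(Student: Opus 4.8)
\emph{Proof strategy.} The forward direction is immediate: confluence of $\CC$ is confluence of the ARS $(\TT(\FF,\VV),\to_\CC)$, and since the right-hand side of every $\CC$-rule is a term over $\Fun(\CC)$, the subset $\TT(\Fun(\CC),\VV)$ is closed under $\to_\CC$; restricting the confluence property to this subset gives exactly the displayed implication. The real content is the converse, which is the folklore fact that confluence is preserved under signature extension. Assuming the implication holds for all $s,t,u \in \TT(\Fun(\CC),\VV)$, I would prove that every peak $t \fromBT{\CC}{*} s \to_\CC^* u$ with $s,t,u \in \TT(\FF,\VV)$ has a common $\CC$-reduct, by well-founded induction on $|s|$.

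Set $\mathcal G = \FF \setminus \Fun(\CC)$. Two observations drive the argument. First, the left-hand side of every $\CC$-rule is a non-variable term over $\Fun(\CC)$, so no $\CC$-redex has its root at a position labelled by a symbol of $\mathcal G$. Second, no $\CC$-step introduces a symbol of $\mathcal G$, so every maximal $\mathcal G$-rooted subterm occurring in a reduct of $s$ is a $\CC$-reduct of a maximal $\mathcal G$-rooted subterm of $s$, uniquely determined through the descendant relation. Together these say that $\mathcal G$-symbols act as inert constructors: each $\CC$-step on a reduct of $s$ takes place either strictly inside some maximal $\mathcal G$-rooted subterm or inside the $\Fun(\CC)$-cap above all of them, and in the latter case those subterms are merely copied or deleted, never matched against.

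First I would dispose of two easy cases. If $s \in \TT(\Fun(\CC),\VV)$ then also $t,u \in \TT(\Fun(\CC),\VV)$, and the hypothesis applies verbatim. If the root of $s$ lies in $\mathcal G$ then, by the first observation, that root is never rewritten, so $s = f(\seq s)$, $t = f(\seq t)$, $u = f(\seq u)$ with $t_i \fromBT{\CC}{*} s_i \to_\CC^* u_i$ for each $i$; the induction hypothesis yields common reducts $v_i$, and $f(\seq v)$ is the required term. In the remaining case the root of $s$ is in $\Fun(\CC)$ while $s$ still contains a $\mathcal G$-symbol, so its maximal $\mathcal G$-rooted subterms $b_1,\dots,b_n$ ($n \geqslant 1$) are \emph{proper} subterms of $s$. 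Here I abstract: replacing each occurrence of a maximal $\mathcal G$-rooted subterm in $s$, $t$, and $u$ by a fresh variable — while recording which $b_i$ each such occurrence descends from — gives $\hat s,\hat t,\hat u \in \TT(\Fun(\CC),\VV)$, and the two observations make $\hat t \fromBT{\CC}{*} \hat s \to_\CC^* \hat u$ routine to check. The hypothesis then supplies $v \in \TT(\Fun(\CC),\VV)$ with $\hat t \to_\CC^* v \fromBT{\CC}{*} \hat u$. Since every redex in these two reductions lies in the $\Fun(\CC)$-cap, the reductions remain valid when the abstracted positions are re-instantiated; re-instantiating the variables of $v$ by the appropriate reducts of the $b_i$'s gives $v^{t}$, $v^{u}$ with $t \to_\CC^* v^{t}$ and $u \to_\CC^* v^{u}$, where $v^{t}$ and $v^{u}$ agree outside the abstracted positions and, at each abstracted position, carry two $\CC$-reducts of one and the same $b_i$. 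Since $b_i \subterm s$, the induction hypothesis joins those reducts; rewriting $v^{t}$ and $v^{u}$ at all (pairwise parallel) abstracted positions down to the respective common reducts then produces a common $\CC$-reduct of $t$ and $u$.

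The hard part is this last case, and inside it the handling of \emph{duplication}: a cap-step can copy a maximal $\mathcal G$-rooted subterm, after which the copies may be reduced differently, so $t$ arises from $\hat t$ not by a substitution but by refilling each variable occurrence independently. The cleanest remedy I foresee is to ``unshare'' the abstraction variables — giving distinct fresh variables to distinct occurrences — so that re-instantiation becomes an ordinary substitution and one can appeal to closure of $\to_\CC$ under substitutions (plus the fact that a variables-to-variables substitution creates no redexes, which lets the abstracted reduction be lifted to the unshared term), and then to invoke the induction hypothesis once per occurrence. Everything else reduces to the two observations about $\mathcal G$-symbols.
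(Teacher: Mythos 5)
Your forward direction and your Cases 1 and 2 are fine, and for \emph{left-linear} $\CC$ the cap/alien decomposition in Case 3 can indeed be pushed through along the lines you describe. The genuine gap is that the proposition is stated for arbitrary TRSs $\CC$, and your Case 3 breaks down for non-left-linear rules at exactly the two places you flag as ``routine'': the claim that $\hat t \fromBT{\CC}{*} \hat s \to_\CC^* \hat u$ holds for the unshared abstraction, and the claim that a variables-to-variables substitution creates no redexes. Take $\CC = \{\m{f}(x,x) \to x,\; \m{a} \to \m{b}\}$, so $\Fun(\CC) = \{\m{f},\m{a},\m{b}\}$, and let $\m{g} \in \mathcal G$. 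For $s = \m{f}(\m{g}(\m{a}),\m{g}(\m{a}))$ your unshared abstraction is $\hat s = \m{f}(x_1,x_2)$, which is not a redex even though $s$ is: the cap step $s \to_\CC \m{g}(\m{a})$ does not project, because a non-linear left-hand side matches the cap only by virtue of two distinct alien occurrences being syntactically equal, and unsharing destroys that equality. Sharing equal aliens does not rescue the argument either: in the peak $\m{f}(\m{g}(\m{b}),\m{g}(\m{a})) \fromB{\CC} \m{f}(\m{g}(\m{a}),\m{g}(\m{a})) \to_\CC \m{g}(\m{a})$ the left leg desynchronizes the two aliens, so after one alien step no abstraction of the source still matches $\m{f}(x,x)$. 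Joining this peak requires first \emph{equalizing} the aliens by reducing them to a common reduct and only then re-firing the cap rule---information that no abstracted peak in $\TT(\Fun(\CC),\VV)$ can carry. This is precisely the interaction between non-left-linearity and alien duplication that makes modularity of confluence a hard theorem.

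The paper avoids all of this: it observes that the statement is the special case of Toyama's modularity theorem in which the second component is the empty TRS over $\FF \setminus \Fun(\CC)$, and derives the proposition in one line from that (deep, but citable) result. So either restrict your direct argument to left-linear $\CC$ and restate the proposition accordingly (which would still suffice for \cref{cor:reduction}, though not for \cref{thm:converse} as stated), or replace Case 3 by an appeal to modularity as the paper does; as written, your proof does not establish the proposition in the generality claimed.
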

\begin{proof}
Toyama~\cite{T87} showed that the confluence property is \emph{modular},
i.e., the union of two TRSs $\RR_1$ and $\RR_2$ over signatures $\FF_1$ and
$\FF_2$ with $\FF_1 \cap \FF_2 = \varnothing$ is confluent if and only if
both $\RR_1$ and $\RR_2$ are confluent.  Let $\CC$ be a TRS over a
signature $\FF$.  The claim follows by taking $\RR_1 = \CC$, 
$\RR_2 = \varnothing$, $\FF_1 = \Fun(\CC)$, and 
$\FF_2 = \FF \setminus \FF_1$.
\end{proof}

Now we are ready to show the main claim.

\begin{thm} \label{thm:converse}
Suppose $\RR{\restriction}_\CC \subseteq {\to_\CC^*}$.
If $\RR$ is confluent then $\CC$ is confluent.
\end{thm}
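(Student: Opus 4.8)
The plan is to assemble the two preparatory results, \cref{lem:simulation} and \cref{prop:signature-extension}; no machinery beyond them is needed, and in particular this direction uses neither left-linearity nor decreasing diagrams. Assume $\RR$ is confluent. By \cref{prop:signature-extension} it suffices to show, for arbitrary terms $s, t, u \in \TT(\Fun(\CC), \VV)$ with $t \fromBT{\CC}{*} s \to_\CC^* u$, that $t \to_\CC^* \cdot \fromBT{\CC}{*} u$ holds.

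First I would weaken the peak to one over $\RR$: since $\CC \subseteq \RR$ gives ${\to_\CC} \subseteq {\to_\RR}$, we have $t \fromBT{\RR}{*} s \to_\RR^* u$, and confluence of $\RR$ yields a common reduct $v$ with $t \to_\RR^* v$ and $u \to_\RR^* v$. Then I would pull these $\RR$-sequences back into the signature of $\CC$: as $t, u \in \TT(\Fun(\CC), \VV)$ and $\RR{\restriction}_\CC \subseteq {\to_\CC^*}$ by hypothesis, \cref{lem:simulation}(2) converts $t \to_\RR^* v$ and $u \to_\RR^* v$ into $t \to_\CC^* v$ and $u \to_\CC^* v$ respectively. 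Hence $t \to_\CC^* v \fromBT{\CC}{*} u$, and \cref{prop:signature-extension} concludes that $\CC$ is confluent.

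There is no genuine obstacle here; the only point requiring care --- already packaged into \cref{lem:simulation} --- is that $\RR$ may introduce function symbols outside $\Fun(\CC)$ during a rewrite sequence, so a naive appeal to confluence of $\RR$ would not obviously keep us inside $\TT(\Fun(\CC), \VV)$. Confining attention to peaks over $\TT(\Fun(\CC), \VV)$ via \cref{prop:signature-extension}, together with the fact that every $\RR$-step issuing from such a term is simulated by $\CC$-steps and therefore stays within $\TT(\Fun(\CC), \VV)$, removes this concern and makes the argument go through.
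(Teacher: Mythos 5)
Your proof is correct and follows essentially the same route as the paper's: reduce to peaks over $\TT(\Fun(\CC),\VV)$ via \cref{prop:signature-extension}, close the peak using confluence of $\RR$ (implicitly via $\CC \subseteq \RR$), and pull the two $\RR$-reductions back to $\CC$-reductions with \cref{lem:simulation}. The only difference is that you spell out the intermediate steps slightly more explicitly; the substance is identical.
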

\begin{proof}
Suppose that $\RR$ is confluent. It is enough to show the implication in
\cref{prop:signature-extension} for all $s,t,u \in \TT(\Fun(\CC),\VV)$.
Suppose $t \fromBT{\CC}{*} s \to^*_\CC u$.  By confluence of $\RR$ we have
$t \to^*_\RR v \fromBT{\RR}{*} u$ for some $v$.
Since $\Fun(t)$ and $\Fun(u)$ are included in $\Fun(\CC)$,
\cref{lem:simulation} yields $t \to^*_\CC v \fromBT{\CC}{*} u$.
\end{proof}

A reduction method can be obtained by combining a compositional
confluence criterion with \cref{thm:converse}.   Here we present the
combination of \cref{thm:SH22a} with \cref{thm:converse} and its
automation technique.

\begin{cor} \label{cor:reduction}
Let $\CC$ be a subsystem of a left-linear TRS $\RR$ such that
${\pcp{\RR}} \subseteq {\fromto_\CC^*}$ and
$\RR{\restriction}_\CC \subseteq {\to_\CC^*}$.  The TRS $\RR$ is confluent
if and only if $\CC$ is confluent.
\end{cor}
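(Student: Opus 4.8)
The plan is to establish the two implications of the equivalence independently, each being a direct appeal to a result already proved in the excerpt, so essentially no new work is required.

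For the ``if'' direction, assume that $\CC$ is confluent. By hypothesis $\CC$ is a subsystem of the left-linear TRS $\RR$ and ${\pcp{\RR}} \subseteq {\fromto_\CC^*}$ holds. These are precisely the premises of \cref{thm:SH22a} (a left-linear TRS together with a confluent subsystem $\CC$ joining all its parallel critical pairs), so that theorem yields confluence of $\RR$.

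For the ``only if'' direction, assume that $\RR$ is confluent. The remaining hypothesis $\RR{\restriction}_\CC \subseteq {\to_\CC^*}$ is exactly the assumption of \cref{thm:converse}, whose conclusion is that confluence of $\RR$ implies confluence of $\CC$. Putting the two implications together proves the biconditional.

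The only point that needs to be checked is that the two conditions listed in the corollary, together with the standing assumptions that $\CC \subseteq \RR$ and $\RR$ is left-linear, match the hypotheses of \cref{thm:SH22a} and \cref{thm:converse} with no gap; this is immediate. Accordingly I do not anticipate any genuine obstacle: the statement is engineered precisely so that the compositional criterion of \secref{sec:orthogonality} discharges one direction and the signature-extension argument behind \cref{thm:converse} discharges the other, and no fresh diagram chase is needed.
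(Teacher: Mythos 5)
Your proof is correct and matches the paper's intent exactly: the corollary is presented there as the immediate combination of \cref{thm:SH22a} (for the ``if'' direction) with \cref{thm:converse} (for the ``only if'' direction), which is precisely your argument. No gap.
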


The following example illustrates how \cref{cor:reduction} is used 
for automating confluence analysis.

\begin{exa}
\label{ex:reduction}
We show the confluence of the following left-linear TRS $\RR$:
\begin{align*}
1\colon \;
x + \m{0} & \to x
&
3\colon \;
\phantom{\m{s}(x) \times \m{0}}\makebox[0mm][r]{$\m{0} + y$} & \to y
&
5\colon \;
\m{s}(x) + y & \to \m{s}(x + y)
\\
2\colon \;
x \times \m{0} & \to \m{0}
&
4\colon \;
\m{s}(x) \times \m{0}  & \to \m{0}
&
6\colon \;
\m{s}(x) \times y  & \to (x \times y) + y
\end{align*}
Applying the reduction method of \cref{cor:reduction} repeatedly,
we remove rules unnecessary for confluence analysis.
\begin{enumerate}
\item
The TRS $\RR$ has four non-trivial parallel critical pairs and they admit
the following diagrams:
\begin{center}
\begin{tikzpicture}[D,baseline=(s)]
\node (s) at (1,1) {$\m{s}(x)+\m{0}$};
\node (t) at (0,0) {$\m{s}(x)$};
\node (u) at (2,0) {$\m{s}(x+\m{0})$};
\draw[->]
 (s) edge[-||->] (t)
 (s) edge node[above right] {$\epsilon$} (u)
 (u) edge[bend left] node[below] {$1$} (t)
;
\end{tikzpicture}
\quad
\begin{tikzpicture}[D,baseline=(s)]
\node (s) at (1,1) {$\m{s}(x)+\m{0}$};
\node (t) at (0,0) {$\m{s}(x+\m{0})$};
\node (u) at (2,0) {$\m{s}(x)$};
\draw[->]
 (s) edge[-||->] (t)
 (s) edge node[above right] {$\epsilon$} (u)
 (t) edge[bend right] node[below] {$1$} (u)
;
\end{tikzpicture}
\quad
\begin{tikzpicture}[D,baseline=(s)]
\node (s) at (1,1)  {$\m{s}(x) \times \m{0}$};
\node (t) at (0,0)  {$\m{0}$};
\node (u) at (2,0)  {$(x \times \m{0}) + \m{0}$};
\node (v) at (1,-1) {$x \times \m{0}$};
\draw[->]
 (s) edge[-||->] (t)
 (s) edge node[above right] {$\epsilon$} (u)
 (u) edge[bend left=10] node[right,yshift=-2mm] {$1$} (v)
 (v) edge[bend left=10] node[left,yshift=-2mm] {$2$} (t)
;
\end{tikzpicture}
\quad
\begin{tikzpicture}[D,baseline=(s)]
\node (s) at (1,1)  {$\m{s}(x) \times \m{0}$};
\node (t) at (0,0)  {$(x \times \m{0}) + \m{0}$};
\node (u) at (2,0)  {$\m{0}$};
\node (v) at (1,-1) {$x \times \m{0}$};
\draw[->]
 (s) edge[-||->] (t)
 (s) edge node[above right] {$\epsilon$} (u)
 (t) edge[bend right=10] node[left,xshift=-1mm] {$1$} (v)
 (v) edge[bend right=10] node[right,xshift=1mm] {$2$} (u)
;
\end{tikzpicture}
\end{center}
Therefore, ${\pcp{\RR}} \subseteq {\fromto_{\CC_0}^*}$ holds for 
$\CC_0 = \{1,2\}$.
As $\Fun(\CC_0) = \{\m{0}, {+}, {\times}\}$, we have 
$\RR{\restriction}_{\CC_0} = \{ 1,2,3 \}$.  However,
$\RR{\restriction}_{\CC_0} \subseteq {\to_{\CC_0}^*}$
does not hold due to $\m{0} + y \not\to_{\CC_0}^* y$.  So we 
extend $\CC_0$ to
$\CC = \CC_0 \cup \{ 3 \}$.  
Then
$\RR{\restriction}_\CC = \{1,2,3\} \subseteq {\to_\CC^*}$ holds.
Because $\CC$ is a superset of $\CC_0$, the inclusion 
${\pcp{\RR}} \subseteq {\fromto_\CC^*}$ 
holds too.
According to \cref{cor:reduction}, the confluence problem of $\RR$ is
reduced to that of $\CC$.  
\item
Since $\CC$ only admits a trivial parallel critical pair, it is closed by
the empty system $\varnothing$.
Moreover,
the inclusion
\(
\CC{\restriction}_\varnothing = \varnothing \subseteq {\to_\varnothing^*}
\)
holds. Hence, by \cref{cor:reduction} the confluence of $\CC$ is
reduced to the confluence of the empty system $\varnothing$.
\item
The confluence of the empty system $\varnothing$ is trivial.
\end{enumerate}
Hence we conclude that $\RR$ is confluent.
Note that in the first step all subsystems $\CC'$ including $\CC_0$ or
$\{1,4,6\}$ satisfy the inclusion
${\pcp{\RR}} \subseteq {\fromto_{\CC'}^*}$ but
some of them (e.g., $\{1,4,6\}$) are non-confluent.  The additional
requirement $\RR{\restriction}_{\CC'} \subseteq {\to_{\CC'}^*}$ excludes
such subsystems.

\end{exa}

\cref{cor:reduction} can be automated as follows.  Suppose that we have
found a subsystem $\CC_0$ of a given left-linear TRS $\RR$ such that
${\pcp{\RR}} \subseteq {\fromto_{\CC_0}^*}$.  We extend $\CC_0$ to $\CC$ so
that
(i)~$\CC_0 \subseteq \CC \subsetneq \RR$ and 
(ii)~$\RR{\restriction}_\CC \subseteq {\to_\CC^{\leqslant k}}$
for a designated number $k \in \NN$.  This search problem can be reduced to
a SAT problem. Let $\SSS_k(\ell \to r)$ be the following set of
subsystems:
\[
\SSS_k(\ell \to r) = \{ \{ \seq{\beta} \} \mid
\text{$\ell \to_{\beta_1} \cdots \to_{\beta_n} r$ and
$n \leqslant k$} \}
\]
In our SAT encoding we use two kinds of propositional variables: 
$x_{\ell \to r}$ and $y_f$.  The former represents $\ell \to r \in \CC$,
and the latter represents $f \in \Fun(\CC)$.  With these variables
the search problem for $\CC$ is encoded as follows:
\[
\bigwedge_{\alpha \in \CC_0} x_\alpha
\;\land\;
\bigvee_{\alpha \in \RR} 
\lnot x_\alpha
\;\land\;
\bigwedge_{\alpha \in \RR} 
\biggl(\lnot x_\alpha \lor
\bigwedge_{f \in \Fun(\alpha)} y_f
\biggr)
\;\land\;
\bigwedge_{\alpha \in \RR \setminus \CC_0}
\biggl(
\bigl(
\bigvee_{\SS \in \SSS_k(\alpha)}
x_\SS
\bigr)
\;\lor\;
\bigl(
\lnot \bigwedge_{f \in \Fun(\ell)} y_f
\bigr)
\biggr)
\]
Here $x_\SS = x_{\beta_1} \land \cdots \land x_{\beta_n}$ for
$\SS = \{\seq{\beta}\}$.  It is easy to see that the first two clauses
encode condition (i) and the third clause characterizes $\Fun(\CC)$. The
last clause encodes condition (ii). 

\begin{exa}[Continued from~\cref{ex:reduction}]
Recall that ${\pcp{\RR}} \subseteq {\fromto_{\CC_0}^*}$ holds for
$\CC_0 = \{1,2\}$. Setting $k = 5$, we compute $\SSS_k(\alpha)$ for each
rule $\alpha \in \RR \setminus \CC_0 = \{3,4,5,6\}$:
\begin{align*}
\SSS_k(3) & = \{\{3\}\}
&
\SSS_k(4) & = \{\{2\},\{1,2,6\},\{2,3,6\}\}
&
\SSS_k(5) & = \{\{5\}\}
&
\SSS_k(6) & = \{\{6\}\}
\end{align*}
The SAT encoding explained above results in the following formula
\[
\begin{array}{@{}l@{~}c@{~}l@{~}c@{~}l@{~}c@{~}l@{}}
(x_1 \,\land\, x_2)  
& \land & (\lnot x_1 \,\lor \cdots \lor \lnot x_{6})
& \land & (\lnot x_1 \,\lor\, (y_{\m{0}} \,\land\, y_+)) 
\\[0.2em]
&&
& \land & (\lnot x_2 \,\lor\, (y_{\m{0}} \,\land\, y_\times))
\\[0.2em]
&&
& \land & (\lnot x_3 \,\lor\, (y_{\m{0}} \,\land\, y_+)) 
& \land & (x_3 \,\lor\, \lnot (y_{\m{0}} \,\land\, y_+))
\\[0.2em]
&&
& \land & (\lnot x_4 \,\lor\, (y_{\m{0}} \,\land\, y_{\m{s}} \,\land\, y_\times))
& \land & 
(X \;\lor\, \lnot (y_{\m{s}} \,\land\, y_{\m{0}} \,\land\, y_\times))
\\[0.2em]
&&
& \land & (\lnot x_5 \,\lor\, (y_{\m{s}} \,\land\, y_{+}))
& \land & (x_5 \,\lor\, \lnot (y_{\m{s}} \,\land\, y_+))
\\[0.2em]
&&
& \land & (\lnot x_6 \,\lor\, (y_{\m{s}} \,\land\, y_+ \,\land\, y_\times))
& \land & (x_6 \,\lor\, \lnot (y_{\m{s}} \,\land\, y_\times))
\end{array}
\]
with
\(
X = x_2 \lor (x_1 \land x_2 \land x_6) \lor (x_2 \land x_3 \land x_6) 
\).
The formula is satisfied if we assign true to $x_1$,
$x_2$, $x_3$, $y_{\m{0}}$, $y_+$, and $y_\times$, and false to the other
variables.  This assignment corresponds to $\CC = \{1,2,3 \}$.
Note that for this formula there is no other solution.
\end{exa}

\section{Experiments}
\label{sec:experiments}

In order to evaluate the presented approach we implemented 
a prototype confluence tool \OURTOOL which supports
the main three compositional confluence criteria
(Theorems~\ref{thm:SH22a}, \ref{thm:prlc}, and \ref{thm:pcpsc}) and their
original versions (Theorems~\ref{thm:R73}, \ref{thm:ZFM15}, and \ref{thm:HM11})
as well as the reduction method (\cref{cor:reduction}).\footnote{%
The tool and the experimental data are available at
\url{https://www.jaist.ac.jp/project/saigawa/}. These are also available
at \cite{SH23}.%
}
The problem set used in experiments consists of
$462$ left-linear TRSs taken from the confluence
problems database COPS~\cite{HNM18}.  Out of the $462$ TRSs, at least
$190$ are known to be non-confluent.
The tests were run on a PC with Intel Core i7-1065G7 CPU (1.30 GHz) and 16
GB memory of RAM using timeouts of $120$ seconds.  
\cref{tbl:experiments} summarizes the results.
The columns in the table stand for the following confluence criteria:
\begin{itemize}
\item
\criterion{O}:
Orthogonality (\cref{thm:R73}).
\item
\criterion{R}:
Rule labeling (\cref{thm:ZFM15}).
\item
\criterion{C}:
The criterion by critical pair systems (\cref{thm:HM11}).
\item
\criterion{OO}:
Successive application of \cref{thm:SH22a},
as illustrated in \cref{ex:SH22a}.
\item
\criterion{CC}:
Successive application of \cref{thm:pcpsc},
as illustrated in \cref{ex:CC}.
\item
\criterion{RC}:
\cref{thm:prlc}, where confluence of a subsystem $\CC$
is shown by \cref{thm:pcpsc} with the empty subsystem.
\item
\criterion{CR}:
\cref{thm:pcpsc}, where confluence of a subsystem $\CC$ is shown by
\cref{thm:prlc} with the empty subsystem.
\item
\criterion{rOO},
\criterion{rRC}, and
\criterion{rCR}:
The combination of the reduction method (\cref{cor:reduction}) with
\criterion{OO}, \criterion{RC}, and \criterion{CR}, respectively.
\item
\OURTOOL:
The combination of the reduction method with
\criterion{RC} and \criterion{CR}.
\end{itemize}
\begin{table}[t]
\centering
\caption{Experimental results on $462$ left-linear TRSs.}
\setlength{\tabcolsep}{0.3em}
\begin{tabular}{@{}lrrrrrrrrrrrrrrr@{}}
\toprule
& \criterion{O}
& \criterion{R}
& \criterion{C}
& \criterion{OO}
& \criterion{CC}
& \criterion{RC}
& \criterion{CR}
& \criterion{rOO}
& \criterion{rCC}
& \criterion{rRC}
& \criterion{rCR}
& \OURTOOL
& \ACP
& \CoLL
& \CSI
\\[0.25em]
proved
& $20$
& $135$
& $59$
& $88$
& $111$
& $152$
& $143$
& $91$
& $114$
& $153$
& $146$
& $154$
& $197$
& $194$
& $216$
\\[0.25em]
timeouts
& $0$
& $20$
& $10$
& $13$
& $68$
& $88$
& $42$
& $10$
& $59$
& $81$
& $49$
& $79$
& $51$
& $156$
& $4$
\\
\bottomrule
\end{tabular}
\label{tbl:experiments}
\end{table}
Note that in any combination the reduction method is successively
applied, as in \cref{ex:reduction}.
For the sake of comparison the results of the confluence tools \ACP
version 0.72~\cite{AYT09}, 
\CoLLSaigawa version 1.7~\cite{SH15},
and \CSI version 1.2.7~\cite{ZFM11a}
are also included in the table,
where \CoLLSaigawa is abbreviated to \CoLL.

We briefly explain how these criteria are automated in our tool.  Suitable
subsystems for the compositional criteria are searched by enumeration.
Relative termination, required by Theorems~\ref{thm:HM11} and
\ref{thm:pcpsc}, is checked by employing the termination tool \NaTT version
2.3~\cite{YKS14}.
Joinability of each (parallel) critical pair $(t, u)$ is tested by the relation:
\[
t
\xrightarrow{\leqslant 5}
\cdot \xleftarrow{\leqslant 5}
u
\]
For rule labeling, the decreasingness
of each parallel critical peak $t \pfromB{P}{{\phi,k}} s \rto_{\psi,m} u$
is checked by existence of a conversion of the form
\[
t
\xrightarrow[\curlyvee k]{}^{i_1}
\cdot \pto[\psi,m]{}^{i_2}
\cdot \xrightarrow[\curlyvee km]{}^{i_3} 
\cdot \mathrel{^{j_3}{\xleftarrow[\curlyvee km]{}}}
v \mathrel{^{j_2}{\pfrom[\phi,k]{P'}}}
\cdot \mathrel{^{j_1}{\xleftarrow[\curlyvee m]{}}}
u
\]
such that $i_1,i_3,j_1,j_3 \in \NN$,
$i_2, j_2 \in \{0,1\}$, $i_1+i_2+i_3 \leqslant 5$,
$j_1+j_2+j_3 \leqslant 5$, and the inclusion
$\Var(v,P') \subseteq \Var(s,P)$ holds.
This is encoded into linear arithmetic constraints~\cite{HM11},
and they are solved by the SMT solver \textsf{Z3} version~4.8.11~\cite{dMB08}.
Finally, automation of the reduction method~(\cref{cor:reduction}) is
done by SAT solving as presented in \secref{sec:reduction}.
To organize it as a lightweight method, we test only
one combination of join sequences.
The SMT solver \textsf{Z3} is used for solving SAT problems for the method.

As theoretically expected, in the experiments \criterion{O} is subsumed by
both \criterion{R} and \criterion{C}.
The results of \criterion{OO} and \criterion{CC} clearly show
effectiveness of successive application,\footnote{Successive application
of rule labeling is same as \criterion{R}, see~\secref{sec:prlc}.}
while \criterion{OO} is subsumed by \criterion{R} and \criterion{CC}.
Concerning the combinations of \criterion{R} and \criterion{C},
the union of \criterion{R} and
\criterion{C} amounts to $145$, and the union of \criterion{RC} and
\criterion{CR} amounts to $153$.  
Due to timeouts, \criterion{CR} misses three systems of which \criterion{R}
can prove confluence.
Differences between \criterion{RC} and \criterion{CR} are summarized as
follows:
\begin{itemize}
\item
Three systems are proved by \criterion{RC} but not by 
\criterion{CR}.\footnote{%
The three systems are COPS numbers \texttt{994}, \texttt{1001}, and
\texttt{1029}.}
One of them is the next TRS (COPS number \texttt{994}).  \criterion{RC}
uses the subsystem $\{2,4,6\}$ whose confluence is shown by
$\criterion{C}$.
\begin{align*}
1\colon\;\m{a}(\m{b}(x)) & \to \m{a}(\m{c}(x))
&
3\colon\;\m{c}(\m{b}(x)) & \to \m{a}(\m{b}(x))
&
5\colon\;\m{c}(\m{c}(x)) & \to \m{c}(\m{c}(x))
\\
2\colon\;\m{a}(\m{c}(x)) & \to \m{c}(\m{b}(x))
&
4\colon\;\m{b}(\m{c}(x)) & \to \m{a}(\m{c}(x))
&
6\colon\;\m{c}(\m{c}(x)) & \to \m{c}(\m{b}(x))
\\
&&&&
7\colon\; \m{c}(\m{b}(x)) & \to \m{a}(\m{b}(x))
\end{align*}

\item
The only TRS where \criterion{CR} is advantageous to \criterion{RC} is
COPS number \texttt{132}:
\begin{align*}
1\colon\;& -(x + y) \to (-x) + (-y)
&
3\colon\;& -(-x) \to x
\\
2\colon\;& (x + y) + z \to x + (y + z)
&
4\colon\;& x + y \to y + x
\end{align*}
Its confluence is shown by the composition of \cref{thm:pcpsc} and
\cref{thm:ZFM15}, the latter of which proves the subsystem $\{1,2,4\}$
confluent.
\end{itemize}

The columns \criterion{rOO}, \criterion{rRC}, and \criterion{rCR} in
\cref{tbl:experiments} show that the use of the reduction method
(\cref{cor:reduction}) basically improves the power and efficiency of the
underlying compositional confluence criteria. 
Our observations on the results are as follows:

\begin{itemize}
\item
For $106$ systems the reduction method removed at least one rule.  Out of
these $106$ systems, $55$ were reduced to the empty system.
While the use of the reduction method as a preprocessor improves the
efficiency in most of cases, there are a few exceptions (e.g., COPS
number~\texttt{689}).  The bottleneck is the reachability test by
${\to_{\CC^{\leqslant k}}}$.

\item
The confluence proving powers of \criterion{rOO} and \criterion{OO} are
theoretically equivalent, because the reduction method as a compositional
confluence criterion is an instance of \criterion{OO}.
In the experiments
\criterion{rOO} handled three more systems.  This is due to the
improvement of efficiency.  
The same argument holds for the relation between \criterion{rRC} and
\criterion{RC}.

\item
The reduction method and \criterion{C} are incomparable with each
other.  Hence \criterion{rCR} is more powerful than \criterion{CR}.
In the experiments, \criterion{rCR} subsumes
\criterion{CR} and it includes three more systems.  As a drawback,
\criterion{rCR} has seven more timeouts.

\item
Among \criterion{rOO}, \criterion{rRC}, and \criterion{rCR}, the second
criterion is the most powerful.  
As in the cases of their underlying criteria, 
the results of \criterion{rOO} are 
subsumed by both \criterion{rRC} and \criterion{rCR}, and
COPS number~\texttt{132} is the only problem where \criterion{rCR}
outperforms \criterion{rRC}.
\end{itemize}
\OURTOOL is the union of \criterion{rRC} and \criterion{rCR}.  Although
the number is behind those of the state-of-art tools, 
the number contains a system (\texttt{COPS number 1001}) that is
handled only by \OURTOOL (due to \criterion{RC}).

Finally, we discuss how the results of the other confluence tools
change if the reduction method is used as their preprocessor:
\begin{itemize}
\item
\ACP gains three proofs but also misses three proofs based on
\emph{reduction-preserving completion}~\cite[Definition 4.7]{AT12}.
While this technique uses a subsystem $\PP$ with
${\to_{\PP}} \subseteq {\fromBT{\PP}{*}}$, in the three proofs
the reduction method virtually shrinks $\PP$ to $\varnothing$.
Although \ACP does not use reduction-preserving completion with 
$\PP = \varnothing$, if \ACP does, the proofs are recovered.
\item
\CoLLSaigawa increases the number to $201$, gaining 7 proofs.
\item
\CSI
gains no proofs. Since the tool supports
rule labeling
(\criterion{R}), it can partly cover the class of problems that the
reduction method is effective.  Moreover, the tool employs
\emph{redundant rule elimination}~\cite{NFM15,SH15}, which plays a
similar role to the reduction method. In the next section we will
discuss this elimination method as related work.
\end{itemize}

\section{Conclusion}
\label{sec:related-work}

We studied how compositional confluence criteria can be derived from
confluence criteria based on the decreasing diagrams technique, and
showed that Toyama's almost parallel closedness theorem is subsumed by his
earlier theorem based on parallel critical pairs.  We conclude
the paper by mentioning related work and future work.

\paragraph*{Simultaneous critical pairs.}
van~Oostrom~\cite{vO97} showed the almost development closedness theorem:
A left-linear TRS is confluent if the inclusions
\begin{align*}
{\xleftarrow{\epsilon}}{\rtimes}{\xrightarrow{\epsilon}}
& \subseteq {\xrightarrow{*} \cdot \mfrom{}}
&
{\xleftarrow{>\epsilon}}{\rtimes}{\xrightarrow{\epsilon}}
& \subseteq {\mto{}}
\end{align*}
hold, where $\mto{}$ stands for the
multi-step~\cite[Section~4.7.2]{TeReSe}.  Okui~\cite{O98} showed the
simultaneous closedness theorem: A left-linear TRS is confluent if the
inclusion
\[
{{\mfrom{}}{\rtimes}{\xrightarrow{}}}
\subseteq
{\xrightarrow{*} \cdot \mfrom{}}
\]
holds, where ${\mfrom{}}{\rtimes}{\xrightarrow{}}$ stands for the set of
simultaneous critical pairs~\cite{O98}.  As this inclusion characterizes
the inclusion
\(
{\mfrom{} \cdot \to} \subseteq {\to^* \cdot \mfrom{}}
\),
simultaneous closedness subsumes almost development closedness.
The main result in \secref{sec:pc} is considered as a
counterpart of this relationship in the setting of parallel critical pairs.

\paragraph*{Critical-pair-closing systems.}
A TRS $\CC$ is called \emph{critical-pair-closing} for a TRS $\RR$ if
\[
{\cp{\RR}} \subseteq {\fromto_\CC^*}
\]
holds.  It is known that a left-linear TRS $\RR$ is confluent if $\CCd/\RR$
is terminating for some confluent critical-pair-closing TRS $\CC$ with
$\CC \subseteq \RR$, see~\cite{HNvOO19}.  Here $\CCd$ denotes the set of
all duplicating rules in $\CC$.
\cref{thm:SH22a} imposes closedness by
$\CC$ on all \emph{parallel} critical pairs in return to removal of the
relative termination condition.
Investigating whether the latter subsumes the former is our future work.

\paragraph*{Rule labeling.}
Dowek et al.~\cite[Theorem~38]{DFJL22} extended rule labeling based on parallel
critical pairs~\cite{ZFM15} to take higher-order rewrite systems.  If we
restrict their method to a first-order setting, it corresponds to 
the case that a complete TRS is employed for $\CC$ in \cref{thm:prlc},
and thus, it can be seen as a generalization of 
\cref{cor:T17} by Toyama~\cite{T17}.

\paragraph*{Critical pair systems.}
The second author and Middeldorp~\cite{HM13} generalized
\cref{thm:HM11} by replacing $\CPS(\RR)$ by the following subset:
\[
\CPS'(\RR) = \{ s \to t, s \to u \mid
\text{$t \fromB{\RR} s \rto_\RR u$ is a 
critical peak but
not $t \mto{}_\RR u$} \}
\]
This variant subsumes van Oostrom's development closedness
theorem~\cite{vO97}.  We anticipate that in a similar way our compositional
variant (\cref{thm:pcpsc}) is extended to subsume the parallel closedness
theorem based on parallel critical pairs (\cref{thm:T81}).

\paragraph*{Redundant rules.}
\emph{Redundant rule elimination} by Nagele 
et al.~\cite[Corollary 9]{NFM15} can be regarded as a compositional
confluence criterion.  It states that a TRS $\RR$ is confluent if there
exists a confluent subsystem $\CC$ such that 
$\RR \setminus \CC \subseteq {\fromto_\CC^*}$ holds.
When $\RR$ is left-linear, the criterion is subsumed by \cref{thm:SH22a}.
This is verified by the following trivial fact:
\begin{fact}
Let $\CC$ be a subsystem of a TRS $\RR$.
If $\RR \setminus \CC \subseteq {\fromto_\CC^*}$ then
${\pcp{\RR}} \subseteq {\fromto_\CC^*}$.
\end{fact}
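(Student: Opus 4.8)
The plan is to reduce the statement to the elementary inclusion ${\to_\RR} \subseteq {\fromto_\CC^*}$ and then read off the claim directly from the definition of parallel critical pairs.

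First I would note that $\fromto_\CC^*$, being the conversion relation of the TRS $\CC$, is an equivalence relation that is moreover closed under contexts and substitutions. Hence, from the hypothesis $\RR \setminus \CC \subseteq {\fromto_\CC^*}$, which unfolds to $\ell \fromto_\CC^* r$ for every rule $\ell \to r \in \RR \setminus \CC$, we obtain ${\to_{\RR \setminus \CC}} \subseteq {\fromto_\CC^*}$ by instantiating with an arbitrary substitution and placing the result in an arbitrary context. Combined with the trivial inclusion ${\to_\CC} \subseteq {\fromto_\CC^*}$, this gives ${\to_\RR} \subseteq {\fromto_\CC^*}$, and since $\fromto_\CC^*$ is reflexive and transitive we further get ${\pto{}_\RR} \subseteq {\to_\RR^*} \subseteq {\fromto_\CC^*}$.

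Next I would unfold the definition of a parallel critical pair: if $t \pcp{\RR} u$ then there is a parallel critical peak $t \pfromB{P}{\RR} s \rto_\RR u$ for some term $s$ and set $P$ of parallel positions. From $s \pto{P}_\RR t$ and the previous step we get $s \fromto_\CC^* t$, hence $t \fromto_\CC^* s$; from $s \rto_\RR u$, which is in particular an $\RR$-step, we get $s \fromto_\CC^* u$. Chaining these two conversions yields $t \fromto_\CC^* u$, which is exactly ${\pcp{\RR}} \subseteq {\fromto_\CC^*}$.

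No genuine obstacle arises here — this is why the statement is phrased as a mere fact. The only point worth spelling out is the closure of $\fromto_\CC^*$ under contexts and substitutions, which is the standard reason that a conversion between the sides of a rule lifts to a conversion between arbitrary reducts.
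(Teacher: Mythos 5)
Your proof is correct and is precisely the standard argument the paper has in mind when it labels this a ``trivial fact'' without proof: the hypothesis lifts to ${\to_\RR} \subseteq {\fromto_\CC^*}$ by closure of conversion under contexts and substitutions, and both legs of any parallel critical peak are then absorbed into $\fromto_\CC^*$. Nothing is missing.
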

The converse does not hold in general.  To see it, consider the one-rule
TRS $\RR$ consisting of $\m{a} \to \m{b}$. The empty TRS $\CC = \varnothing$
satisfies ${\pcp{\RR}} \subseteq {\fromto_\CC^*}$ but 
$\RR \setminus \CC \subseteq {\fromto_\CC^*}$ does not hold
as $\m{a} \mathrel{\centernot{\fromto}_\CC^*} \m{b}$.
There is another form of redundant rule elimination
(\cite[Corollary 6]{NFM15} and \cite{SH15}). It states that a TRS $\RR$ is
confluent if and only if $\RR \subseteq {\to_\CC^*}$ for some confluent
$\CC \subseteq \RR$.
This criterion is regarded as a reduction method for confluence
analysis. In fact, it is an instance of \cref{cor:reduction} for
left-linear TRSs, since $\RR{\restriction}_\CC \subseteq \RR$ and 
${\pcp{\RR}} \subseteq {\fromto_\CC^*}$ hold.  We want to stress that a
reduction method is obtained by any combination of a compositional
confluence criterion with \cref{thm:converse}.

\paragraph*{Modularity and automation.}
Last but not least, we discuss 
relations between modularity and
reduction methods.  Organizing compositional criteria as a reduction method
is a key for effective automation.  Therefore, developing a generalization
of \cref{thm:converse} is our primary future work.
Ohlebusch~\cite{O02} showed that if the union of \emph{composable} TRSs
$\RR$ and $\CC$ is confluent then both $\RR$ and $\CC$ are confluent.  When
$\CC$ is a subsystem of $\RR$, this result is rephrased as follows: If
$\DD_{\RR \setminus \CC} \cap \Fun(\CC) = \varnothing$ then confluence of
$\RR$ implies that of $\CC$. Therefore, this can be used as an alternative of
\cref{thm:converse}.  
Unfortunately, $\RR{\restriction}_\CC \subseteq \CC$ follows
from $\DD_{\RR \setminus \CC} \cap \Fun(\CC) = \varnothing$. So
composability as a reduction method is still in the realm of our
criterion (\cref{thm:converse}).  Similarly, we can argue that
the theorem also subsumes the persistency result~\cite{AT97} as a
base criterion for reduction methods.  Yet, we anticipate that this
work benefits from studies of more advanced modularity results such
as \emph{layer systems}~\cite{FMZvO15}.
Another future work is to develop an effective confluence analysis
based on compositional confluence criteria and reduction methods.  The use
of the \emph{confluence framework}~\cite{GVL22} which exploits modularity
results would be worth investigating.

\section*{Acknowledgment}
We are grateful to Jean-Pierre Jouannaud, Vincent van Oostrom,
and Yoshihito Toyama for their valuable comments on
preliminary results of this work.  
We are also grateful to Ren\'e Thiemann for spotting and correcting a
mistake in the proof of \cref{thm:prlc} in the preliminary version of this
paper~\cite{SH22}.
Last but not least, we thank the reviewers of 
this article and its preliminary version~\cite{SH22}
for their thorough reading and suggestions, which greatly
helped to improve the presentation.

\bibliographystyle{alphaurl}
\bibliography{references}

\begin{thebibliography}{HNvOO19}

\bibitem[AT97]{AT97}
T.~Aoto and Y.~Toyama.
\newblock Persistency of confluence.
\newblock {\em Journal of Universal Computer Science}, 3(11):1134--1147, 1997.
\newblock \href {https://doi.org/10.3217/jucs-003-11-1134}
  {\path{doi:10.3217/jucs-003-11-1134}}.

\bibitem[AT12]{AT12}
T.~Aoto and Y.~Toyama.
\newblock A reduction-preserving completion for proving confluence of
  non-terminating term rewriting systems.
\newblock {\em Logical Methods in Computer Science}, 8, 2012.
\newblock \href {https://doi.org/10.2168/LMCS-8(1:31)2012}
  {\path{doi:10.2168/LMCS-8(1:31)2012}}.

\bibitem[AYT09]{AYT09}
T.~Aoto, J.~Yoshida, and Y.~Toyama.
\newblock Proving confluence of term rewriting systems automatically.
\newblock In {\em Proc.\ 20th International Conference on Rewriting Techniques
  and Applications}, volume 5595 of {\em LNCS}, pages 93--102, 2009.
\newblock \href {https://doi.org/10.1007/978-3-642-02348-4_7}
  {\path{doi:10.1007/978-3-642-02348-4_7}}.

\bibitem[BN98]{BN98}
F.~Baader and T.~Nipkow.
\newblock {\em Term Rewriting and All That}.
\newblock Cambridge University Press, 1998.
\newblock \href {https://doi.org/10.1017/CBO9781139172752}
  {\path{doi:10.1017/CBO9781139172752}}.

\bibitem[Der05]{D05}
N.~Dershowitz.
\newblock Open.\ {C}losed.\ {O}pen.
\newblock In {\em Proc.\ 16th International Conference on Rewriting Techniques
  and Applications}, volume 3467 of {\em LNCS}, pages 276--393, 2005.
\newblock \href {https://doi.org/10.1007/978-3-540-32033-3_28}
  {\path{doi:10.1007/978-3-540-32033-3_28}}.

\bibitem[DFJL22]{DFJL22}
G.~Dowek, G.~F\'erey, J.-P. Jouannaud, and J.~Liu.
\newblock Confluence of left-linear higher-order rewrite theories by checking
  their nested critical pairs.
\newblock {\em Mathematical Structures in Computer Science}, 32(7):898--933,
  2022.
\newblock \href {https://doi.org/10.1017/S0960129522000044}
  {\path{doi:10.1017/S0960129522000044}}.

\bibitem[dMB08]{dMB08}
L.~de~Moura and N.~Bj{\o}rner.
\newblock {Z3}: An efficient {SMT} solver.
\newblock In {\em Proc.\ 12th International Conference on Tools and Algorithms
  for the Construction and Analysis of Systems}, volume 4963 of {\em LNCS},
  pages 337--340, 2008.
\newblock The website of Z3 is: \url{https://github.com/Z3Prover/z3}.
\newblock \href {https://doi.org/10.1007/978-3-540-78800-3_24}
  {\path{doi:10.1007/978-3-540-78800-3_24}}.

\bibitem[Ede85]{E85}
E.~Eder.
\newblock Properties of substitutions and unifications.
\newblock {\em Journal of Symbolic Computation}, 1(1):31--46, 1985.
\newblock \href {https://doi.org/10.1016/S0747-7171(85)80027-4}
  {\path{doi:10.1016/S0747-7171(85)80027-4}}.

\bibitem[FMZvO15]{FMZvO15}
B.~Felgenhauer, A.~Middeldorp, H.~Zankl, and V.~van Oostrom.
\newblock Layer systems for proving confluence.
\newblock {\em ACM Trans. Comput. Logic}, 16(2):1--32, 2015.
\newblock \href {https://doi.org/10.1145/2710017} {\path{doi:10.1145/2710017}}.

\bibitem[FvO13]{FvO13}
B.~Felgenhauer and V.~van Oostrom.
\newblock Proof orders for decreasing diagrams.
\newblock In {\em Proc.\ 24th International Conference on Rewriting Techniques
  and Applications}, volume~21 of {\em LIPIcs}, pages 174--189, 2013.
\newblock \href {https://doi.org/10.4230/LIPIcs.RTA.2013.174}
  {\path{doi:10.4230/LIPIcs.RTA.2013.174}}.

\bibitem[Gra96]{G96}
B.~Gramlich.
\newblock Confluence without termination via parallel critical pairs.
\newblock In {\em Proc.\ 21st International Colloquium on Trees in Algebra and
  Programming}, volume 1059 of {\em LNCS}, pages 211--225, 1996.
\newblock \href {https://doi.org/10.1007/3-540-61064-2_39}
  {\path{doi:10.1007/3-540-61064-2_39}}.

\bibitem[GVL22]{GVL22}
R.~Guti\'errez, M.~V\'itores, and S.~Lucas.
\newblock Confluence framework: Proving confluence with {CONF}ident.
\newblock In {\em Proc.\ 32nd International Symposium on Logic-Based Program
  Synthesis and Transformation}, volume 13474 of {\em LNCS}, pages 24--43,
  2022.
\newblock \href {https://doi.org/10.1007/978-3-031-16767-6_2}
  {\path{doi:10.1007/978-3-031-16767-6_2}}.

\bibitem[HM11]{HM11}
N.~Hirokawa and A.~Middeldorp.
\newblock Decreasing diagrams and relative termination.
\newblock {\em Journal of Automated Reasoning}, 47:481--501, 2011.
\newblock \href {https://doi.org/10.1007/s10817-011-9238-x}
  {\path{doi:10.1007/s10817-011-9238-x}}.

\bibitem[HM13]{HM13}
N.~Hirokawa and A.~Middeldorp.
\newblock Commutation via relative termination.
\newblock In {\em Proc.\ 2nd International Workshop on Confluence}, pages
  29--34, 2013.

\bibitem[HNM18]{HNM18}
N.~Hirokawa, J.~Nagele, and A.~Middeldorp.
\newblock {C}ops and {C}o{C}o{W}eb: Infrastructure for confluence tools.
\newblock In {\em Proc.\ 9th International Joint Conference on Automated
  Reasoning}, volume 10900 of {\em LNCS (LNAI)}, pages 346--353, 2018.
\newblock The website of COPS is: \url{https://cops.uibk.ac.at/}.
\newblock \href {https://doi.org/10.1007/978-3-319-94205-6_23}
  {\path{doi:10.1007/978-3-319-94205-6_23}}.

\bibitem[HNvOO19]{HNvOO19}
N.~Hirokawa, J.~Nagele, V.~van Oostrom, and M.~Oyamaguchi.
\newblock Confluence by critical pair analysis revisited.
\newblock In {\em Proc.\ 27th International Conference on Automated Deduction},
  volume 11716 of {\em LNCS}, pages 319--336, 2019.
\newblock \href {https://doi.org/10.1007/978-3-030-29436-6_19}
  {\path{doi:10.1007/978-3-030-29436-6_19}}.

\bibitem[Hue80]{H80}
G.~Huet.
\newblock Confluent reductions: Abstract properties and applications to term
  rewriting systems.
\newblock {\em Journal of the ACM}, 27:797--821, 1980.
\newblock \href {https://doi.org/10.1145/322217.322230}
  {\path{doi:10.1145/322217.322230}}.

\bibitem[JL12]{JL12}
J.-P. Jouannaud and J.~Liu.
\newblock From diagrammatic confluence to modularity.
\newblock {\em Theoretical Computer Science}, 464:20--34, 2012.
\newblock \href {https://doi.org/10.1016/j.tcs.2012.08.030}
  {\path{doi:10.1016/j.tcs.2012.08.030}}.

\bibitem[Kah95]{K95}
S.~Kahrs.
\newblock Confluence of curried term-rewriting systems.
\newblock {\em Journal of Symbolic Computation}, 19:601--623, 1995.
\newblock \href {https://doi.org/10.1006/jsco.1995.1035}
  {\path{doi:10.1006/jsco.1995.1035}}.

\bibitem[KB70]{KB70}
D.E. Knuth and P.B. Bendix.
\newblock Simple word problems in universal algebras.
\newblock In J.~Leech, editor, {\em Computational Problems in Abstract
  Algebra}, pages 263--297. Pergamon Press, 1970.
\newblock \href {https://doi.org/10.1016/B978-0-08-012975-4.50028-X}
  {\path{doi:10.1016/B978-0-08-012975-4.50028-X}}.

\bibitem[KL80]{KL80}
S.~Kamin and J.J. L{\'e}vy.
\newblock Two generalizations of the recursive path ordering.
\newblock Technical report, University of Illinois, 1980.
\newblock Unpublished manuscript.

\bibitem[LJ14]{LJ14}
J.~Liu and J.-P. Jouannaud.
\newblock Confluence: The unifying, expressive power of locality.
\newblock In {\em Specification, Algebra, and Software}, volume 8375 of {\em
  LNCS}, pages 337--358, 2014.
\newblock \href {https://doi.org/10.1007/978-3-642-54624-2_17}
  {\path{doi:10.1007/978-3-642-54624-2_17}}.

\bibitem[New42]{N42}
M.~H.~A. Newman.
\newblock On theories with a combinatorial definition of "equivalence".
\newblock {\em Annals of Mathematics}, 43(2):223--243, 1942.
\newblock \href {https://doi.org/10.2307/1968867} {\path{doi:10.2307/1968867}}.

\bibitem[NFM15]{NFM15}
J.~Nagele, B.~Felgenhauer, and A.~Middeldorp.
\newblock Improving automatic confluence analysis of rewrite systems by
  redundant rules.
\newblock In {\em Proc.\ 26th International Conference on Rewriting Techniques
  and Applications}, volume~36 of {\em LIPIcs}, pages 257--268, 2015.
\newblock \href {https://doi.org/10.4230/LIPIcs.RTA.2015.257}
  {\path{doi:10.4230/LIPIcs.RTA.2015.257}}.

\bibitem[Ohl02]{O02}
E.~Ohlebusch.
\newblock {\em Advanced Topics in Term Rewriting}.
\newblock Springer, 2002.
\newblock \href {https://doi.org/10.1007/978-1-4757-3661-8}
  {\path{doi:10.1007/978-1-4757-3661-8}}.

\bibitem[Oku98]{O98}
S.~Okui.
\newblock Simultaneous critical pairs and {C}hurch--{R}osser property.
\newblock In {\em Proc.\ 9th International Conference on Rewriting Techniques
  and Applications}, volume 1379 of {\em LNCS}, pages 2--16, 1998.
\newblock \href {https://doi.org/10.1007/BFb0052357}
  {\path{doi:10.1007/BFb0052357}}.

\bibitem[OO97]{OO97}
M.~Oyamaguchi and Y.~Ohta.
\newblock A new parallel closed condition for {C}hurch--{R}osser of left-linear
  term rewriting systems.
\newblock In {\em Proc.\ 8th International Conference on Rewriting Techniques
  and Applications}, volume 1232 of {\em LNCS}, pages 187--201, 1997.
\newblock \href {https://doi.org/10.1007/3-540-62950-5_70}
  {\path{doi:10.1007/3-540-62950-5_70}}.

\bibitem[OO03]{OO03}
M.~Oyamaguchi and Y.~Ohta.
\newblock On the {C}hurch--{R}osser property of left-linear term rewriting
  systems.
\newblock {\em IEICE Transactions on Information and Systems},
  E86-D(1):131--135, 2003.

\bibitem[Ros73]{R73}
B.~Rosen.
\newblock Tree-manipulating systems and {C}hurch--{R}osser theorems.
\newblock {\em Journal of the ACM}, pages 160--187, 1973.
\newblock \href {https://doi.org/10.1145/321738.321750}
  {\path{doi:10.1145/321738.321750}}.

\bibitem[SH15]{SH15}
K.~Shintani and N.~Hirokawa.
\newblock {CoLL}: A confluence tool for left-linear term rewrite systems.
\newblock In {\em Proc.\ 25th International Conference on Automated Deduction},
  volume 9195 of {\em LNCS (LNAI)}, pages 127--136, 2015.
\newblock \href {https://doi.org/10.1007/978-3-319-21401-6_8}
  {\path{doi:10.1007/978-3-319-21401-6_8}}.

\bibitem[SH22]{SH22}
K.~Shintani and N.~Hirokawa.
\newblock Compositional confluence criteria.
\newblock In {\em Proc.\ 7th International Conference on Formal Structures for
  Computation and Deduction}, volume 228 of {\em LIPIcs}, pages 28:1--28:19,
  2022.
\newblock \href {https://doi.org/10.4230/LIPIcs.FSCD.2022.28}
  {\path{doi:10.4230/LIPIcs.FSCD.2022.28}}.

\bibitem[SH23]{SH23}
K.~Shintani and N.~Hirokawa.
\newblock Experimental data for compositional confluence criteria, 2023.
\newblock \href {https://doi.org/10.5281/zenodo.8385068}
  {\path{doi:10.5281/zenodo.8385068}}.

\bibitem[Tak93]{T93}
M.~Takahashi.
\newblock $\lambda$-calculi with conditional rules.
\newblock In {\em Proc.\ International Conference on Typed Lambda Calculi and
  Applications}, volume 664 of {\em LNCS}, pages 406--417, 1993.
\newblock \href {https://doi.org/10.1007/BFb0037121}
  {\path{doi:10.1007/BFb0037121}}.

\bibitem[Ter03]{TeReSe}
Terese.
\newblock {\em {T}erm {R}ewriting {S}ystems}.
\newblock Cambridge University Press, 2003.

\bibitem[Toy81]{T81}
Y.~Toyama.
\newblock On the {C}hurch--{R}osser property of term rewriting systems.
\newblock In {\em NTT ECL Technical Report}, volume No.~17672. NTT, 1981.
\newblock Japanese.

\bibitem[Toy87]{T87}
Y.~Toyama.
\newblock On the {C}hurch--{R}osser property for the direct sum of term
  rewriting systems.
\newblock {\em Journal of the ACM}, 34(1):128--143, 1987.
\newblock \href {https://doi.org/10.1145/7531.7534}
  {\path{doi:10.1145/7531.7534}}.

\bibitem[Toy88]{T88}
Y.~Toyama.
\newblock Commutativity of term rewriting systems.
\newblock In {\em Programming of Future Generation Computers II}, pages
  393--407. North-Holland, 1988.

\bibitem[Toy17]{T17}
Y.~Toyama.
\newblock Confluence criteria based on parallel critical pair closing, March
  2017.
\newblock Presented at the 46th TRS Meeting:
  \url{https://www.trs.cm.is.nagoya-u.ac.jp/event/46thTRSmeeting/}.

\bibitem[vO94]{vO94}
V.~van Oostrom.
\newblock {\em Confluence for Abstract and Higher-Order Rewriting}.
\newblock PhD thesis, Vrije Universiteit, Amsterdam, 1994.

\bibitem[vO97]{vO97}
V.~van Oostrom.
\newblock Developing developments.
\newblock {\em Theoretical Computer Science}, 175(1):159--181, 1997.
\newblock \href {https://doi.org/10.1016/S0304-3975(96)00173-9}
  {\path{doi:10.1016/S0304-3975(96)00173-9}}.

\bibitem[vO08]{vO08}
V.~van Oostrom.
\newblock Confluence by decreasing diagrams, converted.
\newblock In {\em Proc.\ 19th International Conference on Rewriting Techniques
  and Applications}, volume 5117 of {\em LNCS}, pages 306--320, 2008.
\newblock \href {https://doi.org/10.1007/978-3-540-70590-1_21}
  {\path{doi:10.1007/978-3-540-70590-1_21}}.

\bibitem[YKS14]{YKS14}
A.~Yamada, K.~Kusakari, and T.~Sakabe.
\newblock Nagoya termination tool.
\newblock In {\em Proc.\ 25th International Conference on Rewriting Techniques
  and Applications}, volume 8560 of {\em LNCS}, pages 446--475, 2014.
\newblock The website of NaTT is:
  \url{https://www.trs.cm.is.nagoya-u.ac.jp/NaTT/}.
\newblock \href {https://doi.org/10.1007/978-3-319-08918-8_32}
  {\path{doi:10.1007/978-3-319-08918-8_32}}.

\bibitem[ZFM11]{ZFM11a}
H.~Zankl, B.~Felgenhauer, and A.~Middeldorp.
\newblock {CSI} -- a confluence tool.
\newblock In {\em Proc.\ 23th International Conference on Automated Deduction},
  volume 6803 of {\em LNCS (LNAI)}, pages 499--505, 2011.
\newblock \href {https://doi.org/10.1007/978-3-642-22438-6_38}
  {\path{doi:10.1007/978-3-642-22438-6_38}}.

\bibitem[ZFM15]{ZFM15}
H.~Zankl, B.~Felgenhauer, and A.~Middeldorp.
\newblock Labelings for decreasing diagrams.
\newblock {\em Journal of Automated Reasoning}, 54(2):101--133, 2015.
\newblock \href {https://doi.org/10.1007/s10817-014-9316-y}
  {\path{doi:10.1007/s10817-014-9316-y}}.

\end{thebibliography}
\end{document}